\newcommand{\be}{\begin{equation}}
\newcommand{\ee}{\end{equation}}
\newtheorem{conj}{Conjecture}
\newtheorem{prop}{Proposition}
\newtheorem{lem}{Lemma}
\newtheorem{cor}{Corollary}
\numberwithin{equation}{section}
\begin{document}
\begin{titlepage}
{}~ \hfill\vbox{ \hbox{} }\break

\rightline{USTC-ICTS-15-02} 
\rightline{BONN-TH-2015-01}

\vskip 3 cm

\centerline{
\Large \bf  
Topological String on elliptic CY 3-folds and the ring of Jacobi forms}   
\vskip 0.5 cm

\renewcommand{\thefootnote}{\fnsymbol{footnote}}
\vskip 30pt \centerline{ {\large \rm Min-xin Huang
\footnote{minxin@ustc.edu.cn}, Sheldon Katz\footnote{katz@math.uiuc.edu} and Albrecht Klemm
\footnote{aklemm@th.physik.uni-bonn.de}  } } \vskip .5cm \vskip 30pt

\begin{center}
{\em $^*$ Interdisciplinary Center for Theoretical Study,  \\ \vskip 0.2 cm  University of Science and Technology of China,  Hefei, Anhui 230026, China} \\ [5 mm]
{\em $^\dagger$   Department of Mathematics  \\ \vskip 0.2 cm 
University of Illinois at Urbana-Champaign, 1409 W.\ Green St., Urbana, IL 6180   } \\ [5 mm]
{\em $^\ddagger$ Bethe Center for Theoretical Physics (BCTP),   \\ \vskip 0.2 cm 
Physikalisches Institut, Universit\"at Bonn,  53115 Bonn, Germany}
\end{center}

\setcounter{footnote}{0}
\renewcommand{\thefootnote}{\arabic{footnote}}
\vskip 60pt
\begin{abstract}
We give evidence  that the all genus amplitudes of topological string 
theory on compact elliptically fibered Calabi-Yau manifolds can be written 
in terms of meromorphic  Jacobi forms whose weight grows linearly and whose index  
grows quadratically with the base degree. The denominators of these forms 
have a simple universal form with the property that the poles of the meromorphic 
form  lie only at torsion points. The modular parameter corresponds to the 
fibre class while the r\^ole of the string coupling is played by the elliptic 
parameter. This leads to very strong all genus results on these geometries, 
which are checked against results from curve counting. The structure can be 
viewed as an indication that an $N=2$ analog of the reciprocal of the Igusa 
cusp form exists that might govern the topological string theory on these 
Calabi-Yau manifolds completely.   
\end{abstract}

\end{titlepage}
\vfill \eject


\newpage

\baselineskip=16pt

\tableofcontents

\section{Introduction and Summary}

Topological strings on non-compact Calabi-Yau geometries are solvable and have 
a very interesting structure with a wealth of connections to gauge theories, 
integrable models, large N-dualities, Chern-Simons theories, supersymmetric 
localisation  and matrix models. In fact topological string techniques 
had considerable influence on these areas and solved important problems, 
see~\cite{Drukker:2010nc} for a recent example.  As a benchmark problem in 
topological  string theory with a similar wealth of connections to quantum gravity problems,
remains the solution of the topological string on compact Calabi-Yau manifolds.
In this paper we report on some progress in this matter.  

According to~\cite{BCOV} the genus $g$ topological string amplitudes $F_g(S^{ij},S^j,S;\underline{z})$ 
are inhomogeneous polynomials  of weighted degree $3g -3$ in the anholomorphic propagators  
$S^{ij}, S^j,S$, which have respective weights $(1,2,3)$, with rational functions of the 
moduli $\underline{z}$ as coefficients. The holomorphic anomaly determines all of 
$F_g$, but the weight zero piece $f_g$, whose  determination  is the key conceptual obstacle to solving the topological 
string on compact Calabi-Yau spaces.  A careful analysis of the conifold gap 
condition~\cite{HKQ}, regularity at the orbifold and Castelnuovo's criterium for 
the vanishing of higher genus curves reveals that the $f_g$ can be determined 
up genus $51$\footnote{That condition has been derived under the assumption 
that the ring of modular generators on the quintic has no relations at high 
weight, which has  been  checked up to the weight needed for 
genus 30.} for the quintic~\cite{HKQ}. While for non-compact models the same conditions lead to a 
complete solution of the closed topological string on these 
geometries~\cite{Haghighat:2008gw}~\footnote{In fact our key example  
$X_{18}(1,1,1,6,9)$ is an elliptic CY manifold over $\mathbb{P}^2$ and decompactifying 
the elliptic fibre, leads to a local model solved in~\cite{Haghighat:2008gw}.}, 
on compact Calabi-Yau manifolds one needs  additional boundary condition.  

One purpose of this article is to  investigate the additional symmetries 
and boundary conditions that are specific for elliptic  fibered Calabi-Yau 
threefolds\footnote{Note that the transitions between Calabi-Yau manifolds~\cite{Reid} 
and mirror symmetry make it possible to predict from the solution of the 
topological string on one Calabi-Yau 3 fold with fibration structure the 
ones of such without fibration structure, if the complex moduli space of the 
latter is embedded into the one of the former, see~\cite{Hosono:1993} for many examples.}.
Beside the motivation to solve the topological string in general the elliptically 
fibered cases are very interesting in their own right as the topological string 
amplitudes  predict terms in the low energy effective  actions of F-theory and 
calculate indices of 6d conformal theories in particular the $E$-string and its generalizations 
with gauge theories~\cite{Haghighat:2014vxa}.           

One possible constraint could come from the  $SL(2,\mathbb{Z})$  modularity 
of the amplitudes in the K\"ahler parameter of the elliptic fibre that has been 
argued  in~\cite{Klemm:2012}~\cite{Alim:2012} based on a monodromy analysis 
in one global case~\cite{Candelas:1994hw} and calculations for more general  
elliptic threefolds\cite{Klemm:2004km}. It also  extends observations for 
local  elliptic surfaces inside Calabi-Yau spaces~\cite{Klemm:1996hh} 
and~\cite{HST,Hosono:2001,Hosono:2002xj}.

We find that the constraint from the fibre modularity is equivalent to the constraint
that comes from a $\mathbb{Z}_2$ involution symmetry $I$, that acts on the 
moduli space of elliptic fibrations as an R symmetry, i.e. it is not realized  
as a symplectic transformation like the monodromies of the model. This symmetry 
$I$  acts also on the $F_g$ by $(-1)^{(g-1)}$  and on the propagators in a way described in section \ref{involutiononthepropagators}. 
Knowing these actions we can  restrict the possible constants in the rational ambiguity 
$f_g$ to roughly one fourth. Using this information, the conifold gap and the  regularity at 
the other points in the moduli space, especially the orbifold point we can solve the topological 
string for all classes to genus $9$ on an elliptic fibration over $\mathbb{P}^2$, with 
one section.  This is already the strongest  available result for compact multiparameter 
Calabi-Yau manifolds and needs only very mild results on the vanishing of the BPS numbers.

Our main result is however as follows. Consider the topological string partition 
function $Z=\exp(F)=\exp(\sum_{g=0}^\infty \lambda^{2g-2} F_g)$ 
expanded in the base degree(s). E.g. for the $\mathbb{P}^2$ case one has~\footnote{He we dropped 
the base degree zero contributions, which is rather trivial and discussed separately.}   
\be 
Z=1+ \sum_{d_B =1}^{\infty} Z_{d_B} (\tau,\lambda) Q^{d_B}\ ,
\label{expansion} 
\ee
where $\tau$ is the complexified K\"ahler class of the fiber $\lambda$ is string coupling and 
$Q=e^{2 \pi i t_B}$ with $t_{d_B}$ the suitable defined 
complexified K\"ahler class of the base, see (\ref{redefinitionti}).  Then $Z_{d_B}$ is a quotient 
of even weak Jacobi forms of the following form\footnote{Here we use the standard 
notation of the elliptic argument $z$ which is identified with the string coupling. We hope no confusion with 
the B-model moduli, which are also traditionally denoted by $z$, occurs.}
\be 
Z_{d_B}(\tau,z)=\frac{\varphi_{d_B}(\tau,z)}{\eta^{36 d_B}(\tau)\prod_{k=1}^{d_B} \varphi_{-2,1}(\tau,k z)},
\label{ansatz}
\ee
where $\eta(\tau)$ is the Dedekind function and $\varphi_{d_B}(\tau,z)$ is an even weak Jacobi form of index  $\frac{1}{3}d_B(d_B-1)(d_B+4)$ 
and weight $16 d_B$. The number of coefficients of $\varphi_{d_B}(\tau,z)$ expressed in the basis of the 
ring of even weak forms $P:=E_4,Q:=E_6, A:=\varphi_{-2,1}$ and  $B:=\varphi_{0,1}$  
still grows very fast with $d_B^6$, however we can argue that the Castelnuovo bounds 
discussed in section \ref{GVfromgeometry} bring down the growth to $d_B^4$. 
Combined with the restriction from the constraints from the conifold gap and the 
regularity of the $F_g$ in the interior of the moduli space and with the 
independence of these conditions assumed\footnote{This holds at least for as far as we could actually 
calculate i.e. up to $d_B\leq 5$, $d_E,g$ arbitrary or  $g\leq 8$, $d_B,d_E$ arbitrary.},   
this allows to solve the model up to genus $g=189$ for arbitrary base and fibre classes, 
or up to $d_B\leq 20$ for arbitrary genus and fibre classes.

We will show in section \ref{weakjacobiHA} that the Jacobi forms fulfill a heat 
kernel type form of the holomorphic anomaly equation for $Z_{d_B}$ that is equivalent to a limit 
of the wave function equation proposed by Witten~\cite{Witten:1993ed} for the 
topological string partition function in interpreting the holomorphic anomaly equation of~\cite{BCOV} as consequence of a  
background independent quantization of $H_3(M,\mathbb{C})$. This heat kernel type 
form of the holomorphic anomaly equation further summarizes all 
the holomorphic anomaly equations for the fibre modulus, which 
were proven in~\cite{Klemm:2012}, i.e. the ansatz  (\ref{ansatz}) 
identically satisfies the holomorphic anomaly equation for the fibre 
modulus. 

We argue that the special form of the denominator in our ansatz naturally 
is crucial to satisfy the Castelnuovo bounds. The scaling of the $z$ argument 
in the denominator can interpreted as coming from multiple string windings of 
the base similar as in the elliptic genus of E-strings~\cite{Haghighat:2014,Cai:2014,Kim:2014} 
or more general strings of 6d SCFTs~\cite{Haghighat:2014vxa} with gauge symmetries. 
On the other hand, the paper~\cite{Kim:2014} constructs a 2d quiver gauge theory 
for E-strings and~\cite{Haghighat:2014vxa} for the $\hat D_4$ string 
that can in principle  compute the elliptic genus of any finite number of 
E-strings with the techniques of~\cite{Benini:2013}\footnote{It would be very 
interesting to see if such 2d quiver theory duals exist also for the 
compact threefold case, i.e. 6d theories that do include gravity.}. 

The prediction from the B-model calculations and from (\ref{expansion},\ref{ansatz}) 
are sucessfully checked in section \ref{GVfromgeometry} against results 
from enumerative geometry. Since the moduli space of the BPS states 
tends to becomes smoother for fixed degree and high $g$ we can 
make an infinite number of nontrivial checks on the form  of $Z_{d_B}$ 
and the determination of its coefficients following the methods 
proposed in~\cite{KKV}. 

A remarkable mathematical characterization of the form 
(\ref{ansatz}) is is that, again due to the form of the denominator, 
the poles of  $Z_{d_B}$ lie only at the points $z\in \mathbb{Q}\tau+ \mathbb{Q}$, 
i.e. at torsion  points of the elliptic curve. As such they have a 
modified theta expansion that involves mock modular forms~\cite{Zwegers}.
These structures played r\^ole in the understanding of 
Wall-crossing for $N=4$ BPS states~\cite{Dabholkar:2012} 
based on the reciprocal of the Igusa cusp and related 
meromophic forms.

The general structure (\ref{expansion},\ref{ansatz}) can be viewed as 
an extension of the only compact case 3-fold case $K3\times T2$ 
for which all  BPS states associated~\footnote{In order to make 
this interpretation  one has to lift zero modes that come from the 
$T^2$ and make the $N=2$ invariants trivially zero.} to moduli 
space of curves are known~\cite{KKV}. Due to the  $SU(2)$ holonomy the 
Type II compactifications  has $N=4$ supersymmetry. The prediction is 
based on the formula for the elliptic genus of symmetric products of K3~\cite{Dijkgraaf:1996xw} 
and M-theory/Type IIB duality. It has been recently proven in~\cite{Oberdieck:2014aga}. 
In this case $Z$ is related to the reciprocal of the Igusa cusp form of a genus two curve with weight
$10$ which has a Fourier expansion in $p=\exp(2 \pi i \sigma)$ 
as~\footnote{We follow the notation of~\cite{Dabholkar:2012}.}        
\begin{equation}
\label{reciprocalphi10} 
\hat Z(\Omega) =\frac{1}{\Phi_{10}(\Omega)}=
\sum_{m=-1}\psi_{-10,m}(\tau,z) p^m,\ {\rm with}\ \Omega=\left(\begin{array}{cc}\tau& z\\ z &\sigma\end{array}\right) \ . 
\end{equation}
Here $\Omega$ parametrizes the Siegel upper half space  $\mathbb{H}_2$ and $\Phi_{2,m}=\Delta (\tau) \psi_{-10,m}$ are 
meromophic Jacobi-Forms of weight two and index $m$. The expansion can written 
\begin{equation} 
\label{borcherdsproduct} 
\hat Z(\Omega)=\sum_{m=-1}\chi( {\rm Sym}^{m+1}(K3),q,y) p^m=\frac{1}{p} \prod_{{n>0,m\geq 0,l}\atop {l,m,n \in \mathbb{Z}}}\frac{1}{(1-p^n q^m y^l)^{c(m\cdot n,l)}}
\end{equation}
and summed to the total free energy $F(\lambda,Q)=\log(Z)=\sum_{j=1}^\infty \frac{1}{j} F^{(j)}$ by~\cite{KKV}    
\be
\label{FK3T2}
F^{(j)}=\left(2 \sin\frac{j\lambda}{2}\right)^{-2}\prod_{{k>0,n>0}\atop { m\geq 0,l \in \mathbb{Z}}} \frac{(1-q^k)^4}{(1-y^j q^k)^2 (1-y^{-j} q^k)^2} 
\frac{1}{(1-p^n q^m y^{lj})^{c(m\cdot n,l)}}\ ,
\ee 
where $c(m,r)$ are the expansion coefficients of the elliptic genus 
of K3 (\ref{ellipticgenusK3}) and we identified $\lambda=2 \pi z$ 
and defined $y=\exp(i\lambda)$.

Of course a key question for a full $N=2$ extension 
of the discussed  structure is whether the formulas 
(\ref{expansion},\ref{ansatz})  come from an underlying 
modular object as (\ref{reciprocalphi10},\ref{FK3T2}). One 
obvious difference is the quadratic versus linear growth 
of the index in the weak Jacobi forms. Nevertheless we 
like  to discuss in section \ref{projective} the formulation 
of projective special K\"ahler manifolds in $N=2$ 
supergravity that might give a hint for the construction of 
such an object. We hope to report on progress on that 
question in~\cite{HKKZ}.  

In this article we focus for simplicity on elliptic fibrations with 
only $I_1$ singular fibers in the Kodaira classification, i.e.\ no gauge 
symmetries,  but eventually flavor symmetries, in the $F$ theory 
interpretation. We note however that for the gauge theory case, i.e.\ 
elliptic fibrations with higher Kodaira fibre singularities, the 
work~\cite{Haghighat:2014vxa} gives already the crucial hint for 
the generalization of the numerator in (\ref{ansatz}). In this case, 
we will find additional Jacobi theta functions in the numerator, 
whose different elliptic arguments correspond to the additional  
K\"ahler classes of the resolution. This Jacobi theta function 
will introduce additional zeros in the numerator, which signal 
the gauge symmetry enhancements, and should be fixed by this 
structure. The formulas we find also suggest a possible 
refinement, which we shortly indicate in (\ref{refinedBPSinvariants}).

\section{The elliptic Calabi-Yau manifolds and their mirrors} 
\label{toricellipticcy}  

In this section we describe the toric construction of elliptically 
fibered Calabi-Yau spaces with only $I_1$ Kodaira fibers. Mirror symmetry is 
manifest in this formalism.  We collect the key data of our main 
example and describe shortly the general structure of the genus zero sector.    

\subsection{Construction of toric hypersurface Calabi-Yau spaces}  
Our construction of mirror pairs of Calabi-Yau 
n-folds as hypersurfaces in toric ambient spaces 
$\mathbb{P}^{n+1}_\Delta$ follows Batyrev's 
construction which relies on dual pairs of 
$n+1$ dimensional reflexive pairs of lattice 
polyhedra $(\Delta,\Delta^*)$. Let 
$(\Gamma,\Gamma^*)$ be dual lattices with pairing 
$\langle .,.\rangle$ and real completion 
$(\Gamma_{\mathbb{R}}, \Gamma^*_{\mathbb{R}})$.  
A lattice polyhedron $\Delta$ ($\Delta^*$)  is the convex hull of 
integer points in $\Gamma$ ($\Gamma^*$). The dual polyhedron $\Delta^*$ 
to $\Delta$ is defined by $\Delta^*=\{y\in \Gamma^*_{\mathbb{R}}| \langle y,x\rangle \ge -1, \forall\  x 
\in \Delta\}$.  $\Delta$ is reflexive if $\Delta^*$ is a lattice polyhedron. 
Note $(\Delta^*)^*=\Delta$ so if $\Delta$ is reflexive  $(\Delta,\Delta^*)$  
is a reflexive pair and each polyhedron contains the origin as its unique inner point.

The Calabi-Yau n-fold $M_n$ is given by the canonical hypersurface 
\be 
P=\sum_{\nu_i \in \hat \Delta\cap \Gamma} 
a_i^* \prod_{\nu^*_j\in \hat \Delta^*\cap \Gamma^*} y_j^{\langle \nu_i,\nu_j^*\rangle+1}=0
\label{P} 
\ee
in the toric ambient space $\mathbb{P}_\Delta$ with coordinate ring $y_j$, 
while the mirror  Calabi-Yau n-fold $W_n$ is given by the constraint $P^*=0$ in 
$\mathbb{P}_{\Delta^*}$ with coordinate ring $x_i$, where $P^*=0$ is defined 
analogously to (\ref{P}) with  $\Delta$ and $\Delta^*$ exchanged. 
The notation $\hat \Delta$ ($\hat \Delta^*$) means the polyhedra 
$\Delta$ ($\Delta^*$) with the integer points interior to codimension 
one faces omitted.

To give the ambient space $\mathbb{P}_\Delta$ a fibration structure\footnote{See exercise in \cite{Fulton} p. 49, 
where the statement is made in the language of the fans associated to $\Delta$.}     
such that the embedded Calabi-Yau $n$-fold defined as hypersurface 
has a fibration by a Calabi-Yau $m$-fold, we combine a base polyhedron 
$\Delta^{B*}$ and a reflexive fibre polyhedron $\Delta^{F*}$ into 
an $n+1$ dimensional polyhedron $\Delta^*$ 
as follows 
\begin{equation} 
  \footnotesize 
  \begin{array}{|ccc|ccc|} 
   \multicolumn{3}{c}{ \nu^*_i\in \Delta^*} &\multicolumn{3}{c}{ \nu_j\in \Delta}  \\ 
    &            &\nu_i^{F*} &             &\nu_j^{F} & \\
    & \Delta^{B*}_{n-m} & \vdots                &s_{ij}\Delta^{B}_{n-m}&\vdots & \\
    &            &\nu_i^{F*} &                   & \nu_j^{F}& \\
    & 0 \ldots 0      &                       & 0\dots 0              &                          & \\
    & \vdots     &\Delta^{*F}_{m+1}             &   \vdots          & \Delta^{F}_{m+1}            & \\
    & 0 \ldots 0      &                       & 0\ldots 0              &                          & \\
  \end{array} \, .
\label{polyhedrafrombaseandfibre}
\end{equation} 
If  $\Delta^{*F}$ and $\Delta^{*B}$ are reflexive then $(\Delta,\Delta^*)$, given by the 
complex hull of the indicated points, is a reflexive pair. Reflexivity of $\Delta^{*F}$ 
is required by the CY condition on the fibre. For $\Delta^{*B}_{n-m}$ it is not a 
necessary condition, see~\cite{Huang:2013yta} for more details on this construction. 
We defined $s_{ij}= \langle \nu_i^F,\nu_j^{F*} \rangle+1\in \mathbb{N}$ and 
scaled $\Delta^B \rightarrow s_{ij}\Delta^{B}$. Here we indicated the 
dimensions of some polyhedra by subscripts; elliptic fibrations correspond to $m=1$.

For $n=3$ and $m=1$ we get many examples by choosing any of the 
$16$ reflexive polyhedra in 2d as $\Delta^{*F}$  and $\Delta^{*B}$ 
respectively and specifying in addition $\nu_i^{F*}\in  \Delta^{*F}$ 
as well as twisting parameters~\cite{BKL}, which not indicated 
in (\ref{polyhedrafrombaseandfibre}).        

\subsection{The $X_{18}(1,1,1,6,9)$ 3-fold, an elliptic fibration over $\mathbb{P}^2$}
\label{X18}
Our main example is the smooth elliptic fibration over $\mathbb{P}^2$.
This case\footnote{It can be also written as the zero locus of a degree 
18 polynomial  in the weighted projective space $\mathbb{P}^4(1,1,1,6,9)$ called $X_{18}(1,1,1,6,9)$.} 
is a two parameter model discussed in~\cite{Hosono:1993}  and in greater detail in~\cite{Candelas:1994hw} 
and illustrates the general construction described above. 

Pick for the base $\mathbb{P}^2$, whose toric polyhedron is the convex hull 
of the points $\Delta^{*B}={\rm conv}(\{(1,0),(0,1),(-1,-1)\})$,
for the fibre polynomial $\Delta^{*F}={\rm conv}(\{(1,0),(0,1),(-2,-3)\})$  
and for $\nu_3^{*F}=(-2,-3)$. Then $\nu_3^{F}=(-1,-1)$ and $s_{33}=6$. 

We list the points which give rise to the coordinate ring of 
$\mathbb{P}=\mathbb{P}_{\Delta}/(\mathbb{Z}_{18}
\times \mathbb{Z}_{6})$, all points ${\nu}^*_i\in \hat \Delta^*$ and the 
two vectors of linear relations among them, which correspond to the Mori cone of 
$\mathbb{P}_{\Delta^*}$, as well the toric divisors $D_{x_i}=\{x_i=0\}$      
\begin{equation}  
 \begin{array}{cc|rrrr|rrrr|rr|} 
    \multicolumn{2}{c}{Div.} &\multicolumn{4}{c}{\nu_i}&\multicolumn{4}{c}{{\bar \nu}^*_i}     &l^{(E)}& l^{(B)} \\ 
    D_{x_0}  &&       0& 0& 0&0&          0&    0&    0&    0&         -6&   0    \\ 
    D_{x_1}=L   &&  12&-6&-1&-1&         1&    0&   -2&    -3&         0&   1    \\ 
    D_{x_2}=L   &&  -6&12&-1&-1&         0&    1&   -2&    -3&         0&   1     \\  
    D_{x_3}=L   &&  -6&-6&-1&-1&        -1&  -1&   -2&    -3&         0&   1     \\ 
    D_z=E       &&   0&0&-1&-1&          0&    0&   -2&    -3&         1&  -3     \\ 
    D_x=2H       &&   0&0&2& -1&          0&    0&    1&    0&          2&   0   \\
    D_y=3H       &&   0&0&-1 &1&          0&    0&    0&    1&          3&   0  \\  
  \end{array} \ .
  \label{F1case} 
\end{equation}
The classical topological data of the 3-fold $M$ are easily calculable from the 
toric construction. The  Euler is $\chi(M)=-540$, the two independent Hodge numbers are 
$h^{1,1}(M)=2$, $h^{2,1}(M)=272$, the classical triple intersection numbers are  
given by\footnote{In the notation of \cite{Hosono:1993} these intersections 
are encoded in the ring ${\cal R}=9 J_E^3+ 3 J_E^2 J_B + J_E J_B^2$. $H,L$ 
are the notations for the divisors used in~\cite{Candelas:1994hw}. }   
\begin{equation}
C^0_{111}=H^3=9,\quad C^0_{112}=H^2\cdot  L =3, \quad C^0_{122}= H\cdot L^2=1, \quad C^0_{222}= L^3=0  
\end{equation} 
where  $H$ and $L$ are the divisors  dual to the curves defined by the Mori 
vectors $l^{(E)}$ and $l^{(B)}$ and the K\"ahler classes $J_E$ and $J_B$.   
The intersection with the second Chern class $c_2$ of $M$ are
\begin{equation}
\int_M c_2 \wedge J_E= [c_2]\cdot H= 36, \quad \int_M c_2\wedge J_B=  [c_2]\cdot L=102   \ .
\label{Chern2}
\end{equation}    

The mirror manifold is given by the zero locus      
\begin{equation} 
\label{mirrorcurvep11169}  
P^*=x_0( z^6 (x_1^{18}+x_2^{18}+x_3^{18}- b (x_1 x_2 x_3)^6)- 2^\frac{1}{3}\sqrt{3} a z x_1x_2x_3 x_4 x_5+ x^3+y^2) =0 
\end{equation}
in the space $\mathbb{P}$. $z:=x_4,x:=x_5=$ and $y:=x_6$ are the conventional 
names of variables in the Weierstrass form of the elliptic fibre. 
In  $\mathbb{P}$ there are toric $\mathbb{C}^*$ actions on the coordinates 
$x_i$, $i=1,\ldots,6$, which can be used to eliminate all $a_i$, 
but the two complex structure variables $(a,b)$ of $W$. This is because two $\mathbb{C}^*$ actions  
\begin{equation} 
x_i\rightarrow \mu_r^{l^{(r)}_i} x_i,\quad {\rm with } \ \  \mu_r\in \mathbb{C}^* \ ,
\label{xrescaling}
\end{equation}
are divided out from the coordinate ring of $\mathbb{P}$. 
One can introduce manifestly $\mathbb{C}^*$ invariant combinations the $a_i$ as 
complex structure variables of $W$, namely  
\be 
z_i=(-1)^{l^{(i)}_0}\prod_{k=1} a_k^{l^{(i)}_k}, \quad i=1,\ldots, h_{21}(W_3)=h_{11}(M_3) \ . 
\ee
In the case at hand $z_1:=z_E=\frac{a_4 a_5^2 a_6^3}{a_0^6}$ corresponds 
to the elliptic fibre and $z_2:=z_B=\frac{a_1a_2a_3}{a_4^3}$ to the base class. 
Using the $\mathbb{C}^*$ actions on the period integrals $\Pi(z)=\int_{\gamma_3} \Omega$ 
with ($a= a_0$)
\be
\label{Omega} 
\Omega=\oint_{\gamma_\epsilon} \frac{a \mu}{P^*},
\ee 
given by a residuum integral around $P^*=0$ with the measure  
$\mu =\sum_{i=1}^5 (-1)^i w_i d \tilde x_1\wedge \ldots \widehat{d x_i}\ldots \wedge d \tilde x_5$,
one can derive two Picard-Fuchs (PF) differential equations~\cite{Hosono:1993}   
\begin{eqnarray}  \label{PF2.1}
\mathcal{L}_1 &=& \theta_1(\theta_1-3\theta_2) -12 z_1 (6\theta_1+1) (6\theta_1+5), \nonumber \\
\mathcal{L}_2 &=& \theta_2^3 +z_2 \prod_{i=0}^2 (3\theta_2-\theta_1+i)
\end{eqnarray}
determining the periods from $\mathcal{L}_i\Pi(z)=0$, $i=1,2$. Here $\theta_i =z_i \frac{\partial}{\partial z_i}$. 
The discriminants of the operators are 
\begin{eqnarray}
\Delta_1 &=& (1-432 z_1)^3 -27 z_2 (432 z_1)^3, \nonumber \\
\Delta_2 &=& 1+27 z_2
\end{eqnarray} 

The 3-point couplings can be computed from the PF operators~\cite{Hosono:1993} 
\begin{eqnarray} 
\label{3point2.3}
&& C_{111}  = \frac{9}{z_1^3 \Delta_1},   ~~~~~ 
C_{112}  =C_{121}  =C_{211}  = \frac{3\Delta_3 }{z_1^2z_2  \Delta_1},   \nonumber \\ 
&& C_{122}  = C_{212}  =C_{221}  = \frac{\Delta_3^2 }{z_1 z_2^2 \Delta_1},   ~~~
C_{222}  = \frac{ 9(\Delta_3^3 + (432z_1)^3 ) }{z_2^2 \Delta_1 \Delta_2 },   
\end{eqnarray}
where for convenience we can define the factor $\Delta_3 = 1-432z_1$. 

\subsection{Integral symplectic basis and genus zero topological string amplitudes}
\label{symplecticbasis}
  
The PF operators determine the 3-point couplings. Its solutions in a special basis 
determine the metric on the moduli space, which is a K\"ahler manifold 
$G_{i\bar\jmath}=\partial_{t_i}\partial_{t_{\bar \jmath }} K$ ,
whose K\"ahler potential $K$ is given by
\be 
e^{-K}=i \int_W \Omega \wedge \bar \Omega =  i \Pi^\dagger \eta \Pi\ ,
\ee
where ($k=h^{1,1}(W)+1$)
\begin{equation}
\label{eta}
\eta= \left(\begin{array}{cc} 
        0 & {\bf 1}_{k \times k}\\ 
        -{\bf 1}_{k \times k}  & 0 
        \end{array}\right)\ . 
\end{equation}
is the symplectic pairing on $H_3(W,\mathbb{Z})$  and $\Pi(z)=(X^I=\int_{A^I} \Omega,F_I=\int_{B_I}\Omega)^T$  
is the period vector w.r.t. to the corresponding symplectic basis $(A^I,B_I)$, $I=0,\ldots,k-1$ of  $H_3(W,\mathbb{Z})$.  
The two structures are related by special geometry, which implies the existence of a 
prepotential, the genus zero amplitude ${\cal F}^{(0)}$, with 
\be
\Pi=\left(\begin{array}{c} 
X^0  \\ 
X^i \\
{F_0 }\\
{F_i}
\end{array}\right)=X^0\left(\begin{array}{c} 
1   \\
t^i \\
2 {\cal F}^{(0)}- t^i  \partial_i {\cal F }^{(0)}\\
{\partial {\cal F}^{(0)}\over \partial t^i}\end{array}\right)=X^0\left(\begin{array}{c}
1\\
t^i\\
-{C^0_{ijk}\over 3!} t^i t^j t^k+\frac{\int c_2\wedge J_i}{24}   t^i-i{\chi \zeta(3)\over (2 \pi  )^3}
+f(q)\\
{C^0_{ijk}\over 2} t^i t^j+{n_{ij}} t^j+ \frac{\int c_2\wedge J_i}{24} +\partial_i f({\underline q})
\end{array}\right)\ .
\label{periodbasis3fold} 
\ee
Here  
\be
\label{prepot} 
{\cal F}^{(0)}=\left[-{C_{ijk} t^i t^j t^k\over 3!}+n_{ij} {t^i t^j \over 2}+ \frac{\int c_2\wedge J_i}{24}t^i-i{\chi \zeta(3)\over 2 (2 \pi)^3} +f({\underline q})\right]
\ee
and $X^I$ are homogeneous coordinates and  $t^i(z)$ are inhomogeneous coordinates    
\begin{equation} 
t^i(z)=\frac{X^i({\underline z})}{X^0({\underline z})}=\frac{1}{2 \pi i} \left(\log(z_i)+ \Sigma^i({\underline z})\right), \quad i=1,\ldots, h^{1,1}(W) \ ,    
\label{mirrormap} 
\end{equation}
which serve as a mirror map. In particular the complexified volumes of the curves in the Mori cone $t_i$ 
are the flat coordinates at large radius near $z_i=0$, $\forall i=1,\ldots ,h^{11}(M)$ ,
 where the third equal sign in (\ref{periodbasis3fold}) 
is valid, $X^0$  is the unique holomorphic period normalized to $\omega_0(z)=1+{\cal O}(z) $. The $C^{(0)}_{ijk}$ 
are the classical intersection numbers.  There is a freedom in choosing the $n_{ij}$, but because of the odd 
intersections and the requirement of integer monodromy around $z_i=0$ , they cannot be set to zero 
as they must be half integral.  Following~\cite{Candelas:1994hw} or the topological  description the quadratic 
terms in ${\cal F}^{(0)}$ in the appendix of~\cite{HKQ} we may take them to $n_{11}=\frac{9}{2}$ and $n_{12}=\frac{3}{2}$ 
for the $X_{18}(1,1,1,6,9)$ model. Like for all toric hypersurfaces in this example the 
Picard-Fuchs equations are of the generalized hyperelliptic type and all logarithmic solutions 
can be derived from 
\begin{equation} 
\omega_0({\underline{z}},{\underline{\rho}}) = \sum_{{\underline{n}} } c({\underline{n}},{\underline{\rho}}) z^{{\underline{n}}+{\underline{\rho}}}\ \ 
{\rm where} \ \  c({\underline{n}},{\underline{\rho}})=\frac{\prod_j \Gamma(\sum_\alpha l_{0j}^{(\alpha)} ( n_\alpha+\rho_\alpha)+1)}{
\prod_i \Gamma(\sum_{\alpha} l_i^{(\alpha)}( n_\alpha+\rho_\alpha)+1)}\ ,   
\label{solution} 
\end{equation}
with $X^0({\underline z})=\omega_0({\underline z},0)$, by taking derivatives with respect to $\rho_\alpha$, 
in particular 
\be 
X^i({\underline z})=\frac{\partial_{\rho_i}}{2 \pi i} \omega_0({\underline{z}},{\underline{\rho}})|_{{\underline \rho}={\underline 0}}\ , 
\label{logsol} 
\ee 
see the Appendix in~\cite{Hosono:1994ax} for the explicit structure of 
the higher derivatives.

\section{Involution symmetry and BCOV formalism}  
\label{involutionsymmetry} 

In this section we mainly discuss the involution symmetry, its 
realization on the topological string amplitudes and its 
consequences. In the course of the exposition we also review the 
relevant aspects of the BCOV formalism. Many of the actual 
calculations are quite technical and relegated to the 
appendices. We comment on the relation of the involution symmetry 
to the monodromies in subsection \ref{monodromiesversusinvolution}  
and on the local limit in subsection \ref{locallimit}. The section 
\ref{ridigspecialgeometry} illustrates the idea of the BCOV ring 
and contains many modular properties needed in section 
\ref{fibremodularity}. Subsection \ref{projective} is more 
speculative and tries to give a perspective on a possible
global modular object related to the all genus amplitude.

\subsection{The involution symmetry} 
Fibrations with fibre polytope $\Delta^{*F}$ have an involution 
symmetry $I$  acting on the moduli space, which is independent of the 
chosen base $B$.  The action of $I$ on the $X_{18}(1,1,1,6,9)$ moduli space
that we describe below appeared in \cite{Candelas:1994hw}.  

$I$ acts non-trivial on the higher genus amplitudes $F_g(S^{ij},S^i,S)$ and 
the anholomorphic propagators  $S^{ij},S^j,S$. The $F_g$ are inhomogeneous 
polynomials of degree $3g-3$ with weights $1,2,3$ and rational functions 
in the moduli.  This follows from the Feynman rules for the genus $g$ 
amplitudes in~\cite{BCOV}. The rational function $f_g$ that does not multiply 
any propagator is not determined by the  holomorphic anomaly equation, that 
otherwise fix the $F_g$ recursively in $g$. The function  $f_g$ is 
restricted by the pole behavior of the $F_g$ at the boundary divisors 
of the moduli space to contain only finite parameters for each $g$.  This 
 is called the holomorphic ambiguity and its determination is a main problem to solve 
the topological string on compact Calabi-Yau manifolds.  If we know 
the action of $I$ on $F_g$ and the propagators we can further restrict 
$f_g$. In fact this requirement cuts down the possible parameters in $f_g$ to 
roughly  one fourth. Moreover we find that the symmetry $I$ is equivalent 
to the  constraints that the fibre modularity imposes on $f_g$.                             

To see the involution symmetry explicitly 
define the monomial $m=(\prod_i x_i)^6$, where $x_i$ 
are the coordinates of the base. Then 
\begin{equation} 
P^*= g(z,\underline x,{\underline a}_B) - b z^6 m^6- 2^\frac{1}{3}\sqrt{3} a z m x y  + x^3+y^2 =0, 
\end{equation}
where $g(z,\underline x,{\underline a}_B)$ is a polynomial compatible with 
the scaling, which does not contain $m$. Now requiring that  
\begin{equation}
\begin{array}{rl}
x&\rightarrow x + c_1 z^2 m^2\\ 
y&\rightarrow y + c_2 x z m + c_3 z^3 m^3 \\
\label{transformation}
\end{array}
\end{equation}
leaves $P^*$ invariant fixes $c_1=2^\frac{1}{3} a^2$, 
$c_2=\frac{(1-i)}{2^\frac{1}{3}} \sqrt{3} a $  and $c_3= 
\sqrt{3} a^3$ and acts on the parameters as
\begin{equation} 
I:(a,b)\rightarrow  (i a,b+a^6) \ .  
\end{equation}
This involution operation acts on the $(3,0)$ form $\Omega$ given in (\ref{Omega}) by  
\begin{equation}
I:\Omega \rightarrow  i \Omega ,
\end{equation}
because $P^*$ as well as the measure $\mu$ are invariant.
Since $\Omega$ defines vacuum line bundle ${\cal L}$ and the higher 
genus  amplitudes ${\cal F}_g$ transforms as section ${\cal F}_g\in {\cal L}^{2g-2}$ 
we conclude that the involution symmetry maps ${\cal F}^{(g)}$ to $\tilde {\cal F}^{(g)}$ with   
\begin{eqnarray} 
\label{involutionFg} 
\tilde{\mathcal{F}}^{(g)} = (-1)^{g-1}  \mathcal{F}^{(g)} . 
\end{eqnarray} 
In the the $z_i$ coordinates the involution acts as 
\begin{eqnarray} \label{invo2.4}
I: ~~~ (z_1,z_2) \rightarrow (x_1,x_2) = \big{(} \frac{1}{432}-z_1, -\frac{(432z_1)^3 z_2}{(1-432z_1)^3} \big{)}.  
\end{eqnarray} 
The PF operators (\ref{PF2.1}) are invariant up to some trivial factors under the involution.  
On the other hand, the involution exchanges the two discriminants up to some factors
as 
\begin{equation} 
I(\Delta_1) = (432z_1)^3 \Delta_2, \qquad I(\Delta_2) = \frac{\Delta_1}{(1-432z_1)^3} \ .
\label{involutionexchange} 
\end{equation}

\subsection{Monodromy group versus involution symmetry}
\label{monodromiesversusinvolution} 
The involution symmetry multiplies the periods w.r.t.\ $\Omega$ with  $i$, so it 
is in particular not a symplectic transformation (in fact $M_I^T \eta M_I =-\eta$)   and 
cannot be related to an actual monodromy action.      

The   monodromy group is most quickly described as  follows. 
There are Neveu-Schwarz B-field  shifts that leave the instanton 
action invariant $t_i\rightarrow t_i+ 1$, $i=1,..,h^{1,1}(M)$. 
Their monodromy is  completely fixed by   (\ref{periodbasis3fold}) after 
specifying the topological data as in subsections \ref{X18},\ref{symplecticbasis}
for the main example.  We call these monodromies and the corresponding 
$6\times 6$ matrices  $T_1(=T_E)$ and $T_2(=T_B)$. 

Further there is  a cycle $\nu_1$, which corresponds to the $B_3\sim S^3$  
base of the Strominger-Yau-Zaslow $T^3\rightarrow B_3$  fibration, that 
vanishes at the  conifold $\Delta_1=0$. In the $X_6(1,2,3)$ 
elliptic fibrations  such as the $X_{18}(1,1,1,6,9)$ one has a 
second conifold discriminant $\Delta_2=0$, where $\Delta_1$ 
and  are exchanged, up to irrelevant factors, by the involution 
symmetry as discussed in the last section. Let us call the corresponding 
vanishing cycle $\nu_2$ and  use the integral symplectic basis of cycles that corresponds to the 
period vector  (\ref{periodbasis3fold}). Using some basic analytic 
continuation one calculates the vanishing cycles in this basis as~\footnote{In fact $\nu_1$ 
corresponds  always to $F_0$, while $\nu_2$  corresponds to 
$F_E-X_0-\sum_{i=1}^{h^{11}(B)} a^i F_{B,i}$, where  the $a_i$ 
defined above (\ref{redefinitionti}) .}   
\begin{equation}
\begin{array}{rl}
\nu_1&=(\phantom{-}0,0,0,1,0,\phantom{-}0)\ ,\\
\nu_2&=(-1,0,0,0,1,-3)\ .
\end{array}
\label{vanishingcycles}
\end{equation}  
The Lefschetz monodromy theorem for 3-folds states that the monodromy 
around a conifold divisor, where by definition an $S^3$ sphere $\nu$  
vanishes, on each cycle $\gamma\in H_3(M,\mathbb{Z})$  is given by 
the symplectic reflection 
\be        
 S_{\nu}(\gamma)=\gamma-\langle \gamma, \nu\rangle \nu \ ,  
\ee 
where $\langle, \rangle$ denotes the symplectic pairing. By 
this formula and (\ref{vanishingcycles}) we can calculate the 
monodromy in the basis  (\ref{periodbasis3fold}). We call the 
corresponding  monodromies  and $6 \times 6$ matrices 
$C_1$ and $C_2$ respectively. The $T_i$ and $C_i$ 
generate the monodromy group of the $X_{18}(1,1,1,6,9)$ 
model in fact redundantly, as can be seen by the Van Kampen
relations analyzed for this case in great detail in~\cite{Candelas:1994hw}. 
In particular one has that an order 18 element $A$ is given by\footnote{We use 
the same notation $A$ for this element as in~\cite{Candelas:1994hw}. The 
other notions ate related by $C_1=T$, $C_2=B$, $T_1=T_\infty$ and $T_2=D_\infty$.}    
\begin{equation} 
 A^{-1}=C_1 C_2 T_2              
\end{equation}
and the two conifold monodromies  $C_1$ and $C_2$ are conjugated to each other  
by the order 6 element $A^3$
\be                    
 C_1 =A^3 C_2 A^{-3}\ ,
\label{conjugation}
\ee
which exchanges  $\nu_1\leftrightarrow \nu_2$ and is explicitly
\be 
A^3=\left(
\begin{array}{cccccc}
1 & 1& 0& 0& 0& 0\\ 
-1& 0& 0& 0& 0& 0\\ 
3& 3& 1& 0& 0& -1\\ 
-1& 0& 0& 0& 1& -3\\ 
0& 10& 3& -1& 1& -3\\ 
0& 3& 1& 0& 0& 0
\end{array}\right)\ .
\ee 
One can conclude that the monodromy group is generated by $A$ and 
$C_1$~\cite{Candelas:1994hw}. Note that $T_1$  and $A^3$ acts on 
$t_1$ as~\cite{Candelas:1994hw}
\begin{equation}
T_1:t_1\mapsto t_1+1,\qquad A^3:t_1\mapsto  - \frac{1}{t_1+1}
\end{equation}
These operations  generate  an ${\rm Sl}(2, \mathbb{Z})$ action 
on the elliptic fibre parameter, which we call therefore often $\tau$ 
in the following.  As we mentioned already invariance under the 
 involution symmetry $I$   is equivalent to the fibre modularity. 
This could be guessed from the fact that it is  $A^3$  that really  
imposes the nontrivial part of fibre modularity --- the shift symmetry 
is present for all K\"ahler moduli --- and conjugates the conifold monodromies just as  
$I$ exchanges the two conifold discriminants. The equivalence of the restrictions 
imposed  the  by invariance under $I$ and the fibre modularity  will be 
strictly proven in appendix \ref{sec:FI}.     
 
\subsubsection{The local limit}
\label{locallimit} 
As we have seen there is an ${\rm SL}(2,\mathbb{Z})$ action on the 
elliptic fibre modulus. This symmetry governs the amplitudes order by 
order in the  exponential  $Q_B=\exp(2 \pi i t_B)$ of the flat coordinate 
representing the base curve in $\mathbb{P}^2$. There is an other action of the congruent 
subgroup $\Gamma_0(3)=\left\{\left(\begin{array}{cc} a& b\\ c& d 
\end{array} \right)\in  {\rm SL}(2,\mathbb{Z})\Bigl| c=0 \ {\rm mod}\ 3\right\} $   
of ${\rm SL}(2,\mathbb{Z})$ on the base modulus. To see it we must make 
the volume of the fibre large, $\lim(t_E) \rightarrow i \infty$, which 
corresponds to $z_E\sim q=\exp(2 \pi i t_E)\sim z_E\sim 0$. This is known as the 
local limit. In the $A$-model language it focuses on the 
${\cal O}(-3)\rightarrow \mathbb{P}^2$ geometry by decompactifying 
the elliptic fibre of $X_{18}(1,1,1,6,9)$. The periods of the local geometry 
are given by integrals of a meromorphic differential $\lambda$  over cycles on 
an  elliptic curve ${\cal C}_B$ that is the the mirror geometry. These periods  
$\Pi_{loc}=\left(\int_{i} \lambda, i=0,a,b,\right) =\left( 1, t_B, 
-3 \partial_{t_B} F_{loc}^{(0)}\right)$ fulfill local rigid special 
geometry~\cite{Huang:2011}.  
The occurrence of the $1$ indicates that the CY  $(3,0)$ form $\Omega$ 
becomes in the local limit the meromorphic one form $\lambda$ which has  
a non-vanishing residuum at a pole $ \lambda $ on ${\cal C}_B$. From 
(\ref{PF2.1}) one sees that the periods in the local limit $\Pi_{loc}$ are governed 
by a specialization of the second differential operator $\mathcal{L}= 
\theta_2^3 +z_2 \prod_{i=0}^2 (3\theta_2+i)$, which is another way 
to see the constant solution. 

However we  want to understand precisely, which periods $\Pi=\int_{\underline \Gamma}\Omega$ 
on the compact CY 3-fold become the local periods  in the limit or equivalently  
how the  $\Gamma_0(3)$  action is embedded  in the action of  ${\rm Sp}(6,\mathbb{Z})$. 
To see this notice that we have to make a linear change in the $\partial_{t_i} F^{(0)}$ , 
$i=1,2$ part  of  the basis $\Pi$ (\ref{periodbasis3fold}) to keep with 
\be
\begin{array}{rl}
\tilde F_1&=F_B=-\frac{3}{2} t_E^2-t_E t_B+\frac{3 t_E}{2}+\frac{3}{2} \\ 
\tilde F_2&=3 F_B-F_E=\frac{t_B^2}{2}-\frac{3 t_B}{2}+\frac{1}{4}\ 
\end{array}
\ee
in the limit $z_E=0$ three fine periods $\left( X_0=1,t_B, \tilde F_2\right)$. 
If we now conjugate the monodromies  that we found in the last section by 
the corresponding element $C=\left(\begin{array}{ccc} 
{\bf 1}_{4\times 4}&0&0\\
0                          &0&1\\
0                          &-1&3\\ \end{array}\right)$, we 
obtain the following  monodromies $M_{i}=C T_2 C^{-1}$, $M_{c}=C C_2 C^{-1}$ and 
$M_{o}=(M_{c} M_{i})^{-1}$   
{\footnotesize{
\be
M_{i}=\left (\begin{array}{cccccc}
{\underline 1}& 0& 0& 0& 0& 0\\ 
0& 1& 0& 0& 0& 0\\ 
{\underline 1}& 0& {\bf 1}& 0& 0& {\bf 0}\\ 
3& 2& 0&1& -1& 0\\ 
0& -1& 0& 0& 1& 0\\ 
-{\underline 1}& 0& {\bf 1}& 0& 0& {\bf 1}
\end{array}\right), 
M_{c}=\left (\begin{array}{cccccc}
{\underline 1}& 0& 0& 0& 0& 0\\ 
1& 1& 0& 0& 0& 1\\ 
-{\underline 3}& 0& {\bf 1}& 0& 0& {\bf -3}\\ 
1& 0& 0& 1& 0& 1\\ 
0& 0& 0& 0& 1& 0\\ 
{\underline 0}& 0& {\bf 0}& 0& 0& {\bf 1}
\end{array}\right),  
M_{o}=\left (\begin{array}{cccccc}
{\underline 1}& 0& 0& 0& 0& 0\\ 
-1& 1& 0& 0& 0& -1\\ 
{\underline 2}& 0& {\bf 1}& 0& 0&  {\bf 3}\\ 
-3& -1& 0& 1& 1& 0\\ 
-1& 1& 0& 0& 1& -1\\ 
-{\underline 1}& 0& {\bf -1}& 0& 0& {\bf -2}
\end{array}\right) \ . \ee}} 
Here we have printed the elements $\left(\begin{array}{cc} d&c\\ a& b\end{array}\right)$ 
of the $\Gamma_0(3)$ subgroup, see~\cite{Aganagic:2006wq}, in bold face and underlined 
the shifts due to the non-vanishing residua of $\lambda$. A consequence of this symmetry is the fact 
that in the strict local limit the amplitudes $F^{(g)}(\hat E_2(\tau_B), 
G_2(\tau_B), G_4(\tau_B),G_6(\tau_B))$~\cite{Aganagic:2006wq}, with 
$\tau_B= \frac{-3\partial_u\partial_{t_B} F^{(0)}}{\partial_u t_B}$, can 
be expressed in the indicated generators of the ring of almost holomorphic 
ring of  forms  of $\Gamma_0(3)$ and has been completely solved~\cite{Haghighat:2008gw}.
Here $u$ is the complex modulus of the local geometry~\cite{Haghighat:2008gw}, 
which is  identified with the $b$ parameter in (\ref{mirrorcurvep11169}).  Of course this 
base modularity that holds in the strict large fibre limit, and acts not directly on the 
periods of the 3-fold geometry, extends much less trivially to the whole two 
parameter family than the fibre modularity, that we discuss further in section~\ref{fibremodularity}.

\subsection{Involution symmetry at genus one}
\label{involutionatgenusone}
First we consider the genus one amplitude $\mathcal{F}^{(1)}$  which fulfills the
holomorphic anomaly equation~\cite{BCOV}
\be 
\partial_i \bar \partial_{\bar \jmath} F^{(1)}=\frac{1}{2} {\rm Tr} C_i \bar C_{\bar \jmath} - \frac{\chi(M)}{24} G_{i\bar \jmath},        
\ee
where $\langle j|\bar \phi_{\bar \imath} |k\rangle =\bar C_{\bar \imath\bar \jmath\bar k}=C_{\bar \imath}^{j'k'} \eta_{j'j} \eta_{k k'}$ and  
$\langle j | \phi_i| k\rangle=C_{ijk}$ are the 3 point functions discussed above, 
with the relation $C_{\bar \imath}^{jk}=e^{2K} G^{j \bar \jmath } G^{k \bar k} \bar C_{\bar \imath \bar \jmath \bar k}$.  The solution has the  well known form
\begin{eqnarray} \label{genusoneamplitude}
\mathcal{F}^{(1)} = \frac{1}{2}(3+h^{1,1} -\frac{\chi}{12}) K + \frac{1}{2} \log\det G^{-1} - \frac{1}{24}\sum_{i=1}^2 s_i \log z_i -\frac{1}{12} \sum_{a=1}^2 \log \Delta_a.  
\end{eqnarray} 
The topological data for the $X_{18}(1,1,1,6,9)$ model are given in section \ref{X18}.  The leading asymptotic of $ \mathcal{F}^{(1)}$ anholomorphic
near large volume limit is 
\be 
\label{F1}
\mathcal{F}^{(1)}=-\frac{1}{24} \sum_{i=1}^2 t_i \int_M c_2 J_i +{\cal O}(Q) \ .
\ee
The values of the second Chern-Classes
 (\ref{Chern2})  determine the constants $s_1= 114, s_2=48$.  In the holomorphic limit, the determinant of the 
Kahler metric transforms as $\det(\tilde{G}) =\det(\frac{\partial z_i}{\partial x_j}) \det(G)$.   A simple calculation 
shows that the invariance of the genus one amplitude $\mathcal{F}^{(1)}$ under the involution 
transformation (\ref{invo2.4}) impose the constrain on the constants $s_1=-30+3s_2$, which is 
consistent with the evaluation of the second Chern class on $J_i$. 

\subsection{Involution symmetry at higher genus}
The higher genus amplitudes are defined recursively from the holomorphic anomaly 
equations~\cite{BCOV}
\begin{equation} 
\label{generalholomorphicanomaly}
\partial_{\bar \imath} F^{(g)}=\frac{1}{2} \bar C^{kl}_{\bar \imath}\left(D_k D_l F^{(g-1)}+\sum_{r=1}^{g-1} D_k F^{(r)} D_l F^{(g-r)}\right)\ .           
\end{equation}
The $F^{(g)}$  can be integrated using an anholomorphic potential $S$ for $\bar C^{jk}_{\bar \imath}$, 
whose existence is  consequence of special geometry or more generally the $tt^*$ geometry~\cite{BCOV}. 
One has 
\be 
\partial_{\bar \imath} S=G_{\bar \imath j}S^j, \quad \partial_{\bar \imath} S^j=
G_{\bar \imath k} S^{jk},\qquad \partial_{\bar \imath}  S^{ij}= \bar C^{ij}_{\bar \imath}\ . 
\ee                
The $F^{(g)}$ were  obtained in~\cite{BCOV} either using the special geometry commutator 
$[D_i,\partial_{\bar \jmath}]_k^l=G_{k\bar \jmath}\delta_i^l+G_{i\bar \jmath}\delta_k^l 
-C_{ikn}\bar C_{\bar \jmath}^{nl}$ and partial integration or a Feynman graph expansion 
of a master functional, which fulfills a heat equation type of equation. The most vantage 
point of view is to find directly  a solution for this master integral, which we achieve at 
least iteratively in the base modulus.  

\subsubsection{The propagators and rigid special geometry}  
\label{ridigspecialgeometry}

It was shown in~\cite{Yamaguchi} as a consequence of special geometry 
that the covariant derivatives $D_i$ in (\ref{generalholomorphicanomaly}) close 
on anholomorphic generators, that are closely related to the propagators, up to 
rational functions of ${\underline z}$. We call the corresponding ring generated 
by these  anholomorphic generators over the rational functions 
in ${\underline z}$ the BCOV ring.    

This principal structure of the BCOV ring can be understood by the analogy 
to the ring of almost holomorphic forms $\bigoplus_{k}{\widehat {\cal M}}_{k}(\Gamma_0)=
\mathbb{C}[\hat E_2, E_4,E_6]$ of $\Gamma_0={\rm SL}(2,\mathbb{Z})$ 
(or congruent subgroups thereof) that is graded by the weight $k$.
Let us denote the modular transformation $\tau\mapsto \tau_\gamma=\frac{a \tau+ b}{c\tau +d}$ 
with $\left(\begin{array}{cc} a & b\\ c& d\end{array}\right)\in {\rm SL}(2,\mathbb{Z})$. 
For $k\in 2\, \mathbb{N}_+$  the normalized Eisenstein series $E_k$  of modular weight are 
defined as~\cite{Zagier}
\be 
E_k(\tau)=\frac{1}{2\zeta(k)}\sum_{m,n\in \mathbb{Z}\atop (m,n)\neq (0,0)}\frac{1}{(m \tau + n)^k}=
1+\frac{(2 \pi i )^k}{(k-1)! \zeta(k)}\sum_{n=1}^\infty \sigma_{k-1}(n) q^n\ ,     
\label{Eisenstein}
\ee
where the last equal sign holds straightforwardly only for $k>2$. 
Here $\sigma_k(n)$ is the sum of the $k$-th power of the positive 
divisors of $n$ and $\zeta(k)=\sum_{r\geq 0}\frac{1}{r^k}$ with    
\be 
\zeta(2k)=-\frac{(2 \pi i)^{2k}B_{2k}}{4k(2k-1)!}\ .
\label{evenzeta}
\ee
The Bernoulli numbers $B_k$  $k\neq 1$ are defined by the 
generating function 
\be 
\sum_{m=0}^\infty \frac{B_m x^m}{m!}:=\frac{x}{e^x-1}\ . 
\label{Bernoulli}
\ee
If the sum in (\ref{Eisenstein}) converges $k>2$ it 
is obvious that $E_{k}(\tau_\gamma)=(c \tau + d)^{k} E_k(\tau)$
transforms as a weight $k$ form under modular transformations. 
For $k=2$ the second equal sign in (\ref{Eisenstein}) can be viewed 
as a regularization of the sum on the right~\cite{Zagier}. In this case the 
modular transformation is broken to 
\be 
E_2(\tau_\gamma)=( c \tau+ d)^{2}E_2(\tau)-\frac{6 i}{\pi} c(c\tau+d)\ .  
\label{quasimodularshift} 
\ee
Since $\frac{1}{{\rm Im}(\tau_\gamma)}=\frac{(c\tau+ d)^2}{{\rm Im}(\tau)}- 2 i c (c\tau+d) =\frac{|c\tau+ d|^2}{{\rm Im}(\tau)}$ 
one can define an almost holomorphic object
\be 
\hat E_2(\tau)=E_2(\tau)-\frac{3}{\pi {\rm Im}(\tau)},
\label{almostholomorphic}  
\ee
which transforms as a modular form of weight $2$.     

In the solution of the holomorphic anomaly equation $\hat E_2(\tau)$  
plays the r\^ole of the propagator  and the Maass derivative 
$D_\tau:\hat {\cal M}_{k}\rightarrow \hat  {\cal M}_{k+2}$
\be 
D_{\tau} f_k =\left(d_\tau-\frac{k}{4 \pi {\rm Im}(\tau)}\right)f_k     
\ee
on weight $k$ forms, with $d_\tau=\frac{1}{2 \pi i}\frac{\rm d}{{\rm d}\tau}$, 
plays the role of the covariant derivative $D_i$. The an-holomorphic 
version of the Ramanujan identities 
\be
\label{ramanujan} 
D_{\tau} \hat E_2=\frac{1}{12}(\hat E^2-E_4),\quad 
D_{\tau} \hat E_4=\frac{1}{3}(\hat E_2 E_4-E_6),\quad
D_{\tau} \hat E_6=\frac{1}{2}(\hat E_2 E_6-E_4^2),
\ee 
show that it closes on $\bigoplus_{k}{\widehat {\cal M}}_{k}(\Gamma_0)$. 
Note that by the isomorphy  (\cite{KZ} Prop. 1)  
of the ring $\bigoplus_{k}{\widehat {\cal M}}_{k}(\Gamma_0)$  and the one quasi 
modular forms $\bigoplus_{k}{\tilde {\cal M}}_{k}(\Gamma_0)= \mathbb{C}[E_2, E_4,E_6]$ 
one can take the holomorphic limit in (\ref{ramanujan}) by replacing 
$\hat E_2\rightarrow E_2$ and $ D_{\tau}\rightarrow  d_\tau$ without 
losing information about the ring structure. 

The relation between the BCOV ring of the  $X_{18}(1,1,1,6,9)$ Calabi-Yau manifold 
and the ring of almost holomorphic forms is more then an analogy.  In the limit of large 
fibre~\ref{locallimit} the special K\"ahler structure of supergravity 
reduces to local rigid special geometry and the BCOV ring of $X_{18}(1,1,1,6,9)$ 
reduces to $\bigoplus_{k}{\widehat {\cal M}}_{k}(\Gamma_0(3))=
\mathbb{C}[\hat E_2, G_2,G_4,G_6]$~\cite{Aganagic:2006wq}. 

In the limit of large base the relevant ring becomes $\bigoplus_{k}{\widehat {\cal M}}_{k}(\Gamma_0)$
and the structure of the holomorphic anomaly equation can be be even 
more neatly combined into the ring of weak Jacobi forms, which include all 
powers of the topological string coupling as we will discuss in 
section~\ref{fibremodularity}. 

\subsubsection{Projective special K\"ahler manifolds}
\label{projective} 

The form (\ref{FK3T2}) suggest that the all  genus 
amplitude in $K3\times T2$ is related to a modular form 
associated to a degenerate Riemann surface of genus $3$. 
In this section we recall some structure of projective special 
K\"ahler manifolds, see~\cite{FVP} for a modern review, that relates 
for N=2 compactifications of type II string the complex structure 
moduli of Calabi-Yau 3 folds to a family of Riemann surfaces of genus $h_{21}+1$. 

Of course we could first address the simpler question, how 
the objects  of rigid special K\"ahler geometry described in 
the last section extend to the moduli space of Riemann surfaces of higher genus. 
However this is done in~\cite{KPSW}, with the result that a 
holomorphic analog of $E_2$ does not exist, but a meromorphic one  
does.       

In the formalism of projective special K\"ahler manifolds the 
formal description of the formulae for the propagators and the closing of the 
covariant derivatives on the propagators, i.e the analogs of the 
relations~(\ref{ramanujan},\ref{propa2.6}), as well as the recursive integration w.r.t. to the anholomorphic generators 
all simplify relative to the formalism~\cite{Yamaguchi} and resemble closely the 
formalism in the rigid case~\cite{GKMW}. 

However the price one has to pay is very complicated anholomorphic 
embedding of the complex moduli space of the Calabi-Yau manifold including 
the string coupling $\lambda$, that is related to the projective scaling, into the 
Siegel upper half plane $\mathbb{H}_{h_{21}+1}$ of Riemann surfaces of 
genus $h_{21}+1$, which is given by~\cite{de Wit:1984pk}  
\be
\label{sugramap} 
{\cal N}_{IJ}=\bar F_{IJ} + 2 i \frac{ {\rm Im}(F_{IL}) X^L  {\rm Im}(F_{JK}) X^K}{{\rm Im}(F_{KL}) X^K X^L}\ , 
\ee
see \cite{Louis:1996ya} for a concise review~\footnote{We would like to thank Jan Louis for a guidance to the literature.}. 
Here upper case letter refer to the projective coordinates of the complex 
moduli space given by a set of  $A$-cycle periods $X^I({\underline{z}})$, $I=0,\ldots, h_{21}(W)$ 
in the integral symplectic basis discussed in section~\ref{symplecticbasis}.  
$F_{IJ}({\underline{z}})=\frac{\partial^2 F}{\partial X^I \partial X^J}$ with $F({\underline{z}})=(X^0)^2 {\cal F}^{(0)}$ 
the prepotential in homogeneous coordinates. The positivity, or rather negativity 
in the supergravity conventions ${\rm Im} ({\cal N}_{IJ})<0$, is discussed 
in~\cite{VanProeyen:1995sw}. It is physically enforced by the condition that 
kinetic terms in vector multiplets and the gauge kinetic terms 
${\cal L}=-\frac{1}{4}g^{-2}_{IJ}F^I_{nm}F^{J\ {nm}}-\frac{\theta_{IJ}}{32 \pi^2}F^I\tilde F^J+\ldots$ 
of the graviphoton $F^0_{nm}$ and the vector bosons $F^i_{nm}$, $i=1,\ldots, h_{21}$ in vector 
multiplets with  $g^{-2}_{IJ}=\frac{i}{4}( {\cal N}_{IJ} -\bar {\cal N}_{IJ})$ 
and $\theta_{IJ}=2 \pi^2 ( {\cal N}_{IJ} +\bar {\cal N}_{IJ})$  must be 
positive definite.  

Note that ${\cal N}_{IJ}$ transforms under real or integer 
symplectic transformations, w.r.t. to  (\ref{eta})
\be 
\left(\begin{array}{c} X'^A\\ F'_A\end{array}\right)= \left(\begin{array}{cc} D&B\\ B&A\end{array}\right) \left(\begin{array}{c} X^A\\ F_A\end{array}\right), 
\ee
with $A^TC=C^TA$, $B^TD=D^TB$ and $A^TD-C^TB=1$, such as the monodromies discussed 
in section \ref{monodromiesversusinvolution}, in the canonical way
\be
{\cal N}'=(A{\cal N}+B) (C{\cal N}+D)^{-1} \ .
\ee
This suggest that the special monodromy family of Riemann surface of genus 
$3$ and a better understanding of the map (\ref{sugramap}) might be 
essential to understand whether the formulae (\ref{expansion},\ref{ansatz})  
come from an underlying modular object as in the $N=4$ case.

\subsubsection{Involution symmetry on the propagators}  
\label{involutiononthepropagators}

For the concrete calculation we follow~\cite{Yamaguchi} and work in the in-homogenous 
coordinates and use the BCOV propagators with a shift by derivatives of the 
K\"ahler potential as the generators~\cite{Alim:2007}. We denote the propagators 
after this shift (\ref{shift}) as $S^{ij},S^i,S$. The definition of the 
propagators involves a choice of rational functions in the moduli $\underline{z}$~\cite{BCOV,Klemm:2004km}, 
which is not unique. It has to be accompanied by a consistent choice of rational functions 
of $\underline{z}$ in the closing relations~(\ref{propa2.6})~\cite{Alim:2007}  that generalize~(\ref{ramanujan}). 
The derivation these rational functions  for non-hypergeometric, but more interesting Apery 
like Picard-Fuchs equations that arise in non-abelian gauged linear $\sigma$ models 
requires a minimal choice that uses the properties of differential equations~\cite{JKM}. For the 
$X_{18}(1,1,1,6,9)$ model, we summarize two canonical choices  and the derivation 
of the action under the involution symmetry in the appendix~\ref{propagators}.

Let us now state the action of the involution symmetry on the propagators. Suppose 
$f(z_1,z_2)$ is a rational function of $z_1, z_2$, we denote with the tilde symbol $\tilde{f} = 
f(x_1,x_2)$, the transformed function by replacing the $z_i$'s in the arguments with $x_i$'s in the 
involution (\ref{invo2.4}). The $\tilde{f}$ can be considered as a function of 
either $x_i$'s or $z_i$'s using the transformation (\ref{invo2.4}). For example under  
the involution, the three point functions (\ref{3point2.3}) transform as a tensor except for a 
minus  sign due to (\ref{involutionFg})
\begin{eqnarray} \label{3ptrans2.5}
\tilde{C}_{i j k} = - \frac{\partial z_l} {\partial x_i} \frac{\partial z_m}  {\partial x_j} \frac{\partial z_n}  {\partial x_k}C_{lmn}.      
\end{eqnarray} 

We find in~\ref{propagators} that the action of the involution on the propagators has the 
following form  
\begin{eqnarray} \label{transprop2.9}
\tilde{S}^{ij} = - \frac{\partial x_i} {\partial z_k}  \frac{\partial x_j} {\partial z_m} S^{km}, ~~~ 
\tilde{S}^{i} = - \frac{\partial x_i} {\partial z_k}  S^{k} +f^i, ~~~ 
\tilde{S} = -  S +f^0,
\end{eqnarray} 
The precise form of the rational functions $f_i,f_0$ depends on the choice of the 
rational functions in the choice of the propagator as well as in~(\ref{propa2.6}).  
Fixing this choice as in (\ref{holoambi2.7})~\cite{Alim:2007} we find the solution
\begin{eqnarray} 
\label{shift2.11}
f^1 &=&  \frac{5 z_1 z_2(1 - 1296 z_1  + 559872 z_1^2 )}{12 (1 - 432 z_1)^2}, \nonumber \\
f^2 &=& \frac{3732480 z_1^3 z_2 [1 - 1296 z_1 + 559872 z_1^2 -  80621568 z_1^3 (1 + 27 z_2) ]}{(1 - 432 z_1)^6}\nonumber \\ 
f^0 &=& 5 [45 z_2 -67 + 1296 z_1 (67 - 135 z_2) -  559872 z_1^2 (67 - 315 z_2) \nonumber \\
&& + 80621568 z_1^3 (67 - 630 z_2)]/(23328 (1 - 432 z_1)^3)\ . 
\end{eqnarray}

\subsubsection{Involution symmetry on the higher genus amplitudes}

The topological string amplitudes transform under the involution symmetry (\ref{invo2.4}) 
by replacing the propagators with the transformed propagators (\ref{transprop2.9} and the 
complex structure coordinates $z_i$'s with $x_i$'s.  We would like to understand how the 
transformed amplitude is related to the original amplitude and how many constraints 
the symmetry implies.  

At higher genus $g\geq 2$, the BCOV holomorphic anomaly equation \cite{BCOV} 
can be written as partial derivatives with respect to the an-holomorphic propagators. 
Assuming the algebraic independence of $K_i$'s, we can write the partial derivatives in 
terms of lower genus amplitudes. Here the $K_i$ appear as in the covariant derivative 
of higher genus amplitude $D_i \mathcal{F}^{(g)} = \partial_ i \mathcal{F}^{(g)}  -(2g-2) K_i \mathcal{F}^{(g)}$. 
We need to be a little careful for the genus one case, as the $K_i$ already appear at the ordinary derivative
\begin{eqnarray} \label{F1i2.13}
\partial_i \mathcal{F}^{(1)} = \frac{1}{2} C_{ijk} S^{jk} - (\frac{\chi}{24}-1) K_i - \frac{1}{2} s^j_{ij} - 
\partial_i ( \frac{1}{24} \sum_{k=1}^2 s_k \log z_k +\frac{1}{12} \sum_{a=1}^2 \log \Delta_a), 
\end{eqnarray} 
where we have computed the derivative of Kahler metric in terms of Christoffel connections and 
use the first equation in (\ref{propa2.6}). The equations for partial derivative  are 
\begin{eqnarray} \label{partial2.14}
\frac{\partial \mathcal{F}^{(g)} }{\partial S^{ij}} & =& \frac{1}{2} \partial_i(\partial _j^{\prime} \mathcal{F} ^{(g-1)}  ) 
+\frac{1}{2} (C_{ijl }S^{lk} - s^k_{ij} )\partial_k^{\prime}  \mathcal{F} ^{(g-1)}   
+ \frac{1}{2}  (C_{ijk}S^k - h_{ij} ) c_{g-1}    \nonumber \\
&& + \frac{1}{2} \sum_{h=1}^{g-1} \partial_i^{\prime}  \mathcal{F}^{(h)}   \partial_j^{\prime}  \mathcal{F}^{(g- h)},  \nonumber \\
\frac{\partial \mathcal{F}^{(g)} }{\partial S^{i}} & =& (2g-3) \partial_i^{\prime}  \mathcal{F}^{(g-1)} 
+ \sum_{h=1}^{g-1} c_h   \partial_i ^{\prime} \mathcal{F}^{(g- h)}, \nonumber \\
\frac{\partial \mathcal{F}^{(g)} }{\partial S } & =& (2g-3) c_{g-1} + \sum_{h=1}^{g-1} c_h c_{g-h}, 
\end{eqnarray}
where the $c_{g}$ is defined as
\begin{eqnarray}  \label{cg3.36}
c_g &=& \left\{
\begin{array}{cl}
 \frac{\chi}{24} -1,    &   g=1 ;   \\
 (2g-2) \mathcal{F}^{(g)},             &   g>1 .   
\end{array}    
\right.
\end{eqnarray}
We have also used the notation $\partial^{\prime} $ to denote 
\begin{eqnarray}  \label{F1extra3.37}
\partial_i^{\prime}  \mathcal{F}^{(g)}  &=& \left\{
\begin{array}{cl}
 \partial_i \mathcal{F}^{(g)}  +  (\frac{\chi}{24}-1) K_i  ,    &   g=1 ;   \\
 \partial_i \mathcal{F}^{(g)} ,             &   g>1 ,
\end{array}    
\right.
\end{eqnarray}
i.e. on the right hand in (\ref{partial2.14}), we use the formula (\ref{F1i2.13}) for 
$\partial_i \mathcal{F}^{(1)}$ omitting the $ - (\frac{\chi}{24}-1) K_i $ term.

Since the propagators are symmetric $S^{ij} = S^{ji} $, we can choose to use only the  $S^{ij}$ with $i\leq j$. 
In the case of $i\neq j$, the right hand side of first equation in (\ref{partial2.14}) need to be 
multiplied by an extra factor of 2 to take account of the double contribution.

Let us make some remarks about (\ref{involutionFg}). When 
we combine the amplitude with its involution transformation  
as in (\ref{involutionFg}), we find the dependence on the 
propagators $S^{ij}, S^i, S$ cancels with our proposal for the 
transformed propagators (\ref{transprop2.9}) and the shifts 
(\ref{shift2.11}), which is a 
consistency check of the formulae (\ref{transprop2.9}), (\ref{shift2.11}).

To make an ansatz for the holomorphic ambiguity $f^{(g)}$ at genus $g$, one considers a 
rational function of $z_1, z_2$ with a pole of $(\Delta_1 \Delta_2)^{2g-2}$. 
When we combine  the correct holomorphic ambiguity with its transformation $\tilde{f}^{(g)}+(-1)^gf^{(g)}$ 
as in (\ref{involutionFg}), we find  the degree of the pole  at discriminant $\Delta_1 \Delta_2$ is 
reduced for $g>2$, and the pole completely cancels for the genus $g=2$ case. Meanwhile, a 
new pole at  $\Delta_3 = 1-432z_1$ appears due to the involution transformation 
(\ref{invo2.4})~\footnote{The genus two and three holomorphic ambiguity formulae in 
\cite{Alim:2012} miss a constant $\frac{511}{144}$ and $-\frac{2105053}{1959552}$.}.

The use of the involution symmetry greatly reduce the number of unknown constants in the 
holomorphic ambiguity. In the naive ansatz in~\cite{Alim:2012}, the holomorphic ambiguity 
has the form $ \frac{p^{(g)}(z_1,z_2)}{(\Delta_1\Delta_2)^{2g-2}}$, where $p^{(g)}(z_1,z_2)$ 
is a polynomial of degree $7(g-1)$ in $z_1$ and degree $5(g-1)$  in $z_2$. So naively there 
are $(7g-6)(5g-4)$ unknown constants. For example, at genus 2 and 3, the naive ansatz has 
$48$ and $165$ unknown constants. We find the equation (\ref{involutionFg}) imposes $36$ and $125$  
conditions on the naive ansatz, so we are left with only $12$ and $40$  unknown constants to be 
fixed by boundary conditions. We present an detailed analysis on the reduction of unknowns
in appendix \ref{effectiveboundaries}. Together  with the conifold gap discussed in 
Appendix~\ref{conifoldgap}, this allows to fix the ambiguity to genus 9. 

\section{Fiber modularity}
\label{fibremodularity}

In this section we reformulate the holomorphic anomaly equation 
expressing the fibre modularity~\cite{Klemm:2012} in a convenient 
way that allows us to summarize  the structure in terms of even weak 
Jacobi forms. We exemplify the geometry setting with the elliptic 
fibration over $\mathbb{F}_1$, which exhibits geometric limits 
that yields the BPS numbers the $K3$ fibre, the BPS numbers on an 
embedded $\frac{1}{2} K3$ and maybe most remarkably by blowing 
down the $\frac{1}{2} K3$, the compact elliptic fibration over 
$\mathbb{P}^2$. The  even weak Jacobi forms are hence a unified 
mathematical language to address heterotic type duality~\cite{Marino:1998pg}, 
the six dimensional superconformal theories~\cite{Ganor:1996gu}\cite{Klemm:1996hh}  
and compact elliptic Calabi-Yau spaces.                      

\subsection{The modular anomaly equation}  

In the simplest case the base $B_{n-1}$ of the elliptic fibration is Fano 
(or almost Fano)~\footnote{To construct already many concrete examples we may 
assume that $B_{n-1}$ is a toric variety $\mathbb{P}_{\Delta^{*B}_{n-1}}$, i.e. 
specified by an $n-1$ dimensional reflexive polyhedron $\Delta^{*B}_{n-1}$.}, 
the degenerate elliptic fibers are only of Kodaira type $I_1$ and the 
elliptic fibre has only one holomorphic section and no rational sections. 
In the language of $F$-theory  the complex structure is so generic 
that the 12-2n dimensional theory is completely higgsed. 

The cohomology and the C.T.C Wall data of $M_3$ is then fixed by 
the cohomology of the base and the fibre type~\cite{Klemm:2012}. 
One key feature of topological string amplitudes that comes from the 
described  geometrical  setting is the $\Gamma_F\in {\rm SL}(2,\mathbb{Z})$  
modularity of the amplitudes. Here $\Gamma_F$  is the modular group 
acting projectively on the fibre modulus. In particular for one 
holomorphic section $\Gamma_F$ is the full\footnote{It gets restricted  to
congruence subgroup  $\Gamma_F \in {\rm SL}(2,\mathbb{Z})$ if the fibration 
has more rational sections or  multi sections.} ${\rm SL}(2,\mathbb{Z})$. 
In this case the $b_{2}(M_3)$ K\"ahler classes of $M_3$ split in 
$b_{2}(B_{n-1})$ and one class, which corresponds to the 
elliptic fibre. Let $\tilde t_i$ be the complexified 
volumes of the curve classes $[C]_i$, $i=1,\ldots, b_{2}(B_2)$ 
in the base and $\tau$ the complexified volume of the elliptic fibre. 
One defines with $q=\exp(2 \pi i \tau)$ a parameter, which becomes 
exponentially small, when the volume ${\rm Im}(\tau)$ of 
the elliptic fibre becomes large and  similarly $\tilde Q^\beta \equiv 
\prod_{\alpha=1}^{h_2(B_n)} \exp( 2 \pi i \tilde t^\alpha \beta_\alpha)$, 
with $\beta\in H_{2}(B_n,\mathbb{Z})$. Let $a^i$ are the intersections 
of the curves $[C]_i$ with the canonical class $K_B$ of the base.  
E.g.\ for $B_2$ a $\mathbb{P}^2$ one has $a^1=3$, while for $B_2$ the 
Hirzebruch surfaces $\mathbb{F}_n$ one has $(a^1,a^2)=(2,2-n)$ etc, 
see~\cite{Klemm:2012} for the other toric bases. In order to 
make the ${\rm SL}(2,\mathbb{Z})$ duality  in the fibre direction  
more manifest it is convenient to redefine  
\be
t^i=\tilde t^i+ \frac{\tau a^i}{2}\ , 
\label{redefinitionti}
\ee
so that the base moduli are summarized in 
$Q^\beta \equiv \prod_{\alpha=1}^{h_2(B_n)} 
\exp( 2 \pi i t^\alpha \beta_\alpha)$. 

We can write an expansion of the disconnected 
topological string amplitudes also known as the free 
energy in the large volume limit 
\be 
F(\tau,\lambda,Q)=class({\underline t})+\sum_{g=0}^\infty \lambda^{2 g-2} F^{(g)}(\tau,Q)\ . 
\label{freeenenergy}
\ee
The classical terms 
\be 
class({\underline t})= \lambda^{-2} c_3({\underline t})+ c_1({\underline t})
\label{class}
\ee
are at genus zero a cubic polynomial $c_3({\underline t})$ in the K\"ahler parameter $t$ that can be 
read from (\ref{prepot}) without the constant term. Similarly $c_1({\underline t})$ can be read 
from (\ref{F1}). Constant terms are proportional to the Euler number and are included via the 
$0$ in the sum over $H_{2}(B_n,\mathbb{Z})$ in (\ref{freeenenergybase}). 
This a formal expansion in the string coupling $\lambda$, but each 
genus $g$ amplitude $F^{(g)}(\tau,Q)$  has a finite radius of convergence 
in the $(q,Q)$ parameters, determined by the discriminant of the Picard-Fuchs equations.   
We decompose  the genus $g$ amplitudes in fibre $q$ and base moduli $Q$ as   
\be
F^{(g)}(\tau,Q)=\sum_{\beta\in H_{2}(B_n,\mathbb{Z})} F^{(g)}_\beta(\tau) Q^\beta\ .
\label{freeenenergybase}
\ee  
 
There are two closely related properties partially characterizing the $F^{(g)}_\beta(q)$.
\begin{itemize}
\item The  $F^{(g)}_\beta(\tau)$ are, up to a potential sign 
transformation, meromorphic modular forms in $q$ of weight $2g-2$. 
More precisely the $F^{(g)}_\beta(\tau)$ can be decomposed into
\be
 F^{(g)}_\beta(\tau)=Z^{{\rm osc}}_\beta(\tau) P^{(g)}_\beta(\tau)\ .
 \label{defPgb}
\ee 
The first factor $Z^{{\rm osc}}_\beta(\tau)$ the partition function 
of 
\be 
n(\beta)=12 \sum_i\beta_i a^i  
\ee 
bosons, i.e. given by $Z_{\beta}^{osc}=\frac{1}{\eta^{n(\beta)}}$. Here $i=1,\ldots, h_2(B_{n-1})$. 
Since $n(\beta)/12$ can be an odd integer (\ref{defPgb}) can take an additional 
sign under modular transformations.   

The second factor $P^{(g)}_\beta(\tau)$ is a polynomial in the generators  
of almost holomorphic modular forms with respect to the subgroup $\Gamma_F$ of 
${\rm SL}(2,\mathbb{Z})$. In particular for our main examples with the $X_6(1,2,3)$ elliptic fibre type 
$\Gamma_F={\rm SL}(2,\mathbb{Z})$ and the generators are given by almost holomorphic 
weight $2$ Eisenstein series $\hat E_2$  and the holomorphic Eisenstein series $E_4$ and $E_6$. 
The $w$ weight and hence the degree of the polynomial $P^{(g)}_\beta(q)$ grow linearly 
with $g$ and $\beta$ as 
\be
w=2g+6 \sum_i \beta_i a^i -2\ .
\ee 
      
\item They satisfy a holomorphic anomaly equation, which for elliptic fibrations 
one section and only $I_1$ fibers is completely determined by topological data of 
the base~\cite{Klemm:2012} 
\begin{equation} 
\frac{\partial F^{(g)}_{\beta}(\tau)}{\partial \hat E_2} =  
\frac{(-1)^{n+1}}{24}\left( \sum_{h=0}^g 
\sum_{\beta' + \beta''=\beta}  \left(\beta'\cdot\beta''\right) F^{(h)}_{\beta'} \, F^{(g-h)}_{\beta''}+ 
\beta\cdot(\beta-K_B) \,  F^{(g-1)}_{\beta}\right)\ . 
\label{anomaly}  
\end{equation}  
In particular $n$ is the dimension of the base\footnote{Since only one dimension was considered the  
alternating factor was absorbed in the definitions in~\cite{Klemm:2012}. Here we exhibit it to specialize 
to from the threefold to the surface case.}, 
$K_B$ is the canonical class of the base and the dot $\cdot$ is the intersection form on the 
base. $Z^{osc}_\beta$ can be factored out and (\ref{anomaly}) holds equivalently for $P^{(g)}_\beta$.      
Clearly (\ref{anomaly}) can determine the $P^{(g)}_\beta$ only up to a polynomial $P^{(g),hol}_\beta$, 
which depends only on the holomorphic  generators of $\Gamma_F$, i.e.\ in particular not on $\hat E_2$, 
and is called the holomorphic or modular ambiguity.  
\item We define a similar split expansion in fibre and base degree for the 
partition function $Z=\exp(F(\tau,\lambda,Q))$
\be 
\label{generating4.204}
Z=1+\sum_{\beta \in H_2(B_2,\mathbb{Z})} Z_\beta (\tau,\lambda) Q^\beta\ . 
\ee 
We identify now $\lambda=z$ with the elliptic parameter of the weak 
Jacobi forms\footnote{We also use the standard arguments of these forms $(\tau,z)$.}   
and assign modular weight $-1$ to it so that $Z$ and 
$Z_\beta (\tau,\lambda)$ have modular weight zero. From 
(\ref{anomaly}) follows 
\be 
\label{anomalyindex}
\left(\partial_{\hat E_2}+
\frac{\beta\cdot(\beta- K_B)}{24} z^2\right) Z_\beta(\tau,z)=0       
\ee
Basic properties of the weak Jacobi forms are summarized in 
the next section, see equation (\ref{anomalyweak}), yield then 
that the index of $Z_\beta(\tau,z)$ is $\frac{\beta\cdot(\beta- K_B)}{2}$.

\end{itemize}  

As a corollary to the holomorphic anomaly equation  (\ref{anomaly}) we 
have the following. If the base has itself a fibration $B_{n-2}\cdot F=1$ and $F^2=0$, 
we can decompactify the fibre and consider the topological string on the local 
geometry of a rational fibration over the $B_{n-2}$. 

\subsection{The Hirzebruch surface $\mathbb{F}_1$ as base} 

In this section we present the base  $\mathbb{F}_{\alpha}=\mathbb{F}_1$ as an explicit example and discuss the 
relation between the holomorphic anomaly equation in various dimensions. Our base has 
two divisor classes a minus $-\alpha$ curve which we call $S$  and the fibre class $F$. The canonical 
class of the base is $K_B=2S+(2+\alpha)F$. One has the intersections
$S^2=-\alpha$, $FS=1$ and $F^2=0$. We denote $\beta=m F+ n S$ and a short form 
$F^{(g)}_{mF + nS}=F^{(g)}_{m,n}$ so that our anomaly formula (\ref{anomaly}) reads 
\begin{equation} 
\begin{array}{rl} 
\displaystyle{\frac{\partial F^{(g)}_{m,n}(q)}{\partial \hat E_2}} =  
-&\displaystyle{\frac{1}{24}\biggl( \sum_{h=0}^g \sum_{k=0}^m \sum_{l=0}^n [k(n-l)+l(m-k)- \alpha l(n-l) ] F^{(h)}_{k,l} F^{(g-h)}_{m-k,n-l}+}\\ &  
\displaystyle{[2m(n-1)-n(\alpha (n-1)+2) ] F^{(g-1)}_{m,n}\biggr)}\ .
\end{array}
\label{anomalyF1} 
\end{equation} 
This formula has interesting specialization, e.g. for the $\alpha=1$ case we get 
\begin{itemize}
\item In the $n=0$ case corresponds to the corollary mentioned above. The two dimensional 
surface is a specially polarized $K3$. In this case we get a very simple recursion in the genus 
\begin{equation} 
\frac{\partial F^{(g)}_{m}(q)}{\partial \hat E_2} =  -\frac{m}{12} F^{(g-1)}_{m}\ ,
\end{equation}
but no recursion in the base degree. An all genus result can be obtained for all $m$ using the heterotic 
one loop result.   
\item The $m=0$ case corresponds likewise to the corollary. The rational elliptic surface is the 
$\frac{1}{2}K3$. The solution has been discussed first in~\cite{Klemm:1996hh} and in the 
context of the (refined) holomorphic anomaly equation in~\cite{Huang:2013yta,Haghighat:2014,Cai:2014}.  
A quiver description has been found in~\cite{Kim:2014}. 
\item $m=n$ corresponds to the blow down of the Hirzebruch surface $\mathbb{F}_1$ to 
$\mathbb{P}^2$. For this geometry we get hence 
\begin{equation} 
\label{anomalyP2} 
\frac{\partial F^{(g)}_{m}(q)}{\partial \hat E_2} =  -\frac{1}{24}\biggl( \sum_{h=0}^g \sum_{k=0}^m  [k(m-k)] F^{(h)}_{k} F^{(g-h)}_{m-k}
+[m(m-3) ] F^{(g-1)}_{m}\biggr)\ .
 \end{equation} 
The solution will be discussed in section~\ref{exactbase}.    
\end{itemize}

\subsection{The ring of weak Jacobi forms}
\label{ringofweakjacobiforms}
Jacobi forms $\varphi:\mathbb{H}\times \mathbb{C}\rightarrow \mathbb{C}$ depend on a modular 
parameter $\tau\in \mathbb{H}$ and an elliptic parameter $z\in \mathbb{C}$. They transform 
under the modular group~\cite{EZ}
\be
\tau \mapsto \tau_\gamma=\frac{a \tau+ b}{c \tau + d}, \quad z \mapsto z_\gamma=\frac{z}{c \tau + d}\quad  {\rm with} \quad  
\left(\begin{array}{cc}a&b\\ a& c \end{array}\right) \in {\rm SL}(2; \mathbb{Z})  
\ee 
as  
\be
\varphi\left(\tau_\gamma, z_\gamma\right)= (c \tau + d)^k e^{\frac{2 \pi i m c z^2}{c \tau + d}} \varphi(\tau,z)        
\label{mod}
\ee
and under quasi periodicity in the elliptic parameter as
\be
\varphi(\tau,z +\lambda \tau+ \mu)=e^{- 2 \pi i m (\lambda^2 \tau+ 2 \lambda z)}\varphi(\tau,z), \quad \forall \quad \lambda, \mu \in \mathbb{Z} \ .
\label{shiftJacobi} 
\ee
Here $k\in \mathbb{Z}$ is called the {\sl weight} and $m\in \mathbb{Z}_{>0}$ is called the {\sl index} of the Jacobi form. 

The Jacobi forms have a Fourier expansion
\be 
\phi(\tau,z)=\sum_{n,r} c(n,r) q^n y^r, \qquad {\rm where} \ \ q=e^{2 \pi i \tau},\ \  y=e^{2 \pi i z}
\ee
Because of the translation symmetry one has $c(n,r)=:C(4 n m-r^2,r)$, which depends 
on $r$  only modulo  $2m$. For a holomorphic Jacobi form $c(n,r)=0$ unless $4 mn \ge r^2$, 
for cusp forms $c(n,r)=0$ unless $4 mn > r^2$, while for weak Jacobi forms one has only the 
condition $c(n,r)=0$ unless $n\ge 0$. 

According to~\cite{EZ}~\footnote{A review of the theory can be found in~\cite{Dabholkar:2012}. 
We try to follow the notation used there.}, a weak Jacobi form of given index $m$ and even modular weight $k$ is freely 
generated over the ring of modular forms of level one, i.e. polynomials in $R=E_4(\tau)$,  $P=E_6(\tau)$, 
$A=\phi_{0,1} (\tau,z)$, $B=\phi_{-2,1} (\tau,z)$ as
\be 
J^{weak}_{k,m}=\bigoplus_{j=0}^m M_{k+2j} ({\rm SL}(2,\mathbb{Z})) \varphi^j_{-2,1} \varphi_{0,1}^{m-j}\ . 
\ee
We summarize the weights and index of some important forms in the table 1. 
\begin{table} [!ht]
\label{weightsandindices}
\begin{center}{
\begin{tabular} {c|cccccc|}  
            & $Q=E_4$     & $R=E_6$     &   $A=\phi_{-2,1}$     & $B=\phi_{0,1}$  &  $\varphi_{d_B}$ &  $Z_{d_B}(\tau, z)$ \\ \hline 
weight\ \ k:& 4           & 6           & -2                     & 0 &  $ 16 d_B$                        &     0  \\  
index \ \ m:& 0           & 0           & 1                     &  1 &  $\frac{1}{3}d_B(d_B-1)(d_B+4)$   &   $\frac{d_B(d_B-3)}{2}$    \\ 
\end{tabular}}
\caption{$E_4$,  $E_6$,  $\phi_{0,1}$, $\phi_{-2,1}$ are generators of the ring of weak Jacobi forms with even weights, 
$\varphi_{d_B}(\tau,z)$ captures the all genus amplitudes for the $X_{18}(1,1,1,6,9)$ CY 3-fold and  
$Z_{d_B}(\tau, z)$ is the ratio of weak Jacobi forms.}
\end{center} 
\end{table}
The first generators are defined in (\ref{Eisenstein}). Our conventions for the elliptic theta function and weak Jacobi forms $A$ and $B$ 
are the followings\footnote{Our conventions for the  $\theta$ functions associated to the spin structure on the torus are  
\be 
\Theta\left[a \atop b\right](\tau,z)=\sum_{n\in \mathbb{Z}} e^{\pi i (n + a)^2 \tau + 2 \pi i z (n+a) + 2 \pi i b}\  
\ee
and the  Jacobi theta functions $\theta_1=i\Theta\left[\frac{1}{2} \atop \frac{1}{2}\right]$, 
$\theta_2=\Theta\left[\frac{1}{2} \atop 0\right]$, $\theta_3=\Theta\left[0 \atop 0\right]$ and 
$\theta_4=\Theta\left[0 \atop \frac{1}{2}  \right]$.}
\begin{eqnarray}  \label{conventions4.205}
\theta_1(\tau,z) &=&  z \cdot \eta(\tau)^3   \exp[ \sum_{k=1}^{\infty} \frac{ B_{2k}}{2k(2k)!} (iz)^{2k} E_{2k} (\tau) ],  \nonumber \\ 
A=\phi_{-2,1} (\tau,z) &=& -\frac{ \theta_1(\tau,z)^2} {\eta^6(\tau)},  \nonumber \\ 
B=\phi_{0,1} (\tau,z) &=& 4[\frac{ \theta_2(\tau,z)^2} {  \theta_2(0, \tau)^2 } +\frac{ \theta_3(\tau,z)^2} {\theta_3(0, \tau)^2 } +\frac{ \theta_4(\tau,z)^2} {  \theta_4(0, \tau)^2 } ]. 
\end{eqnarray}

The weak Jacobi form of index $\phi_{-2,1}$ has simple product form using the 
Jacobi triple product for $\theta_1$ and 
$x=\left(2 \sin\left(\frac{\lambda}{2}\right)\right)^2=-(y^\frac{1}{2}-y^\frac{1}{2})^2$  
\be 
A=- x \prod_{n=1}^\infty \frac{(1-y q^n)^2 (1-y^{-1} q^n)^2}{(1-q^n)^4})\ . 
\ee
and the weight zero index one form is one half of the elliptic genus of the $K3$
\be
\label{ellipticgenusK3}
\chi(K3;q,y)=2 \phi_{0,1}(\tau,z)=
\left(2 y+20+\frac{2}{y}\right)+\left(\frac{20}{y^2}-\frac{128}{y} +216 -128 y+20 y^2\right)q+  {\cal O}(q^2)
\ee

\subsection{Weak Jacobi Forms and holomorphic anomaly equation}
\label{weakjacobiHA} 
There is a simple connection between the ring of quasi modular forms, which 
have ring isomorphism to the ring of almost holomorphic forms\cite{KZ}, 
which are crucial in the solution of the holomorphic anomaly equations in rank 
one Seiberg Witten theories~\cite{Huang:2006si} and local 
Calabi-Yau spaces~\cite{Aganagic:2006wq}. 

In this section we will show by a very simple argument that the master holomorphic anomaly 
equation (\ref{anomalyindex}) for fibre modularity, which reads 
for the main example $X_{18}(1,1,1,6,9)$
\be
\label{anomalyindexP2}
\left(\partial_{\hat E_2}+
\frac{d_B(d_B-3)}{24} z^2\right) Z_{d_B}(\tau,z)=0,        
\ee
is solved by a weak Jacobi Form of index  $m=\frac{d_B(d_B-3)}{2}$. 

Because of (\ref{mod}) and (\ref{quasimodularshift}) given a weak Jacobi form 
$\varphi_{k,m}(\tau,z)$ one can always define modular form of weight $k$ as follows  
\be
\tilde \varphi_{k}(\tau,z)=e^{\frac{\pi^2}{3} m z^2 E_2(\tau)} \varphi_{k,m}(\tau,z) \ .
\label{makemodular}
\ee
It follows that the weak Jacobi forms $\varphi_{k,m}(\tau,z)$ have a Taylor expansion in  
$z$ with coefficients that are quasi-modular forms as~\cite{EZ,Dabholkar:2012}  
\begin{equation}
\varphi_{k,m}=\xi_0(\tau)+\left(\frac{\xi_0(\tau)}{2}+\frac{m \xi'_0(\tau)}{k}\right) (2 \pi i z)^2 + \left( \frac{\xi_2(\tau)}{24} + \frac{m \xi'_1(\tau)}{2 (k+2)}+ \frac{m^2 \xi''_0(\tau)}{2  k(k+1)}\right)
(2 \pi i z)^4+{\cal O}{(z^6)}\ . 
\label{WeakJacobiTaylor}
\end{equation}
Here the $'=d_{\tau}=\frac{\rm d}{2 \pi i {\rm d}\tau}$ and $\xi_\nu\in \tilde {\cal M}_{k+2 \nu}(\Gamma_0)$, i.e. the coefficients of can be expressed as polynomials 
of Eisenstein series $E_2(\tau), E_4(\tau)$ and $E_6(\tau)$. Moreover from (\ref{makemodular}) one has  
\be
\left(\partial_{E_2} + \frac{m z^2}{12}\right)\varphi_{k,m}(\tau,z)=0\ .
\label{anomalyweak} 
\ee
{\sl Prop. 1} of~\cite{KZ}\footnote{See also \tt{http://people.mpim-bonn.mpg.de/zagier/}.} implies the claim (\ref{anomalyindexP2}). Moreover by (\ref{almostholomorphic}) 
we can write this as holomorphic anomaly equation 
\be
\label{hamaster} 
\left(2 \pi i {\rm Im}^2(\tau)\bar \partial_{\bar \tau} - \frac{m z^2}{4}\right)\hat \varphi_{k,m}(\tau,z)=0\ .
\ee

We note as examples of (\ref{WeakJacobiTaylor}) the first coefficients in the expansion of  $\phi_{-2,1} (\tau,z)$ and $\phi_{0,1} (\tau,z)$ 
are  
\begin{eqnarray}
 \phi_{-2,1} (\tau,z) &=& -z^2 + \frac{E_2 z^4}{12} + \frac{-5 E_2^2 + E_4}{1440} z^6 + \frac{35 E_2^3 - 21 E_2 E_4 + 4 E_6}{362880}z^8  + \mathcal{O}(z^{10}), \nonumber \\ 
  \phi_{0,1} (\tau,z) &=& 12 - E_2 z^2 + 
 \frac{E_2^2 + E_4}{24} z^4 + \frac{-5 E_2^3 - 15 E_2 E_4 + 8 E_6}{4320} z^6  + \mathcal{O}(z^8) . 
 \end{eqnarray} 
Therefore $\phi_{-2,1} (\tau,z)$ and $\phi_{0,1} (\tau,z)$ can be thought of as quasi-modular forms and we can see that 
they satisfy the modular anomaly equation  
\begin{eqnarray}  \label{anomaly4.206} 
\partial_{E_2}  \phi_{-2,1} (\tau,z) =-\frac{z^2}{12} \phi_{-2,1} (\tau,z), ~~~~
\partial_{E_2}  \phi_{0,1} (\tau,z) =-\frac{z^2}{12} \phi_{0,1} (\tau,z). 
\end{eqnarray}

Let us finish the section with a comparison of (\ref{anomalyindexP2},\ref{hamaster}) with Witten's wave 
equation for the the topological string partition function that 
reads~\cite{Witten:1993ed}
\be
\label{Witten}  
\left(\frac{\partial}{\partial (t')^{\bar a}} +\frac{i}{2}\lambda^2 C_{\bar a\bar b \bar c} g^{b\bar b} g^{c \bar c} \frac{D}{D t^b} \frac{D}{D t^c}\right) Z(\lambda,\tau,t_B)=0\ .
\ee
If we apply this equation to $Z$ defined in (\ref{generating4.204}) with $(t')^{\bar a}=\bar \tau$ and $Q^\beta=e^{2 \pi i d_B t_B}$, 
we get in the large base because of the special from of the intersection matrix of elliptically fibered Calabi-Yau 3 folds only derivatives in the base 
direction $t_B$  for $t^b$ and $t^c$. Identifying $\lambda$ with $z$ we see already that the index will 
grow quadratically in $d_B$. A more detailed analysis as in \cite{Klemm:2012} shows also the shift by $K_B$ so that 
the large base limit of (\ref{Witten}) becomes equivalent to all equations (\ref{anomalyindex}).

\subsection{Exact formulae for base degree zero} 

For base degree ${\underline d_B}={\underline 0}$, since the BPS numbers $n^{(g)}_{d_E,{\underline 0}}=0$ 
for genus $g\geq 2$, we get from (\ref{schwinger}) the formula 
\begin{equation} 
\label{multicover} 
P^{(g)}_0 =-\chi \frac{B_{2g}}{2 g (2g-2)!} \left( \frac{B_{2g-2}}{2 (2 g-2)} -\sum_{d=1}^\infty {\rm Li}_{3-2g}(q^d)\right),
 ~~~ g\geq 2, 
\end{equation}
where $\chi$ is the Euler number. This formula holds not just
for the cases of elliptic fibrations with only $I_1$ Kodaira 
fibers~\cite{Klemm:2012}, whose toric description is reviewed  
in section \ref{toricellipticcy}, but also for the case discussed 
in~\cite{Haghighat:2014vxa} with higher Kodaira fibre and 
non-abelian gauge symmetries.          

We recognize in the multi-cover formula (\ref{multicover}) 
the well-known formula for Eisenstein series 
\begin{eqnarray}  \label{zero3.177}
P^{(g)}_0 = -  \chi \frac{B_{2g}B_{2g-2} }{8 g (g-1)(2g-2)!}  E_{2g-2}(q), ~~~ g\geq 2.  
\end{eqnarray} 
So we see for $\chi=-540$ that $P^{(2)}_0=-\frac{3}{32} E_2$, and in general 
$P^{(g)}_0$ is a $SL(2,\mathbb{Z})$ modular form of weight $2g-2$ for $g>2$. 

For the $d_B=0$  genus one amplitude for our case, we can compute the 
base degree zero amplitude $P^{(1)}_0$ from the formula (\ref{genusoneamplitude}) and modularity formulas proven 
in~\cite{HKKZ}. Taking $z_2\rightarrow 0$ of theses, we compute the determinant of 
K\"ahler metric as $\det(G) \sim (\partial_{z_1} t_E) (\partial_{z_2} t_B) \sim \frac{E_4(q_E)^{\frac{5}{2}}} {\eta(q_E)^{24}} \frac{1}{z_2},$
with $\partial_{z_1} t_E = \frac{E_4(q_E)^{\frac{5}{2}}} {\eta(q_E)^{24}} + \mathcal{O} (z_2)$. We can hence  
write the genus one amplitude in the $z_2\rightarrow 0$ limit as 
\be
\mathcal{F}^{(1)} =  -\frac{25}{4} \log(E_4) -\frac{1}{2} \log(\frac{E_4(q_E)^{\frac{5}{2}}} {\eta(q_E)^{24}} ) -
\frac{19}{4} \log(z_1) -\frac{3}{2} \log(z_2) -\frac{1}{12} \log(\Delta_1\Delta_2)  + \mathcal{O} (z_2).  
\ee
A careful calculation show that 
\begin{eqnarray} \label{F01basezero}
P^{(1)}_0 = -48 \log(\eta(q_E)) -\frac{3}{2} \log(q_Bq_E^{\frac{3}{2}}). 
\end{eqnarray}

For the $d_B=0$  genus the zero amplitude  $P^{(0)}_0$ has formally weight $-2$. We therefore   
consider the three point coupling which is the well defined observable after fixing the the 
conformal Killing symmetries on the world-sheet. The classical contribution is a cubic polynomial 
of the flat coordinates (\ref{prepot}), which encode the classical triple intersection numbers. 
Similar to genus one formula (\ref{F01basezero}), we should shift the base flat coordinate according 
to (\ref{redefinitionti}) and find 
\begin{eqnarray}
\partial_{t_E}^3 P^{(0)}_0 =\frac{9}{4}E_4(q_E),
\end{eqnarray}
which confirms the effective modular weight of $-2$ since each derivative 
increases modular weight by 2.

\subsection{Exact formulae for higher base degrees} 
\label{exactbase} 
Our main claim is that the all genus partition function for the topological string 
elliptic fibration over  $\mathbb{P}^2$ realized e.g. as degree 18 hypersurface 
in the  weighted projective space is given by 
\be
Z_{d_B}(\tau,z)=\frac{\varphi_{d_B}(\tau,z)}{\eta(\tau)^{36 d_B}\prod_{s=1}^{d_B} \varphi_{-2,1}(\tau,s z)},
\label{formZdB}
\ee
where $\varphi_{d_B}$ is an weak Jacobi form of weight $k=16 d_B$ and 
index $m=\frac{d_B}{3}(d_B-1)(d_B+4)$. Scaling $z$ in $\varphi_{k,m}(\tau,z)\in J_{k,m}$ 
by $s$ corresponds to a Hecke-like operator $U_s:J^*_{k,m}\rightarrow J^*_{k, s^2m}$. The
weights 
and indices of the relevant  weak Jacobi forms are summarized in Table~\ref{weightsandindices}.  
Since 
\be
{\rm dim} M_k({\rm SL}(2,\mathbb{Z}))=\left\{
\begin{array}{rl} \left[k/12\right]+1 & {\rm if} \ k\neq 2 \ {\rm mod} \  12 \\ 
                  \left[ k/12\right] & {\rm if} \ k=2 \ {\rm mod} \  12\\  
\end{array}\right\} 
\ee
the number of coefficients in $\varphi^{ts}_{d_B}$ growth like $\frac{d_B^6}{108}$ 
for large $d_B$. The first few number for $d_b=1,2,\ldots$ are  $2, 17, 84, 278, 737, 
1692, 3501,\ldots$. 

For the case of $d_B=1$ the two coefficients are fixed by two genus $0$  BPS numbers   
\begin{eqnarray}  \label{formulanb1}
\varphi_1 = - \frac{Q (31 Q^3 + 113 P^2)}{48} . 
\end{eqnarray} 
This determines for base degree 1 all genus BPS invariants by 
(\ref{formZdB},\ref{generating4.204},\ref{defPgb}) and the multi-covering 
formula (\ref{schwinger}). Up to $g=6$ we and $d_E=6$ we  list them in the table (\ref{basedegree1table}). 
We notice that they  match all the numbers that have been obtained 
to genus $8$ using the direct integration of the holomorphic anomaly 
condition, the involution symmetry and the conifold gap condition. A 
detailed  discussion on the perfect matching with the accessible 
enumerative invariants can be found in section \ref{sec:geomcurves}     
\begin{table} [h]
\begin{center} {\footnotesize
 \begin{tabular} {|c|c|c|c|c|c|c|c|} \hline $g\backslash d_E$  
   & $d_E=0$ & 1 & 2 & 3 & 4 & 5 & 6 \\  \hline 
$g=0$ &  3& -1080 & 143370& 204071184& 21772947555& 1076518252152& 33381348217290 \\  \hline 
 1 & 0 &     -6&   2142&   -280284& -408993990 &  -44771454090& -2285308753398 \\  \hline 
 2 & 0& 0& 9& -3192& 412965& 614459160& 68590330119\\  \hline 
 3 & 0& 0& 0& -12& 4230& -541440& -820457286\\  \hline 
 4 & 0& 0& 0& 0& 15& -5256& 665745 \\  \hline 
 5 & 0& 0& 0& 0& 0& -18& 6270 \\ \hline 
 6 & 0& 0& 0& 0& 0& 0& 21 \\ \hline 
 \end{tabular}
 }
\caption{Some BPS invariants $n_{(d_E,1)}^{g}$ for base degree $d_B=1$ and $g,d_E\le 6$ 
as determined by (\ref{formulanb1}) for all $g,d_E$.}  
\label{basedegree1table}
 \end{center}
\end{table} 

Note that(\ref{formZdB}) implies Hilbert-Scheme like infinite product formulae for the generating functions. E.g. for 
$d_B=1$ one gets    
\be 
Z_1(\tau,z) =  \left[\frac{1}{2i \sin(\frac{z}{2})\eta(\tau)^{18}}\prod_{n=1}^{\infty} \frac{(1-q^n)^2}{(1-e^{iz}q^n)(1-e^{-iz}q^n)} \right]^2 \varphi_1 \ . 
\label{basedegree1} 
\ee
where $B_{2k}$ are the Bernoulli numbers. This infinite product formula alone  
eliminates the subspace $V_{\pm}^{(1, [\frac{g-1}{3}], 1)}$ in the holomorphic 
ambiguity (\ref{holoX3.94}) and  allows already to topological string 
amplitudes up to  $g\leq 18$ for this model.

For the $d_B=2$ three of $17$ coefficients can be already fixed by 
demanding that there is no pole $z^{-4}=\lambda^{-4}$ in $P_2(\tau,z)$. 
Note that this pole has to be canceled by the $(Z_1)^2$ contribution 
in $P_2(\tau,z)$. This explains the first term in (\ref{formulanb2}).  
The vanishing bound which we will demonstrate in Section~\ref{sec:geomcurves} is 
a Castelnuovo-like criterium, namely  
that 
\be 
n_{d_E,2}^{g}=0,\qquad {\rm for}\ \  d_E \ge 5, \ g \ge 2 d_E-3         
\ee
fixes eleven other constants and implies all vanishing on the 
right from the edge.  The actual nonzero values 
of the BPS numbers on the generic part of the  edge for 
$d_E\ge 5$  
\be 
n_{d_E,2}^{2 d_E-4}=12-6 d_E 
\label{generic}   
\ee
are then calculated geometrically and yield no further constraints. In order 
to fix the remaining three constants one needs  the 
information of any three nonzero numbers in the same row 
or the same column way from the numbers (\ref{generic}). Notice 
that  three numbers $n_{d_E,2}^{2 d_E-5}$, $d_E=5+i$, $i=0,1,2$ 
do yield only one constraint. Random patterns of nonzero 
numbers away (\ref{generic}) will  lead in general to 
independent equations. The result e.g. fixed from three genus 
zero numbers gives 
\begin{eqnarray} \label{formulanb2}
\varphi_2 &=& \frac{B^4 Q^2 \left(31 Q^3+113 R^2\right)^2}{23887872}+ 
\frac{1}{1146617856}[2507892 B^3 A Q^7 R+9070872 B^3 A Q^4 R^3\nonumber \\  && 
+2355828 B^3 A Q R^5+36469 B^2 A^2 Q^9+764613 B^2 A^2 Q^6 R^2-823017 B^2 A^2 Q^3 R^4\nonumber \\  && 
+21935 B^2 A^2 R^6-9004644 B A^3 Q^8 R-30250296 B A^3 Q^5 R^3-6530148 B A^3 Q^2 R^5\nonumber \\  
&& +31 A^4 Q^{10}+5986623 A^4 Q^7 R^2 +19960101 A^4 Q^4 R^4+4908413 A^4 Q R^6]\ , 
\end{eqnarray}  
which predicts the  BPS numbers in all genus and fibre classes for $d_B=2$. 
\begin{table} [h]
\begin{center} {\scriptsize
 \begin{tabular} {|c|c|c|c|c|c|c|c|} \hline $g\backslash d_E$  
   & $d_E=0$ & 1 & 2 & 3 & 4 & 5 & 6 \\  \hline 
$g=0$& 6& 2700& -574560& 74810520& -49933059660& 7772494870800& 31128163315047072 \\  \hline 
 1 & 0& 15& -8574& 2126358& 521856996& 1122213103092& 879831736511916 \\  \hline 
 2 & 0& 0& -36& 20826& -5904756& -47646003780& -80065270602672\\  \hline 
 3 & 0& 0& 0& 66& -45729& 627574428& 3776946955338\\  \hline 
 4 & 0& 0& 0& 0& -132& -453960& -95306132778 \\  \hline 
 5 & 0& 0& 0& 0& 0& -5031& 1028427996\\ \hline 
 6 & 0& 0& 0& 0& 0& -18& -771642 \\ \hline
 7 & 0& 0& 0& 0& 0& 0& -7224 \\ \hline
 8 & 0& 0& 0& 0& 0& 0& -24 \\ \hline
 \end{tabular}}
\caption{Some BPS invariants for  $n_{(d_E,2)}^{g}$}  
\label{basedegree2}
 \end{center}
\end{table}

For $d_B=3$ the vanishing,  given by
\be 
n^{g}_{d_E,3}=0,\qquad {\rm for}\ \  d_E \ge 8, \ g \ge 3 d_E-10         
\ee
fixes $74$ other constants. These conditions are not independent from 
the conditions that eliminate the $z^{-6}$ and $z^{-4}$ poles and implies 
all vanishing on the right from the edge. If we impose the conditions 
successively we get $\{25, 47, 58, 63, 67, 70, 71, 72, 73, \overline {74}\}$ 
conditions for $d_E=8,9,\ldots$. The genus zero and one invariants yield $5$ and $5$ 
conditions fixing all coefficients of $\varphi_3$.

As we will argue in section \ref{sec:geomcurves}  the vanishing conditions for general $d_B$ is 
\be 
n_{d_E,d_B}^{g}=0,\qquad {\rm for}\ \  d_E \ge 3 d_B-1, \ g \ge d_B d_E-\frac{1}{2} (3 d_B^2 - d_B - 4 )\ .       
\ee
Together with the data up to genus eight tabulated in appendix A, this  allows to fix 
$\varphi_4$ with $20$  and $\varphi_5$ with $2$ nontrivial checks respectively. 
The tables of base degree $d_B=4$ and $d_B=5$ to genus $27$ and $41$ respectively  
can be found likewise in the appendix. We appended the expression for 
$\varphi_d$, $d=1,..,5$  in a for algebraic programs readable form to this TeX file. 

One can make further checks on the results  bases on considerations in section (\ref{sec:geomcurves}). 
For examples we see that for each $d_B$ there is a critical value 
\be 
(g^c(d_B)=\frac{1}{2}(3 d_B^2-d_B+2),d^c_E(d_B)=3 d_B-1)
\ee
in the $(g,d_E)$ plane so that the BPS states on 
the line in the $(g,d_E)$ takes the value 
\be 
n^{d_B d_E-g_c(d_B)+2}_{d_E,d_B}= (-1)^\frac{2 d_E d_B+d_B (d_B-1)}{2}  3 ( d_B d_E - (3 d_B^2+ d_B -6)/2)\ {\rm for} \ d_E\geq 3 d_B-1,  
\label{genericvalue}
\ee 
which can be confirmed as well as many other predictions by direct 
curve counting in section~\ref{sec:geomcurves}.

\begin{table} [h]
\begin{center} {\scriptsize
 \begin{tabular} {|c|c|c|c|c|c|c|c|c|c|c|} \hline $g\backslash d_E$  
   & $0$ & 1 & 2 & 3 & 4 & 5 & 6 & 7& 8& 9  \\  \hline 
$0$& 27&-17280&5051970&-91(5)00&22(8)00&-42(10)68&40(12)60&-16(16)20&55(17)80&12(22)00\\  \hline 
1 & -10&4764&-1079298&15(5)86&-16(7)80&-33(9)68& 12(12)88&-55(14)44&10(18)24&37(21)76\\ \hline 
2 & 0&27&-16884&4768830&-81(5)36&28(8)63&67(10)51& 45(13)02&36(16)60&54(19)78 \\  \hline 
3 & 0&0&-72&48036&-14(4)90&29(5)64&-79(9)12& -63(12)64&-94(15)98&-11(19)86\\  \hline 
4 & 0&0&0&154&-110574&38(4)41&21(8)04&50(11)76& 12(15)04&20(18)021\\  \hline 
5 & 0&0&0&0&-306&247014&-25(6)52&-21(10)36& -11(14)06&-30(17)40 \\  \hline 
6 & 0&0&0&0&0&612&1401468&50(8)62&74(12)04& 34(16)43 \\ \hline 
7 & 0&0&0&0&0&0&17386&-49(6)70&-35(11)70& -32(15)78\\ \hline
8 & 0&0&0&0&0&0&63&3396663&11(10)79& 23(14)91 \\ \hline
9 & 0&0&0&0&0&0&0&32418&-21(8)14&-13(13)22  \\ \hline
10 & 0&0&0&0&0&0&0&108&18(6)07&57(11)85             \\ \hline    
11& 0&0&0&0&0&0&0&0&-1151442&-16(10)36          \\ \hline    
12& 0&0&0&0&0&0&0&0&-10917&29(8)80              \\ \hline    
13& 0&0&0&0&0&0&0&0&-36&-24(6)86              \\ \hline    
14& 0&0&0&0&0&0&0&0&0&1458792           \\ \hline    
15& 0&0&0&0&0&0&0&0&0&13770             \\ \hline    
16& 0&0&0&0&0&0&0&0&0&45             \\ \hline    
  \end{tabular}}
\caption{Some BPS invariants for  $n_{(d_E,3)}^{g}$. To save space we only give 
the first and the last two significant digits and the number of omitted digits in brackets.}  
\label{basedegree3}
 \end{center}
\end{table}

\begin{figure}
\center
\includegraphics[width=14cm]{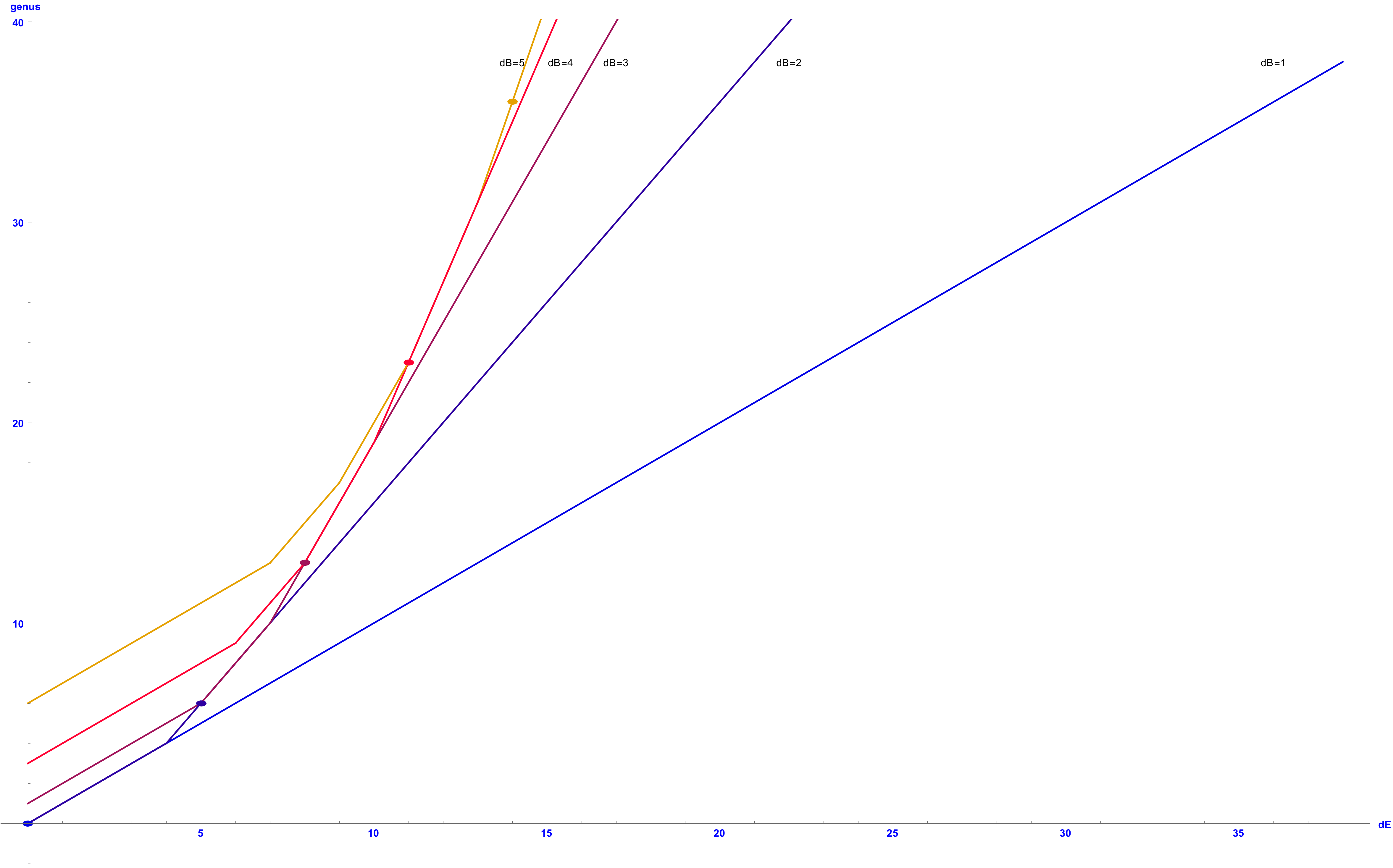}
\caption{The figure shows the boundary of non-vanishing curves  for the  values of 
$d_B=1,2,3,4,5$. The dot on each $d_B$ line at $(g^c(d_B)=\frac{1}{2}(3 d_B^2-d_B+2),d^c_E(d_B)=3 d_B-1)$ 
indicates the value  at which the boundary slope becomes generic and the numbers of 
BPS states on the line are given by (\ref{genericvalue}).}
\label{mon}
\end{figure}

These formulas (\ref{formulanb1}, \ref{formulanb2}) are valid for all genera, so we can consider their implications for 
Gopakumar-Vafa (GV) invariants at large genus. We use the Mathematica program to compute the GV invariants to genus 8, 
and the results are listed in the tables in the Appendix. A well-known feature of the GV invariants from algebraic 
geometric arguments \cite{KKV} is that, for a given 2nd-homology class, 
in our case a pair of fiber and base degrees $(d_E, d_B)$, the  GV invariant $n_{(d_E, d_B)}^g$ vanishes for sufficiently 
large genus $g$. The largest genus with non-vanishing GV invariant is called the top genus for the given degrees $(d_E, d_B)$. 
For example, from the GV tables in the Appendix \ref{appendixA} we can see that the top genus for the degrees $(d_E,1)$ is always $d_E$, 
while the top genus for  $(d_E,2)$ is $d_E$ for the cases of $d_E\leq 4$, and is $6, 8$ for $d_E=5, 6$ respectively. The situation is 
summarized in Figure 1 and the bounds are verified from algebraic geometry in section~\ref{GVfromgeometry}.

We consider whether such vanishing conditions help to fix the unknown coefficients in the ansatz of the weak Jacobi form. 
It turns out for the first case of $d_B=1$, a general ansatz (\ref{formZdB}) ensures the vanishings of $n_{(d_E, 1)}^g=0$ for $g>d_E$, 
therefore this does not help to fix the two coefficients in the numerator in the formula  (\ref{formulanb1}). 
To understand this structure, let us analyze the formula for $\phi_{-2,1} (\tau,z)$ in (\ref{conventions4.205}), 
which can be also written in the infinite product form. For comparing with GV invariants,
 we define the variable $x=(2\sin(\frac{z}{2}))^2$, then we find 
\begin{eqnarray} \label{expan4.212}
\phi_{-2,1} (\tau,z) = -x \prod_{n=1}^{\infty} \frac{(1+ xq^n -2q^n+ q^{2n})^2 }{(1-q^n)^4} .
\end{eqnarray}
So we see that excluding the total factor of $x$, in the power series expansion of the infinite product, 
the power of $q$ is always no less than that of $x$. Using the series expansion $(1+x)^{-2} = 
\sum_{n=0}^\infty (-1)^n(n+1) x^n $, we find the inverse can be written as
\begin{eqnarray} \label{expansion4.213}
\frac{1}{\phi_{-2,1} (\tau,z)} = - \frac{1}{x} \sum_{n=0}^\infty (-1)^n (n+1) q^n [x^n +f_{n-1} (x)] ,
\end{eqnarray}
where $f_{n-1} (x)$ is a polynomial of $x$ of degree $n-1$. Topological string free 
energy in the $d_B=1$ case has no multi-cover contributions from lower degrees, and 
according to the general ansatz (\ref{formZdB}) is simply $\phi_{-2,1} (\tau,z)^{-1}$ multiplying a modular 
form of weight 16 and the factor of $(\frac{q}{\eta(q)^{24})})^{\frac{3}{2}}$, which only 
increase  the $q$ power in the series expansion. Here the factor $q^{\frac{3}{2}}$ comes from the shift of base Kahler parameter such that the partition function has a nice modular form. From the well-known GV expansion of 
the topological free energy, the coefficient of $q^{d_E} x^{g-1}$ is exactly the GV invariant $n_{(d_E,1)}^g$. 
Therefore we see from the expansion (\ref{expansion4.213}) that indeed  the GV invariant $n_{(d_E,1)}^g$ 
vanishes for $g>d_E$. Furthermore, using the formula (\ref{formulanb1}) we can easily calculate the 
top genus numbers $n_{(g,1)}^g = 3(-1)^g (g+1)$. This agrees with the numbers in the tables of 
GV invariants and can be also derived from algebraic geometric arguments in section \ref{sec:geomcurves}.

Now we consider the general cases of higher base degrees. Similar to the case of $\phi_{-2,1} (\tau,z)$ in  (\ref{expan4.212}), we can also expand $ \phi_{0,1} (\tau,z)$ as power series of $x,q$ 
\begin{eqnarray}
\phi_{0,1} (\tau,z) = -x(1-10 x q +x^2q^2) +\sum_{n=0}^{\infty} q^n f_n(x), 
\end{eqnarray}
where $f_n(x)$ symbolizes a degree $n$ polynomial of $x$. Overall the difference of 
the powers of $x$ and $q$ in both the series $\phi_{-2,1} (\tau,z)$ and $\phi_{0,1} (\tau,z)$ is always no larger than 1. 
So we can consider the sub-family of ansatz with only $\phi_{-2,1} (\tau,z) $ in the denominator, namely 
\begin{eqnarray} \label{special3.41} 
 \sum_{k = -1}^ {\frac{d_B(d_B-3) }{2}}  \frac{f_{18d_B +2k }(E_4, E_6)}{\eta(\tau)^{36d_B}}   \phi_{-2,1} (\tau,z)^{k} \phi_{0,1} (\tau,z)^{ \frac{d_B(d_B-3) }{2} -k }, 
\end{eqnarray}
where  $f_{18d_B +2k}(E_4, E_6)  $ symbolizes a modular form of weight $18d_B +2k$, which is a polynomials of $E_4$ and $E_6$ with many unknown coefficients, so that the total modular weight and index are the same as the more general ansatz (\ref{formZdB}) with the 
product $\prod_{k=1}^{d_B} \phi_{-2,1}(\tau, k z)$ in the denominator.  From the power series of 
$\phi_{-2,1} (\tau,z)$ and $\phi_{0,1} (\tau,z)$ in $x,q$ and including the factor of $q^{\frac{3d_B}{2}}$ from the shift of base Kahler parameter, we deduce that the sub-family of ansatz (\ref{special3.41})  
contribute only to the GV invariants $n_{(d_E, d_B)}^g$ with $g\leq d_E+ \frac{d_B(d_B-3) }{2} +1$. 
On the other hand, from the algebra-geometric arguments in section \ref{sec:geomcurves}, the top genus with non-vanishing GV invariants  
for the case of $d_E=0$ is always  $ \frac{d_B(d_B-3) }{2}+1$. For a given $d_B$, the top genus as a function of 
$d_E$ is strictly monotonically increasing, i.e. the top genus for $(d_E+1,d_B)$ is always strictly bigger 
than that of $(d_E,d_B)$. Therefore we deduce that the top genus with non-vanishing GV invariants for 
degrees $(d_E,d_B)$ is always no less than $d_E+ \frac{d_B(d_B-3) }{2} +1 $. Furthermore, 
when  $d_E$ is sufficiently large, from the algebraic geometric arguments, the top genus is 
given by a formula $d_B d_E + \frac{1}{2} (-3 d_B^2 + d_B + 2) $, which is certainly much larger 
than $d_E+ \frac{d_B(d_B-3) }{2} +1$. We can choose a particular solution from the general ansatz (\ref{formZdB}), so that the GV invariants vanish at sufficiently large genus, then the addition of extra ansatz of the form (\ref{special3.41}) does not affect the conditions. So we conclude that  the sub-family of ansatz 
(\ref{special3.41}) can not be fixed by vanishing GV conditions. 

The reverse is also true, namely, any remaining ambiguities which can not be fixed by vanishing 
GV conditions are necessarily of the form (\ref{special3.41}). To see this, we note that 
$(2\sin(\frac{kz}{2}))^2 $ can be written as a polynomial of $x\equiv (2\sin(\frac{z}{2}))^2 $ of 
degree $k$, more explicitly \begin{eqnarray} \label{Cheby3.42}
(2\sin(\frac{kz}{2}))^2  = x f_{k-1}(x) , 
\end{eqnarray} 
where $f_{k-1}(x)$ is a degree $k-1$ polynomial, related to the Chebyshev polynomials. For example, 
for $k=2$, we have $(2\sin z )^2 = x(4-x)$. The constant term  $f_{k-1}(0)$ in the polynomial 
is always $k^2$ since we have  $\frac{(2\sin(\frac{kz}{2}))^2}{x} \sim k^2 $ in the 
limit $x\sim z\sim 0$. When we expand $\frac{1}{\phi_{-2,1} (\tau,k z)}$ for $k> 1$ as 
power series of $x, q$, we see that the power of $x$ has no upper bound for a given power of 
$q$, unlike the case of $k=1$.  So if the factor  $f_{k-1}(x)$ is not cancelled in the denominator, 
the ansatz would contribute to the GV invariants $n_{(d_E, d_B)}^g$ of arbitrarily large 
genus $g$ for the given degrees $(d_E, d_B)$, and we can always fix them with vanishing GV 
invariants, even without the precise knowledge of the top genus. Furthermore, we require 
that the power of $x$ in the generating function of the GV invariants is always no 
less than $-1$, since the lowest genus is zero. Those ansatz that contribute to 
$x^{-n}$ with $n>1$ can be always fixed by this requirement. Overall we have at most one 
factor $x$ in the denominator for those ansatz that are not affected by these considerations. 
Thus all ansatz except for those of the form  (\ref{special3.41}) can be fixed in this way.  

This is checked by actual calculations. One can easily count that the sub-family of ansatz 
(\ref{special3.41}) has $2, 3, 10, 26$ unknown coefficients for base degrees $d_B=1,2,3,4$ respectively. 
After we impose the conditions that the generating function of GV invariants start from genus 
zero and vanish at sufficiently large genus for a given fiber degree, these are the indeed the 
remaining numbers of unfixed coefficients in the more general ansatz (\ref{formZdB}) with the denominator $\prod_{k=1}^{d_B} \phi_{-2,1}( \tau,k z)$.

We can also see the scaling of the top genus for a given base degree $d_B$ and for large fiber 
degree $d_E$. We can expand more explicitly 
\begin{eqnarray} \label{expand3.43} 
\frac{1}{\phi_{-2,1} (\tau,k z)} = -\frac{1}{x f_{k-1}(x)} + \sum_{n=1}^{\infty}  q^n g_{n-1} (x f_{k-1}(x) ) ,
\end{eqnarray} 
where $f_{k-1}(x)$ is the polynomial in (\ref{Cheby3.42}) and $g_{n-1}$ is a degree $n-1$ polynomial.  
There are also multi-cover contributions from lower degrees. For example, we can consider $k$-cover 
contributions from the base degree one formula. The denominator can be expanded as 
\begin{eqnarray}
\frac{1}{\phi_{-2,1} (k\tau, k z)} = -\frac{1}{x f_{k-1}(x)} + \sum_{n=1}^{\infty}  q^{kn} g_{n-1} (x f_{k-1}(x) ), 
\end{eqnarray}
where the polynomials $f_{k-1}, g_{n-1}$ are the same as in (\ref{expand3.43}). We see that the first terms, which 
would have unbounded powers of $x$ in series expansion, are the same. At the end, the polynomial $f_{k-1}(x)$ must be 
cancelled in the denominator, so that the resulting GV invariants have a top genus. In this way we see that the 
minimal requirements for the denominator of the ansatz at base degree $d_B$ is indeed $\prod_{k=1}^{d_B} \phi_{-2,1}(\tau, k z)$, 
otherwise there will be non-vanishing GV invariants at arbitrarily large genus for fixed degrees from the multi-cover 
contributions. Since the factor $f_{k-1}(x)$ in the denominator of the first term in (\ref{expand3.43}) is eventually 
canceled, this term contributes a finite polynomial of $x$ in the generating function of GV invariants and does not 
affect the leading scaling behavior of the top genus for large fiber degree. At the base degree $d_B$,  the leading 
contribution to the top genus comes from the second term in (\ref{expand3.43}) for $k=d_B$ with large $n$, and 
scales like $g_{top} \sim d_B d_E$. However it is more tricky to determine the exact formula in this way 
for the cases of $d_B>1$. We find that the exact top genus numbers are not universal for the general ansatz (\ref{formZdB}),  
and the top genus formula $d_B d_E + \frac{1}{2} (-3 d_B^2 + d_B + 2) $ for large $d_E$ derived from algebraic arguments 
only appears for the the particular solution in topological string theory.   

This approach can be combined with the B-model holomorphic anomaly approach to compute higher genus topological string amplitudes. Using the involution symmetry and the boundary conditions at the conifold point, 
we find that the exact formula at base degree $d_B$ can provide sufficient boundary data to 
fix the B-model formula  at genus $9(d_B+1)$, valid for all base and fiber degrees, see the Appendixes for more details. 
On the other hand, in order to fix the exact formula at base degree $d_B$, we need to fix 
the ansatz (\ref{special3.41}) with some non-vanishing GV invariants. In particular, consider 
the term in (\ref{special3.41}) with $k=\frac{d_B(d_B-3) }{2} $ in the sum. Since 
$ \phi_{-2,1} (\tau,z)\sim x$ in the power series expansion of $x,q$, this term contributes 
only to and can be only fixed by topological string free energy or some non-vanishing 
GV invariants of genus no less than $\frac{d_B(d_B-3) }{2}+1$.  The contributions 
of the other terms in (\ref{special3.41}) start from lower genus. 
On the other hand, the modular form coefficients in the ansatz (\ref{special3.41}) have different modular weights for 
different $k$, and their contributions can not cancel when we extract the contributions at a given genus by expanding 
for small $z$. Therefore if we only know the B-model formula at any genus no less than $\frac{d_B(d_B-3) }{2}+1$, 
it is also already sufficient to fix the ansatz (\ref{special3.41}) at base degree $d_B$.  This is checked by actual 
low base degrees calculations. Thus, as long as $9(d_B+1) \geq \frac{(d_B+1)(d_B-2) }{2}+1$, we can repeat this procedure 
to fix the exact formula with increasing base degrees. If no other obstacle arises, we foresee 
the eventual termination of this recursive process only at $d_B=20$. In this way we can in 
principle determine the exact formula up to base degree $d_B=20$ (for all genera and fiber degrees), 
and the topological string free energy up to genus 189 (for all base and fiber degrees). In practice 
we provide here the solution of the topological strings up to $d_B=5$ for all genera and genus 
$g=8$ for all degrees.

\section{BPS invariants}

After a technical recapitulating of the the physical definition of the 
BPS invariants in Section~\ref{physicsdefBPS}, the main part of the section 
is devoted to the direct geometrical calculation of the unrefined BPS 
invariants or Gopakumar-Vafa invariants for the $X_{18}(1,1,1,6,9)$ 
Calabi-Yau 3-fold in section \ref{GVfromgeometry}.

\subsection{Physical definition of the BPS invariants}
\label{physicsdefBPS} 
In this section we recall the definition of the integer BPS 
invariants from ${\cal F}(\lambda, {\underline t})$ or 
${\cal Z}(\lambda,{\underline t})$. In view of the similarity 
of the all genus expansions discussed in section~\ref{exactbase} 
and the product expansion of (\ref{borcherdsproduct}) based 
on the Borcherds lifting of automorphic forms of $O(2,n,\mathbb{Z})$, 
see \cite{Borcherds} and \cite{kontsevich1} the formulae 
(\ref{productunrefined}) and (\ref{productrefined}) should 
give further hints how to characterize ${\cal Z}$. We comment on 
possible refinements in \ref{refinedBPSinvariants} in analogy to the $E$-string.

\subsubsection{Unrefined BPS invariants}  
\label{unrefinedBPSinvariants}

The unrefined BPS states multiplicities $n_\beta^g\in \mathbb{Z}$ 
are defined by their contribution to a Schwinger-Loop integral in $5d$  $N=2$ 
supergravity theory~\cite{Gopakumar:1998jq}\footnote{Caligraphic letters should refer to quantities in the holomorphic limit.}     
\begin{equation} 
\label{schwinger}
{\cal F}(\lambda, {\underline t})=class({\underline t})+\sum_{\beta\in H_2(M_3,\mathbb{Z})}\sum_{g=0}^\infty\sum_{m=1}^\infty \frac{n_\beta^g}{m} 
\left( 2 \sin\left(\frac{m \lambda}{2}\right)\right)^{2g -2} q^{\beta m} \ .  
\end{equation}  
Note that $q^{\beta}=\exp(2\pi i \sum_{\alpha=1}^{b_2(M_3)} t^\alpha \beta_\alpha)$ and  
$y^\frac{m}{2}-y^{-\frac{m}{2}}= 2i\sin\left(\frac{m \lambda}{2}\right)$.  
One can interpret $\beta$ as BPS charge and $\sum_\alpha t^\alpha \beta_\alpha$ as the  
central charge mass term. 
In the 5-field theory the fugacity $\lambda$ traces the left spin of the BPS states.                

The $n_\beta^g\in \mathbb{Z}$  may be calculated  by identifying the above expression to the 
perturbative topological string expansion on $M$ 
\begin{equation} 
{\cal F}(\lambda, {\underline t})=class({\underline t})+ \sum_{g=0}^\infty \lambda^{2g-2} {\cal F}^{g}({\underline t})= \sum_{g=0}^\infty \sum_{\beta\in H_2(M,\mathbb{Z})} 
\lambda^{2g-2} r^g_\beta q^\beta\ ,
\end{equation} 
where  $\lambda$ is now the topological string coupling constant. The Gromov-Witten invariants  
$r^g_\beta\in \mathbb{Q}$ can be mathematically defined and in suitable cases calculated 
using localization. In this article we use mirror symmetry, modularity, the 
holomorphic anomaly equation and direct curve counting to obtain the $F^{(g)}({\underline t})$. 
With 
\be 
\sum_{m=1}^\infty\frac{1}{m} \frac{ q^m}{ \left( 2 \sin\left(\frac{m \lambda}{2}\right)\right)^2}=\sum_{g=0}^\infty \lambda^{2 g -2} (-1)^{g+1} 
\frac{B_{2g}}{2 g (2 g-2)!} {\rm Li}_{3 -2g} (q)\ 
\ee
where ${\rm Li}_n(x)=\sum_{k=1}^\infty \frac{x^k}{k ^m}$ and the Bernoulli 
numbers are defined in (\ref{Bernoulli}). Using further $\zeta(-n)=-\frac{B_{n+1}}{n+1}$ for $n\in \mathbb{N}_+$ 
we get 
\be 
F_g({\underline t})=(-1)^{g+1}n_0^0\frac{|B_{2g} B_{2g-2}|}{2g (2g-2) (2g-2)!}+{\cal O}({\underline q}), \quad  {\rm for }\quad n>1\ ,
\ee
which matches the constant map contribution~\cite{faberpandharipande} for  $n_0^0=-\frac{\chi}{2}$. 
Exponentiating  (\ref{schwinger}) one gets~\cite{Klemm:2004km,Katz} 
\begin{equation}
{\cal Z}(\lambda,{\underline t})=\prod_{\beta}\left[
\left(\prod_{m=k}^\infty (1-y^k q^\beta)^{k n_\beta^0}\right)
\prod_{g=1}^\infty \prod_{l=0}^{2g-2}(1-y^{ g-l-1} q^\beta)^{(-1)^{g+l} 
\left(2 g-2\atop l\right) n_\beta^g}\right]\  . 
\label{productunrefined}
\end{equation}

\subsubsection{Refined BPS invariants}  
\label{refinedBPSinvariants} 
Let us define as in~\cite{Huang:2010} in analogy with (\ref{freeenenergy},\ref{freeenenergybase})
\be 
F=\sum_{g=0}^\infty (\epsilon_1+\epsilon_2)^{2 h} (\epsilon_1\epsilon_2)^{g-1} F^{(g,h)}(q,Q)\ .
\label{refinedfreeenenergy}
\ee 	
We decompose the genus $g$ amplitude as 
\be
F^{(g,h)}=\sum_{\beta\in H_{2}(B_n,\mathbb{Z})} F^{(g,h)}_\beta(q) Q^\beta\ .
\label{refinedfreeenenergybase}
\ee  
then holomorphic anomaly equation for E-strings~\cite{Huang:2013yta} becomes  
\be
\label{refinedmodularF}
\frac{\partial F^{(g,h)}_{n}}{\partial \hat E_2} = \frac{1}{24} \sum_{h_1=0}^h\sum_{g_1=0}^g \sum_{s=1}^{n-1} s(n-s) 
F^{(g_1,h_1)}_s F^{(g-g_1,h-h_1)}_{n-s}  +\frac{n(n+1)}{24} F^{(g-1,h)}_{n} - \frac{n}{24} F^{(g,h-1)}_{n}\  .
\ee
Making an analogous fibre and base separation as in (\ref{generating4.204}) 
of the $E$-string  geometry $\frac{1}{2} K3$ surface with the rational 
elliptic fibration ${\cal E}\rightarrow \mathbb{P}^1$, one finds \cite{Haghighat:2014}  for 
the massless $E$-string the analogous equation to (\ref{anomalyindex}), compare \cite{Haghighat:2014pva}
\be 
\label{refinedmodularZb}
\left(\partial_{E_2}+\frac{1}{24}[\epsilon_1\epsilon_2 (n^2 + n) -(\epsilon_1+\epsilon_2)^2 n]\right) Z^{E-string}_n(\tau,\epsilon_1,\epsilon_2)=0 \ .    
\ee
We note that in the Nekrasov-Shatashvili limit the solution will be a weak Jacobi form. 
Indeed  the index growth linearly as in the $N=4$ case discussed in the introduction.
In the general case we expect as solution a meromorphic weak Jacobi theta function. The 
refined BPS invariants 
$$
N^{j_L,j_R}_\beta\in \mathbb{N}_0\ . 
$$
are labelled by $\beta\in H_2(M,\mathbb{Z})$ and $j_L,j_R$,  the $5d$ left and right spin
$$j_L,j_R\in \frac{1}{2}\mathbb{Z}_{\ge 0}\ . $$
If we define 
$$[j]_x=x^{2j}+ x^{-2j+2}+\ldots + x^{2j-2}+x^{2j}$$
then (\ref{schwinger}) generalizes with $u=e^{i\epsilon_1}$ and $v=e^{i\epsilon_2}$  to 
\begin{equation} 
{\cal F}(\epsilon_1,\epsilon_2, t)=\sum_{\beta\in \mathbb{Z}^r}\sum_{j_J,j_R}^\infty\sum_{m=1}^\infty \frac{(-1)^{2(j_L+ j_R)} N_\beta^{j_L,j_R} [m j_L]_{(uv)}[m j_R]_{\left(\frac{u}{v}\right)}}
{m (u^\frac{m}{2} -u^{-\frac{m}{2}})(v^\frac{m}{2} -v^{-\frac{m}{2}})} q^{m \beta}\ ,
\end{equation}  
while (\ref{productunrefined}) generalizes to~\cite{Iqbal:2007ii,Choi:2012jz}\footnote{See \cite{Choi:2012jz} for a mathematical definition of the $N^\beta_{j_Lj_R}$ 
in the local case using the virtual Bialynicki-Birula decomposition of the 
moduli space Pandharipande-Thomas invariants.}                  
\begin{equation}
{\cal Z}=\prod_\beta \prod_{j_{L/R}=0}^\infty \prod_{m_{L/R}=-j_{L/R}}^{j_{L/R}}\prod_{m_1,m_2=1}^\infty \left(1-\left(\frac{u}{v}\right)^{m_L} (uv)^{m_R} u^{(m_1-\frac{1}{2})}
v^{(m_2-\frac{1}{2})} q^{\beta}\right)^{ (-1)^{2(j_L+j_R)} N^\beta_{j_Lj_R}}\ ,
\label{productrefined}
\end{equation}
For the E-string (\ref{basedegree1}) generalizes to 
\be 
{\cal Z}_1=\sum_{j_L,j_R} \sum_{n_e=0}^\infty N^{j_L,j_R}_{n_e,1} [j_L]_u [j_R]_v q^{n_e}
= \frac{ E_4(q)}{ \eta(q)^8 \prod_{n=1}^{\infty} (1-u v q^n) (1-\frac{u q^n}{v}) 
 (1-\frac{q^n}{uv})(1-\frac{vq^n}{u})  }\ .
\label{refinedwinding1BPS}
\ee
This can easily seen to be compatible with (\ref{refinedmodularF}). Eqs 
(\ref{refinedmodularF}),(\ref{refinedmodularZb}) and (\ref{refinedwinding1BPS}) 
suggest that a refinement of  (\ref{anomalyP2}), (\ref{anomalyindex}) and 
(\ref{basedegree1}). We will discuss this further in \cite{HKKZ}.

\subsection{Gopakumar-Vafa invariants from geometry} 
\label{GVfromgeometry} 
\subsubsection{Geometry of curves}
\label{sec:geomcurves}
In this section, we describe the geometry of a number of families of curves 
in the $X_{18}(1,1,1,6,9)$ model,
from which a good number of GV invariants can be extracted.  For simplicity, 
we will simply refer to this model as $X$ in this section.
 
We begin by reviewing some results and notation 
from~\cite{Candelas:1994hw}, which will
be used to describe the geometry.

The Calabi-Yau $X$ is described as a blowup of a degree~18 weighted
hypersurface in $\mathbb{P}(1,1,1,6,9)$ at its (unique) singular point
$x_1=x_2=x_3=0$.  As noted earlier, $X$ has Hodge numbers $h^{1,1}=2$ and $h^{2,1}=272$,
and Euler characteristic $-540$.

The exceptional divisor $E$ is isomorphic to $\mathbb{P}^2$.  
Projection to the first 
three coordinates of the weighted projective space presents $X$ as 
a Weierstrass elliptic fibration 
$\pi:X\to\mathbb{P}^2$.  Let $L=\pi^{-1}(\ell)$ be the pullback of a line
$\ell\subset\mathbb{P}^2$ to $X$, and let $H=3L+E$.  Then 
$\pi^*(x_1)$, $\pi^*(x_2)$, and $\pi^*(x_3)$ are sections of $L$,
$\pi^*(x_4)$ is a section of $2H$, and $\pi^*(x_5)$ is a section of $3H$.

For the benefit of those readers more familiar with the toric description, we
refer to (\ref{F1case}) where the two descriptions of the divisors
are related: the toric divisor $D_{x_0}$ is identified with $E$, the divisor
classes of the toric divisors $D_{x_i}$ are all equal to $L$ for $i=1,2,3$, the toric
divisor $D_x$
is in the class $2H$, and the toric divisor $D_y$ is in the class $3H$.

The K\"ahler cone of $X$ is generated by $H$ and $L$.  The triple intersections
are
\begin{equation}
  H^3=9,\qquad H^2L=3,\qquad HL^2=1,\qquad L^3=0.
  \label{eq:yukawa}
\end{equation}

Dually, the Mori cone is generated by the class of an elliptic fiber $f$
and the class of a line $\tilde{\ell}\subset E\simeq\mathbb{P}^2$.

Now, let $C\subset X$ be a connected curve (not necessarily reduced or
irreducible, but having no embedded points).  The invariants of $C$ are
\begin{equation}
  d_E=C\cdot H, \qquad
  d_B=C\cdot L, \qquad   g=p_a(C),
  \label{eq:debg}
\end{equation}
where the last is the arithmetic genus of $C$.  
The curves of fixed $(g;d_E,d_B)$ are parametrized
by a Hilbert scheme, and we will be able to describe many of these as well as
the corresponding GV invariants.

As a comment on the mathematical rigor of our calculations, we are
actually 
studying PT moduli spaces and computing 
PT invariants, which are mathematically equivalent
to the GV invariants 
invariants by \cite{PT}.  However, we will frequently simplify 
the exposition by using the language of Hilbert schemes as in \cite{KKV}
and then describing corrections required by the use of stable pairs.  The reader
is referred to \cite{PT} for more details, and to \cite{Choi:2012jz} for an
amplification of the comparison between the methods of \cite{Choi:2012jz}
and \cite{KKV}

We say that $(d_E,d_B)$ is the
{\em degree\/} of $C$, and sometimes use the notation $C_{d_E,d_B}$ for a curve
of degree $(d_E,d_B)$.  A fiber $f$ has degree $(1,0)$ and a line $\tilde{\ell}
\subset
E$ has degree $(0,1)$.

We are able to completely describe the moduli space of connected curves 
of degree 
$(d_E,d_B)$ and genus $g$ for various $d_E,\ d_B$, and $g$.  In this section
we describe the geometric principles that these descriptions are based on.  In the next
section, we will apply these principles to compute various moduli spaces with $d_B\le 5$
and use this description to compute the associated Gopakumar-Vafa invariants, and make
some comments about general $d_B$.

\begin{lem}
Let $C$ have degree $(d_E,d_B)$.  Then $\pi(C)$
is a plane curve of degree $d_B$, including multiplicity.  
  \label{lem:dproj}
\end{lem}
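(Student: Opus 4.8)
The plan is to deduce the statement from the projection formula, reading the degree of the image plane curve as an intersection number on the base. Recall from the setup that $L=\pi^*\ell$, the pullback of the class of a line $\ell\subset\mathbb{P}^2$, and that by definition the degree of an effective $1$-cycle $D\subset\mathbb{P}^2$ is the intersection number $D\cdot\ell$ with a general line. I would interpret the image $\pi(C)$ "including multiplicity" as the pushforward cycle $\pi_*[C]$, in which each component of the image is weighted by the product of the multiplicity of the corresponding component of $C$ and the degree of the generically finite map from that component onto its image. With this reading the lemma becomes the assertion $\deg\pi_*[C]=d_B$.

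First I would write $C=\sum_i m_i C_i$ as a sum of its irreducible components with multiplicities and split them into two types: those contracted by $\pi$, i.e.\ the components contained in fibres of the elliptic fibration, and the horizontal ones. For a contracted component one has $\pi_*[C_i]=0$ as a $1$-cycle, and correspondingly $C_i\cdot L=C_i\cdot\pi^*\ell=0$; such components carry $L$-degree zero and contribute to $d_E$ only through $C_i\cdot E$. Each horizontal component $C_i$ maps finitely, say of degree $k_i$, onto a plane curve $\pi(C_i)$, so that $\pi_*[C_i]=k_i[\pi(C_i)]$ and $\pi_*[C]=\sum_{i\ \mathrm{horiz.}} m_i k_i\,[\pi(C_i)]$ is an effective $1$-cycle.

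The core of the argument is then a single application of the projection formula. Since $[C]\cdot\pi^*\ell$ is a zero-cycle on $X$, the intersection numbers satisfy
\be
d_B=C\cdot L=[C]\cdot\pi^*\ell=\pi_*[C]\cdot\ell ,
\ee
the last equality being the projection formula. Thus the effective cycle $\pi_*[C]$ meets a general line in exactly $d_B$ points counted with multiplicity, which by the definition recalled above says precisely that $\pi(C)$, taken with its natural multiplicities, is a plane curve of degree $d_B$ (degenerating to a finite set of points, the degenerate "degree $0$" case, exactly when $C$ is entirely contained in fibres).

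The computation itself is immediate; the only point demanding care---and the real content of the lemma beyond the formal identity---is the bookkeeping of multiplicities, namely the identification of the geometric image $\pi(C)$ endowed with its natural multiplicities with the algebraic pushforward cycle $\pi_*[C]$, together with the observation that contracted (fibre) components drop out of $\pi_*[C]$ and hence do not affect the base degree while still contributing to $d_E$. Once this identification is made, effectivity of $\pi_*[C]$ and the projection formula give the result.
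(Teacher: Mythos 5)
Your proof is correct and is essentially the same as the paper's: the paper's entire argument is the one-line projection-formula computation $\pi(C)\cdot\ell=C\cdot\pi^{-1}(\ell)=C\cdot L=d_B$. Your additional bookkeeping --- interpreting $\pi(C)$ ``including multiplicity'' as the pushforward cycle $\pi_*[C]$ and noting that fibre components are contracted and drop out --- is a careful elaboration of the same computation rather than a different route.
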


\begin{proof}
We compute the degree of $\pi(C)$ as
\begin{equation}
  \pi(C)\cdot \ell=C\cdot\pi^{-1}(\ell)=C\cdot L=d_B,
  \label{eq:dprojc}
\end{equation}
where $\ell\subset\mathbb{P}^2$ is a line. 
\end{proof}

\begin{cor}
  A curve of degree $(d_E,0)$ is a union of $d_E$ fibers (including multiplicity).
\label{cor:fibers}
\end{cor}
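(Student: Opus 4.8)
The plan is to deduce everything from Lemma~\ref{lem:dproj} together with the connectedness of $C$ and the irreducibility of the Weierstrass fibers. First I would specialize Lemma~\ref{lem:dproj} to the case $d_B=0$: it gives that $\pi(C)$ is a plane curve of degree $0$, hence zero-dimensional, i.e.\ a finite set of points in $\mathbb{P}^2$. Since $C$ is connected by hypothesis and $\pi$ is continuous, $\pi(C)$ is connected, so it must be a single point $p\in\mathbb{P}^2$. Consequently $C$ is contained set-theoretically in the fiber $F:=\pi^{-1}(p)$.

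Next I would use that $\pi:X\to\mathbb{P}^2$ is a Weierstrass elliptic fibration, so $F$ is cut out by an equation of the form $y^2=x^3+\ldots$ in the fiber coordinates and is therefore irreducible (whether it is a smooth elliptic curve or a singular cubic over the discriminant), with class $[F]=f$. Thus $C$ is a connected, pure one-dimensional subscheme supported on the single irreducible curve $F$. To read off the multiplicity I would write $[C]=a\,f+b\,\tilde{\ell}$ in the Mori cone with $a,b\ge 0$, and use the intersections $f\cdot H=1$, $f\cdot L=0$, $\tilde{\ell}\cdot H=0$, $\tilde{\ell}\cdot L=1$ coming from the stated degrees of $f$ and $\tilde{\ell}$. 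Then the invariants (\ref{eq:debg}) give $b=C\cdot L=d_B=0$ and $a=C\cdot H=d_E$, so $[C]=d_E\,f$. Since $C$ is supported on the single fiber $F$ with $[F]=f$, its associated cycle is exactly $d_E$ copies of $F$, i.e.\ a union of $d_E$ fibers counted with multiplicity.

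The step I expect to carry the real content is the passage from \emph{connected and contained in $\pi^{-1}(p)$} to \emph{$d_E$ fibers with multiplicity}: the honest point is that irreducibility of each Weierstrass fiber prevents $C$ from splitting across several distinct fibers, so the multiplicity $d_E$ recorded by $C\cdot H$ is forced to sit on the one fiber $F$. At the level of cycles this is the entire assertion of the corollary; any finer scheme structure on $F$ (the various thickenings parametrized by the relevant Hilbert/PT moduli space) is irrelevant to this cycle-theoretic statement and is deferred to the degree-by-degree analysis of the next section.
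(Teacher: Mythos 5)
Your proof is correct and follows essentially the same route as the paper's: apply Lemma~\ref{lem:dproj} to conclude $\pi(C)$ is a finite point set, so $C$ is supported on fibers, and then read off the count $d_E$ from $C\cdot H$ using the fact that a fiber has degree $(1,0)$. The extra steps you add (connectedness forcing $\pi(C)$ to be a single point, and irreducibility of the Weierstrass fiber pinning the cycle to a multiple of one fiber class) are sound in this $I_1$-only setting and yield a slightly sharper statement, but they are not needed for the corollary as stated, which the paper proves without them.
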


\begin{proof}
By Lemma~\ref{lem:dproj},  $\pi(C)$ is a finite point set, hence
$C$ is a union of fibers.  Since each fiber has degree $(1,0)$, the result follows.  
\end{proof}

\begin{lem}
If $d_E<3d_B$, then for any curve $C$ of degree $(d_E,d_B)$ at least one component of
$C$ (with its reduced structure) is  contained in $E$.  
In particular, if in addition $C$ is irreducible, then
$d_E=0$ and $C_{\mathrm{red}}\subset E$, where $C_{\mathrm{red}}$ is the reduced 
(multiplicity~1) structure on $C$.
  \label{lem:inplane}
\end{lem}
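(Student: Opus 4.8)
The plan is to test the exceptional class $E$ against the curve $C$. Since $H=3L+E$ we have $E=H-3L$, so using the definitions $d_E=C\cdot H$ and $d_B=C\cdot L$ from~(\ref{eq:debg}),
\[
C\cdot E = C\cdot(H-3L)=d_E-3d_B.
\]
Thus the hypothesis $d_E<3d_B$ is exactly the statement that the intersection number $C\cdot E$ is negative. The whole lemma then follows from the elementary principle that an effective curve can have strictly negative intersection with an effective divisor only if it shares a component with that divisor.

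Concretely, I would prove the contrapositive of the first assertion. Suppose no reduced component of $C$ is contained in $E$, and write the associated $1$-cycle as $\sum_i m_i[C_i]$, where the $C_i$ are the (integral) irreducible components of $C_{\mathrm{red}}$ and each $m_i\ge 1$. By assumption each $C_i$ meets $E$ properly, i.e.\ in dimension zero. Since $X$ is smooth, the canonical section of $\mathcal{O}_X(E)$ cutting out $E$ restricts to a \emph{nonzero} section of $\mathcal{O}_X(E)|_{C_i}$, whose vanishing scheme is the finite set $C_i\cap E$; hence $C_i\cdot E=\deg\big(\mathcal{O}_X(E)|_{C_i}\big)\ge 0$ on the integral projective curve $C_i$. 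Summing, $C\cdot E=\sum_i m_i\,(C_i\cdot E)\ge 0$, contradicting $C\cdot E=d_E-3d_B<0$. Therefore at least one $C_i$ lies in $E$.

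For the ``in particular'' clause, if $C$ is irreducible then $C_{\mathrm{red}}$ is its unique component, which by the above must lie in $E$. Since $E\simeq\mathbb{P}^2$ has second homology generated by the line class $\tilde\ell$, and $\tilde\ell$ has degree $(0,1)$ so that $\tilde\ell\cdot H=0$, every curve supported in $E$ has vanishing $H$-degree; in particular $d_E=C\cdot H=0$. The argument is short, and the only point genuinely requiring care is the positivity step for a possibly nonreduced or reducible $C$: one must phrase it componentwise as above rather than applying it to the scheme $C$ directly. It is also worth recording the geometric reason the threshold is precisely $3d_B$: adjunction on the smooth Calabi--Yau $X$ gives $E|_E=\mathcal{O}_{\mathbb{P}^2}(-3)$, equivalently $E\cdot\tilde\ell=-3$, so curves inside the section $E$ are exactly the ones able to drive $C\cdot E$ negative.
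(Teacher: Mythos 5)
Your proof is correct and takes essentially the same approach as the paper's: both compute $C\cdot E=C\cdot(H-3L)=d_E-3d_B$ and conclude from the nonnegativity of $C'\cdot E$ for any component $C'$ not contained in $E$ that some component must lie in $E$, with the final claim $d_E=0$ following from $H\cdot E=0$ (equivalently, from $\tilde\ell\cdot H=0$ as you phrase it). Your componentwise treatment of the positivity step and the adjunction remark $E|_E=\mathcal{O}_{\mathbb{P}^2}(-3)$ are just more explicit renderings of what the paper asserts.
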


\begin{proof}
To see this, we first note that if a curve $C'$ is not contained in $E$,
then $C'\cdot E\ge0$, as the intersection number is just a count of the intersection
points of $C'$ and $E$ with multiplicity.  Then we compute 
\begin{equation}
  C\cdot E=C\cdot\left(H-3L\right)=d_E-3d_B.
  \label{eq:inte}
\end{equation}
If this is negative, it follows immediately that some component $C'$ of $C$ must be contained
in $E$. If in addition $C$ is irreducible, it follows that $C_{\mathrm{red}}\subset E$.  
Since  $H\cdot E=0$, it follows 
that $d_E=C\cdot H=0$ as claimed.
\end{proof}

\smallskip\noindent
{\bf Remark.} If  $C_{\mathrm{rd}}\subset E$, it is possible for $C$ itself to not be
contained in $E$ due to thickenings in  a direction transverse to $E$.

\begin{prop}
If $C$ has degree $(0,d_B)$, then its genus satisfies
\[
g\le \frac12\left(d_B-1\right)\left(d_B-2\right),
\]
with equality holding if and only if $C$ is a plane curve of degree $d_B$ after identifying
$E$ with $\mathbb{P}^2$.
  \label{prop:plane}
\end{prop}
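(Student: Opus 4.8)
The plan is to reduce the statement to Castelnuovo's bound on $\mathbb{P}^2$ via the identification $E\cong\mathbb{P}^2$, and to control any thickening of $C$ transverse to $E$ using the positivity of the conormal bundle of $E$. First I would show that every component of $C$ lies in $E$. Since $H$ and $L$ are nef (they generate the K\"ahler cone, with intersections (\ref{eq:yukawa})) and $[C]\cdot H=d_E=0$, each reduced irreducible component $C_i$ satisfies $C_i\cdot H=0$. Writing $H=3L+E$ as in (\ref{eq:inte}), for $C_i\not\subset E$ one has $C_i\cdot L\ge 0$ and $C_i\cdot E\ge 0$, so both vanish; then $\pi(C_i)$ is a point by Lemma~\ref{lem:dproj}, forcing $C_i$ to be an entire $I_1$ fibre with $C_i\cdot H=1$, a contradiction. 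Hence $C_{\mathrm{red}}\subset E$, as in Lemma~\ref{lem:inplane}. A short computation from (\ref{eq:yukawa}) gives $E^2\cdot L=-3$ and $E^2\cdot H=0$, so under $E\cong\mathbb{P}^2$ one has $L|_E=\mathcal{O}(1)$, $H|_E=\mathcal{O}$, and normal bundle $N_{E/X}=\mathcal{O}_{\mathbb{P}^2}(-3)$, consistent with adjunction and $K_X=0$.

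Next I would filter $\mathcal{O}_C$ by powers of the ideal sheaf $\mathcal{I}_E$. Setting $F^k=\mathcal{I}_E^k\mathcal{O}_C$ and $G_k=F^k/F^{k+1}$, this is a finite filtration (because $C_{\mathrm{red}}\subset E$) whose graded pieces $G_k$ are coherent sheaves on $E\cong\mathbb{P}^2$. Since Euler characteristic and $\mathcal{O}(L)$-degree are additive along the filtration, $\chi(\mathcal{O}_C)=\sum_k\chi(G_k)$ and $\sum_k d_k=d_B$, where $d_k$ denotes the degree of $G_k$ on $\mathbb{P}^2$ (using $\mathcal{O}(L)|_E=\mathcal{O}(1)$). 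As $E$ is a smooth divisor, $\mathcal{I}_E^k/\mathcal{I}_E^{k+1}\cong (N_{E/X}^\vee)^{\otimes k}=\mathcal{O}_E(3k)$, and multiplication exhibits $G_k$ as a quotient of $\mathcal{O}_{C\cap E}(3k)$; the positive twist is precisely the source of the genus drop under thickening.

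The arithmetic heart is a lower bound on each $\chi(G_k)$. Untwisting, $G_k(-3k)$ is a quotient of $\mathcal{O}_{C\cap E}$, hence equals $\mathcal{O}_{C'}$ for a closed subscheme $C'\subset\mathbb{P}^2$ of degree $d_k$. Its one-dimensional part without embedded points is a Cartier divisor of degree $d_k$ (every pure one-dimensional subscheme of $\mathbb{P}^2$ is a plane curve), so $\chi(\mathcal{O}_{C'})\ge \tfrac{d_k(3-d_k)}{2}$, the zero-dimensional part only raising $\chi$; Riemann--Roch gives $\chi(G_k)\ge \tfrac{d_k(3-d_k)}{2}+3k\,d_k$. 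Summing and using the superadditivity $\sum_k \tfrac{d_k(3-d_k)}{2}\ge\tfrac{d_B(3-d_B)}{2}$, which follows from $f(a)+f(b)-f(a+b)=ab\ge0$ for $f(d)=\tfrac{d(3-d)}{2}$, yields $\chi(\mathcal{O}_C)\ge \tfrac{d_B(3-d_B)}{2}$, i.e. $p_a(C)=1-\chi(\mathcal{O}_C)\le\tfrac12(d_B-1)(d_B-2)$.

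For the equality clause I would track when all inequalities are tight. If $C\not\subset E$, the top nonzero piece $G_{k_0}=\mathcal{I}_E^{k_0}\mathcal{O}_C$ with $k_0\ge1$ is a nonzero subsheaf of the pure sheaf $\mathcal{O}_C$, hence one-dimensional with $d_{k_0}>0$, so the term $3k_0 d_{k_0}>0$ makes the bound strict. Thus equality forces $C\subset E$, i.e. a pure degree-$d_B$ subscheme of $\mathbb{P}^2$ without embedded points, which is exactly a plane curve of degree $d_B$, where $\chi(\mathcal{O}_C)=\tfrac{d_B(3-d_B)}{2}$ holds. I expect the main obstacle to be the per-piece estimate $\chi(G_k)\ge\tfrac{d_k(3-d_k)}{2}+3k\,d_k$ for an arbitrary quotient $G_k$: one must cleanly separate the one- and zero-dimensional parts of the associated subscheme, invoke that pure one-dimensional subschemes of $\mathbb{P}^2$ are divisors, and check that embedded points only increase $\chi$.
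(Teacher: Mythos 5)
Your proof is correct, but it takes a more self-contained route than the paper. The paper's own argument is two sentences: it deduces $C_{\mathrm{red}}\subset E$ from the ampleness of $H$ on the weighted hypersurface before the blowup, and then invokes the fact that any $\mathbb{P}^2$ in a Calabi--Yau threefold has a neighborhood isomorphic to a neighborhood of the zero section in local $\mathbb{P}^2$, so that the genus bound follows from ``the corresponding result for local $\mathbb{P}^2$,'' which is cited rather than proved. You do both steps by hand: the containment $C_{\mathrm{red}}\subset E$ via nefness of $H$ and $L$ together with $H=3L+E$ and the observation that a component with $C_i\cdot L=C_i\cdot E=0$ would be a fiber with $C_i\cdot H=1$; and, more substantially, the local-$\mathbb{P}^2$ genus bound itself via the $\mathcal{I}_E$-adic filtration of $\mathcal{O}_C$, the identification $\mathcal{I}_E^k/\mathcal{I}_E^{k+1}\cong\mathcal{O}_E(3k)$, the per-piece estimate $\chi(G_k)\ge\tfrac{d_k(3-d_k)}{2}+3k\,d_k$, and the superadditivity of $d\mapsto\tfrac{d(3-d)}{2}$. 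What your version buys is a complete, reference-free proof that also makes the equality analysis transparent (purity of $\mathcal{O}_C$ forces the top graded piece to be one-dimensional, so any transverse thickening strictly increases $\chi$ through the positive twist $3k_0d_{k_0}$); what the paper's version buys is brevity and the conceptual point that the statement is purely a local-$\mathbb{P}^2$ phenomenon, independent of the ambient elliptic fibration. The two first steps are logically equivalent here ($H$ is the pullback of an ample class under the contraction of $E$, which is why it is nef with $H\cdot E=0$), so there is no substantive divergence, only in the level of detail.
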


The second half of the proposition says that if we try to thicken components of $C$  outside of $E$, the genus would be less than if we thicken inside $E$.  

\begin{proof}
We have $C\cdot H=d_E=0$.  Since $H$ is ample on the weighted hypersurface in
$\mathbb{P}(1,1,1,6,9)$ before the blowup, the only way for the intersection $C\cdot H$
to be 0 is for
$C_{\mathrm{red}}$ to be contained in $E$.  Since any 
$\mathbb{P}^2$ in a Calabi-Yau threefold has a neighborhood which is
isomorphic to a neighborhood of $\mathbb{P}^2$ inside local $\mathbb{P}^2$,
the result follows from the corresponding result for local $\mathbb{P}^2$.

\end{proof}

\noindent
{\bf Remark.} We will see in the next subsection
that for each $d_B\ge2$, non-reduced curves occur generically in moduli spaces
of curves with particular $d_E$ and $g$.

\smallskip
We now set out to find the largest possible genus of a connected
curve of degree $(d_E,d_B)$.
Part of our strategy in studying curves is to first study irreducible curves
and then study how the components can glue together.

By Lemma~\ref{lem:dproj}, a curve $C$ of degree $(d_E,d_B)$ is  contained in a surface
$\pi^{-1}(D)$ where $D\subset E\simeq\mathbb{P}^2$ is a plane curve of degree less than
or equal to $d_B$.  The degree will be strictly less than $d_B$
if and only if some component of $\pi(C)$ has
multiplicity greater than~1.

By analogy with the case of curves in $\mathbb{P}^3$, where the genus of a curve of
degree $d$ is maximized when the curve is contained in a plane, one might expect that in
our situation the
genus is maximized if the degree of $D$ is one, i.e.\ when $C\subset \pi^{-1}(\ell)$
for some line $\ell\subset E$.  This turns out to be the case for irreducible curves
with $d_E>0$.

\begin{prop}
Suppose $C$ is an irreducible curve of degree $(d_E,d_B)$ with $d_E>0$.  Then
\[
g\le d_Ed_B-\frac12\left(3d_B^2-d_B-2
\right),
\]
with equality holding if and only if $C\subset\pi^{-1}(\ell)$ for some $\ell$.
In that case, $C\subset S$ is the zero locus of a section of 
$\mathcal{O_S}(d_EL+d_BE)$.  The moduli space of all curves $C$ (not necessarily
irreducible) given by all $\ell$ and all sections of  $\mathcal{O_S}(d_EL+d_BE)$ is a
$\mathbb{P}^{d_Ed_B-(1/2)(3d_B^2+d_B-4)}$-bundle over $\mathbb{P}^2$.
  \label{prop:ins}
\end{prop}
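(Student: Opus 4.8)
The plan is to put $C$ on the elliptic surface lying over its image curve and run adjunction there, so that $p_a(C)$ is computed directly as an intersection number; the bound then becomes the statement that this intersection number is largest when the image is a line.

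First I would fix the geometry. By Lemma~\ref{lem:dproj} the image $D=\pi(C)$ is an irreducible plane curve, say of degree $d'$, and $C$ is contained in the Weierstrass surface $S_D=\pi^{-1}(D)$, whose class is $d'L$. Since $C$ is irreducible with $d_E>0$, Lemma~\ref{lem:inplane} forces $d_E\ge 3d_B$, for otherwise $C_{\mathrm{red}}\subset E$ and $d_E=0$. Writing $m=\deg(\pi|_C)\ge1$ for the degree of $C$ over $D$, we have $d_B=md'$.

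Next I treat the extremal case $d'=1$, $D=\ell$ a line, $S=S_\ell=\pi^{-1}(\ell)$. Here $S$ is an elliptic surface over $\ell\cong\mathbb{P}^1$ whose singular fibres are the $36$ nodal fibres cut out by the intersection of $\ell$ with the degree $36$ discriminant, so $\chi(\mathcal{O}_S)=3$. Restricting the threefold classes gives the fibre $f=L|_S$ and the section $\sigma=E|_S$; from $E=H-3L$ and (\ref{eq:yukawa}) one gets $f^2=0$, $f\cdot\sigma=1$, $\sigma^2=-3$, while $K_S=(K_X+S)|_S=L|_S=f$ since $K_X=0$. A curve mapping onto $\ell$ lies in the class $d_EL+d_BE=d_Ef+d_B\sigma$ (check $C\cdot f=d_B$ and $C\cdot\sigma=d_E-3d_B$), and adjunction on $S$ yields
\[
2p_a(C)-2=C\cdot(C+K_S)=(2d_Ed_B-3d_B^2)+d_B,
\]
that is $g=d_Ed_B-\tfrac12(3d_B^2-d_B-2)$, the asserted maximum. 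For arbitrary $d'$ I would run the same adjunction after base-changing $\pi$ to the normalization $D^\nu$ of $D$: this produces an elliptic surface $S^\nu\to D^\nu$ with $36d'$ nodal fibres, hence $\chi(\mathcal{O})=3d'$ and $\sigma^2=-3d'$; with $h:=g(D^\nu)\le\binom{d'-1}{2}$ and $C$ lifting to the class $m\sigma+d_Ef$, adjunction gives the expected genus $G(d',m,h)=1+md_E-\tfrac32 d'm^2+\tfrac32 d'm+mh-m$, which reduces to the target when $(d',m,h)=(1,d_B,0)$. Imposing $md'=d_B$, $d_E\ge 3d_B$ and $h\le\binom{d'-1}{2}$, a short estimate shows $G(d',m,h)$ is strictly smaller once $d'\ge2$, so the maximum is attained exactly when $D$ is a line; in that case $C$ is the zero locus of a section of $\mathcal{O}_S(d_EL+d_BE)$.

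Finally, for the moduli statement the lines are parametrized by $\check{\mathbb{P}}^2\cong\mathbb{P}^2$, and over each $\ell$ the relevant curves form the projective space of sections of $\mathcal{O}_{S_\ell}(d_EL+d_BE)$. Riemann--Roch on $S_\ell$ gives $\chi=d_Ed_B-\tfrac12(3d_B^2+d_B-4)+1$; $H^2$ vanishes because $K_S-M$ is anti-effective for $d_E\ge1$, and I expect $H^1=0$ from the positivity forced by $d_E\ge 3d_B$, so the fibre dimension is $N=d_Ed_B-\tfrac12(3d_B^2+d_B-4)$, constant in $\ell$, giving a $\mathbb{P}^N$-bundle by local freeness of the direct image. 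I expect the main obstacle to sit in the inequality step: adjunction on the normalized surface $S^\nu$ controls the geometric genus, whereas $g=p_a(C)$ is the arithmetic genus, so the honest argument must run adjunction on the Gorenstein surface $S_D$ itself and show that the failure of $C$ to be Cartier — i.e.\ $C$ meeting the singular locus of $S_D$ — can only lower $p_a$, confirming that the smooth line case dominates.
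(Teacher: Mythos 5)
Your two concrete computations --- adjunction on $S=\pi^{-1}(\ell)$ with $K_S=L_S$, $E_S^2=-3$, $E_SL_S=1$, $L_S^2=0$ giving $g=d_Ed_B-\frac12(3d_B^2-d_B-2)$, and the Riemann--Roch count $\chi(\mathcal{O}(C))=d_Ed_B-\frac12(3d_B^2+d_B-6)$ with vanishing of higher cohomology giving a $\mathbb{P}^{\chi-1}$-bundle over $\mathbb{P}^2$ --- coincide with the two pieces the paper actually writes down. For the inequality itself the paper only invokes an analogy with the Castelnuovo bound in $\mathbb{P}^3$ and explicitly omits the proof, so your strategy of normalizing the image curve and running adjunction on the Weierstrass surface over $D^\nu$ is a genuinely different and more explicit route. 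But it has a real gap, which you sense at the end and then mis-diagnose. The surface $S_D=\pi^{-1}(D)$ is non-normal along the fibres over $\mathrm{Sing}(D)$, with normalization $S^\nu=S_D\times_D D^\nu$; the strict transform $\tilde C\subset S^\nu$ maps to $C$ by a finite birational morphism, so $p_a(C)=p_a(\tilde C)+\delta$ with $\delta\ge 0$. Meeting the singular locus of $S_D$ therefore \emph{raises} $p_a(C)$ above the adjunction number $G(d',m,h)$ rather than lowering it, so as written your estimate bounds the wrong quantity: $G(d',m,h)$ is not yet an upper bound for $g=p_a(C)$.

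The repair uses the same numerology. The missing input is the bound $\delta\le m\sum_{p\in\mathrm{Sing}(D)}\delta_p=m\bigl(\binom{d'-1}{2}-h\bigr)$, a local estimate at each singular point of $D$ comparing $\mathcal{O}_C$ with its image in $\nu_*\mathcal{O}_{\tilde C}$ along the degree-$m$ map $C\to D$. Granting this, $mh+\delta\le m\binom{d'-1}{2}$, which is exactly the worst case $h=\binom{d'-1}{2}$ you already allowed, and the estimate becomes $G(1,d_B,0)-\bigl(G(d',m,h)+\delta\bigr)\ge\frac12\,m\,d'(d'-1)(3m-1)>0$ for $d'\ge2$, so the strictness needed for the equality clause is retained. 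A second, smaller soft spot (shared with the paper) is the assumption that $\tilde C$ has numerical class $m\sigma+d_Ef$: if $\rho(S^\nu)>2$ this need not hold, but $\langle f,\sigma\rangle$ is a unimodular hyperbolic sublattice of $\mathrm{NS}(S^\nu)$, so its orthogonal complement is negative definite and any effective class with the prescribed intersections has $C^2\le(m\sigma+d_Ef)^2$; the adjunction number remains an upper bound. With these two patches your argument does close the part of the proof the paper leaves out.
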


Part of the proof involves an analysis of certain curves in surfaces of the form
$S=\pi^{-1}(\ell)$ which will be given below. Another
part of the proof is analogous to the proof of the Castelnuovo bound
for the genus of a curve of fixed degree in $\mathbb{P}^3$.  The proof
itself is omitted.

\medskip
We now describe curves in a smooth $S=\pi^{-1}(\ell)$.

Suppose that $S\in|L|$, i.e.\ $S=\pi^{-1}(\ell)$ for some line $\ell\subset\mathbb{P}^2$.  Then for $C\subset S$, $\pi(C)$ is either a point $p$ 
(in which case $C$ is an elliptic 
fiber of $\pi$, hence completely understood), or $\pi(C)=\ell$.  Conversely,
if $C$ is any curve in $X$ for which $\pi(C)=\ell$, then $C\subset S=\pi^{-1}(\ell)$.

We now describe all divisors on $S$ associated to line bundles on $S$
which are restrictions of line bundles on $X$.  We 
denote restrictions to $S$ by a subscript, so that the basic divisor
classes on $S$ are $E_S$ and $L_S$.  For divisors $D$ and $D'$ on $X$ we have
for the intersection on $S$ of their restrictions
\begin{equation}
  \label{eq:intrest}
  D_S\cdot D'_S=D\cdot D'\cdot L,
\end{equation}
where the intersection on the right hand side is taken in $X$.

{}From (\ref{eq:intrest}), (\ref{eq:yukawa}), and $E=H-3L$
we get
\begin{equation}
  E_S^2=-3,\ E_SH_S=0,\ H_S^2=3,\ H_SL_S=1,\ 
E_SL_S=1,
\ L_S^2=0.
  \label{eq:int}
\end{equation}
Let $C\subset S$ be in the class $[C]=d_EL_S+d_BE_S$.  Then by
(\ref{eq:int}) we see that $C\cdot L=d_B$ and $C\cdot H=d_E$, consistent
with our earlier conventions.  The moduli of such $C$ in fixed $S$ is 
a projective space of dimension depending on $(d_E,d_B)$ of a genus
also depending on $(d_E,d_B)$.  We now turn to computing the genus and 
dimension.

For simplicity of exposition, we assume that $S$ is smooth.
The adjunction formula says
\begin{equation}
  \label{eq:adjunction}
  K_S={\cal O}_S(L_S).
\end{equation}
By the projection formula we get
$H^0(S,K_S)=H^0(S,{\cal O}_S(L_S))
=H^0(\ell,{\cal O}_\ell(1))$ and so $p_g=\dim H^0(S,K_S)=2$.  Here we are using
the language of classical algebraic geometry, where $p_g=
\dim H^0(S,K_S)$ is the geometric genus of a complex surface.

It is not hard to see that $q:=h^1(S,\mathcal{O}_S)=0$.
So 
\begin{equation}
  \chi({\cal O}_S)=1-q+p_g=3.
\end{equation}

The genus is given by
\begin{equation}
  g=\frac12 C\left(C+K_S\right)+1=\frac12\left(d_EL_S+d_BE_S\right)
\left(\left(d_E+1\right)L_S+d_BE_S\right)+1
\end{equation}
which simplifies to
\begin{equation}
  \label{eq:glin}
g=    d_Ed_B+\frac12\left(-3d_B^2+d_B+2\right).
\end{equation}

The curves $C\subset S$ are a projective space of dimension $h^0(S,{\cal O}(C))
-1$.  If $d_E>0$ and $d_B>1$ then $C-K_S=C-L_S$ is ample, so by Kodaira
vanishing we have $h^i(S,{\cal O}(C))=0$ for $i>0$.  So we only need to
compute $\chi({\cal O}(C))$:
\begin{equation}
  \chi({\cal O}(C))=\frac12C\left(C-K_S\right)+\chi({\cal O}_S)
=   d_Ed_B-\frac12\left(3d_B^2+d_B-6\right).
  \label{eq:chioc}
\end{equation}
Since the moduli of these $S$ is just $\mathbb{P}^2$ corresponding to the
family of lines in $\mathbb{P}^2$, the moduli space of such curves
$C$ is a $\mathbb{P}^{\chi({\cal O}(C))-1}$-bundle over $\mathbb{P}^2$, with
$\chi({\cal O}(C))$ given by (\ref{eq:chioc}).

\subsubsection{Examples and Computations}
\label{examplesandcomputations}
In this section, we apply the results of the previous section to describe many moduli spaces of curves and compute
the associated Gopakumar-Vafa invariants.  We denote the moduli space of curves of
degree $(d_E,d_B)$ and genus $g$ by $\mathcal{M}^g_{d_E,d_B}$.  If 
$\mathcal{M}^g_{d_E,d_B}$ is smooth and there are no curves of degree $(d_E,d_B)$ and
genus strictly greater than $g$, we have
\begin{equation}
  \label{eq:smoothgv}
  n^g_{d_E,d_B}=\left(-1\right)^{\dim \mathcal{M}^g_{d_E,d_B}}
\chi\left(\mathcal{M}^g_{d_E,d_B}\right),
\end{equation}
a formula that may need correction as discussed in Section~\ref{sec:geomcurves}.

Now let $\mathcal{C}^g_{d_E,d_B}$ be the universal curve, also supposed smooth, and suppose
that there are no other curves of degree $(d_E,d_B)$ and
genus greater than or equal to $g-1$.  Then we have, by \cite{KKV} 
\begin{equation}
  \label{eq:smoothg-1}
  n^{g-1}_{d_E,d_B}=\left(-1\right)^{\dim \mathcal{M}^g_{d_E,d_B}+1}\left(
\chi\left(\mathcal{M}^g_{d_E,d_B}\right)+\left(2g-2\right)
\chi\left(\mathcal{M}^g_{d_E,d_B}\right)\right).
\end{equation}
This formula may also need correction.

\noindent
{\bf Examples:}  We begin with $d_E=0$ and any $d_B>0$.  Then we assume
$d_E>0$ and treat 
$0\le d_B\le 5$ in turn.

\smallskip\noindent
$\mathbf{d_E=0}$.

If $g=(d_B-1)(d_B-2)/2$, then Proposition~\ref{prop:plane}
says that $\mathcal{M}^g_{0,d_B}$ is identical
to the moduli space of plane curves of degree $d_E$ in local $\mathbb{P}^2$.  
In particular, the 
GV invariants $n^g_{0,d_B}$ for $g\le p_a(d_B)$ are identical with the well-known
GV invariants of  local $\mathbb{P}^2$.
This
observation provides the geometric verification of the GV invariants in
the entire first columns of Tables~\ref{genus0GVtable}--\ref{genus8GVtable}.

We will see later that a similar argument can be used to compute or verify 
many of the GV invariants in some of the subsequent columns.

Now let $C$ have degree $(d_E,d_B)$ with $d_E>0$ and we set out to bound the genus. 

\smallskip\noindent
$\mathbf{d_B=0}$.

If $d_B=0$, then by Corollary~\ref{cor:fibers} any curve with $d_B=0$
is a union of fibers. 

It is well known that for an elliptically fibered Calabi-Yau threefold $X$ over
a base $B$, the only nonzero Gopakumar-Vafa invariants in multiples of the
fiber class $f$ are
\begin{equation}
  \label{eq:gvfiber}
  \begin{array}{ccl}
    n^1_{df}&=&\chi(B),\qquad d>0\\
n^0_f&=&-\chi(X).
  \end{array}
\end{equation}
We therefore get
\begin{equation}
  n^0_{d_E,0}=540\ (d_E>0),\qquad n^1_{1,0}=3,
\end{equation}
in agreement with the $d_B=0$ rows of
Tables~\ref{genus0GVtable}--\ref{genus8GVtable}.

\smallskip\noindent $\mathbf{d_B=1}$.  

We compute some Gopakumar-Vafa invariants
for $d_B=1$ by geometry and always find complete agreement with Table~\ref{basedegree1table}
and additional calculations we have done beyond the table.

We put $D=\pi(C)$ and note that $D=\ell$ is a line, by~Lemma~\ref{lem:dproj}.  
Then $C$ has a unique irreducible component $C'$ mapping by $\pi$ onto $D$.  
Since any other components of $C$ map to points, they must be unions of fibers, which have 
$d_B=0$.  Thus $C$ has $d_B=1$.  By Proposition~\ref{prop:plane} we have 
$g(C)\le 0$.  Hence $g(C)=0$ and $C$ is a line in $E$, again by Proposition~\ref{prop:plane}.   
Since $C$ has degree $(0,1)$, we see that the remaining components have degree $(d_E,0)$, 
hence consists of $d_E$ fibers by the $d_B=0$ case.  Thus $C$ is the union of a line in $E$ and $d_E$ 
fibers, and necessarily has genus $d_E$.
>From this description, we see that $\mathcal{M}^{d_E}_{d_E,1}$ is the relative Hilbert 
scheme of $d_E$ points on lines in the plane, a $\mathbb{P}^{d_E}$-bundle over
$\mathbb{P}^2$.  Hence by (\ref{eq:smoothgv}) we have
\begin{equation}
\label{eq:db1}
\begin{array}{ccl}
n^{d_E}_{d_E,1}&=&(-1)^{d_E}3(d_E+1)\\
n^{d_E}_{d,1}&=&0\ (d<d_E).
\end{array}
\end{equation}
The second equation follows since the same description shows that all curves of degree $(d,1)$
have genus $d<d_E$.

These results agree with those presented in the $d_B=1$ row of Tables~\ref{genus0GVtable}--\ref{genus8GVtable}.  
We have also checked these results for much larger values of $d_E$, and always find agreement.

\smallskip\noindent $\mathbf{d_B=2}$.  

We compute some Gopakumar-Vafa invariants
for $d_B=2$ by geometry and always find complete agreement with Table~\ref{basedegree2}
and additional calculations we have done beyond the table.

We $D=\pi(C)$ and note that $D$ has degree 1 or 2.

If $D$ is a line $\ell$, then $C\subset \pi^{-1}(\ell)$.  By Proposition~\ref{prop:ins}  we get 
$g = 2d_E-4$.

If $D=\ell_1\cup\ell_2$ is union of lines, then just as in the $d_B=1$ case we have that $C$ 
contains irreducible components $C_1$ and $C_2$ mapping isomorphically by $\pi$ to $\ell_1$ and 
$\ell_2$ respectively.   Since each $C_i$ has degree $(0,1)$, the remaining components of 
$C$ have degree $(d_E,0)$, hence consist of $d_E$ fibers.   Thus $C$ consists of the 
reducible conic $C_1\cup C_2$ in $E$ with $d_E$ fibers attached.  

If $D$ is an irreducible conic, then $C$ has a unique component $C$ mapping to $C$ with 
degree~1.  Hence by smoothness of $C$ we see that $C$ is isomorphic to $C$.   By 
Proposition~\ref{prop:plane}, $C$ is a conic in $E$.  Since $C$ has degree $(0,2)$, we see just 
as in the last case that $C$ is the union of $C$ and $d_E$ fibers.
The last two cases combine into one moduli space of curves, parametrizing conics in $E$ 
(both smooth and singular) with $d_E$ fibers attached.  This description shows that moduli 
space is isomorphic to the relative Hilbert scheme of $d_E$ points on plane conics.  This 
space is smooth for $d_E\le 3$ \cite{KKV}.

Thus the moduli space of all curves of degree $(d_E,2)$ has two components: the one just 
described as the relative Hilbert scheme of $d_E$ points on 
plane conics, and the other component 
parametrizing curves in a surface $S=\pi^{-1}(\ell)$ as in Proposition~\ref{prop:ins}.

The curves of the first component have genus $d_E$, and the curves of the second component 
have genus $2d_E-4$.  Thus we have $g\le \mathrm{max}(d_E,2d_E-4)$.  
The maximum is realized on the first component for $d_E\le 4$ and on the 
second component for $d_E\ge4$.

So for $d_E\le 3$, all curves of genus $g=d_E$ are unions of plane conics in $E$ with
$d_E$ fibers attached, so this moduli space is isomorphic to
the relative Hilbert scheme of $d_E$ points on plane conics, and this Hilbert scheme is
smooth.  Computing Euler characteristics exactly as in \cite{KKV} or \cite{Choi:2012jz}, we have
by (\ref{eq:smoothgv}) 
\begin{equation}
  \label{eq:conicandfibers}
  n^0_{0,2}=-6,\
  n^1_{1,2}= 15,\
  n^2_{2,2}= -36,\
  n^3_{3,2}= 66
\end{equation}
in agreement with the numbers in Tables~\ref{genus0GVtable}--\ref{genus3GVtable}.  The same argument also
shows that all numbers to the left of these numbers are zero.

For $d_E\ge 5$, the above discussion shows that curves of genus $g=2d_E-4$ are all contained in surfaces $S=\pi^{-1}(\ell)$ and
are described by Proposition~\ref{prop:ins}.  So $\mathcal{M}^{2d_E-4}_{d_E,2}$ is a $\mathbb{P}^{2d_E-5}$ bundle
over $\mathbb{P}^2$, hence by (\ref{eq:smoothgv}) we get  
\begin{equation}
  n^{2d_E-4}_{d_E,2}=-3\left(2d_E-4\right)
  \label{eq:db2}
\end{equation}
and 
\begin{equation}
  n^{2d_E-4}_{d,2}=0,\qquad d<d_E,
  \label{eq:db20}
\end{equation}
in agreement with the $d_B=2$ lines of  Tables~\ref{genus6GVtable}--\ref{genus8GVtable}
for $d_E\ge5$.  We have also checked these results for much larger values of $d_E$
and always find agreement.

Our last set of checks come from the application of (\ref{eq:smoothg-1}) to curves
of type $((1,2))$.  From the genus formula $\mathrm{max}(d_E,2d_E-4)$ we see that the
hypotheses of (\ref{eq:smoothg-1}) hold if $d_E<2d_E-5$, so that $d_E\ge6$.  We need
to describe the universal curve $\mathcal{C}^{2d_E-4}_{d_E,2}$.  Given a point of
the universal curve, which we denote by $(C,p)$, we have a natural map
$\mathcal{C}^{2d_E-4}_{d_E,2}\to X$ obtained by forgetting $C$.  We claim that the fiber
is a $\mathbb{P}^{2d_E-6}$-bundle over $\mathbb{P}^1$, hence the universal curve is
smooth.  

To see this, note that the line $\ell$ for which $C\subset S=\pi^{-1}(\ell)$ must contain
the point $q=\pi(p)$.  Therefore the lines $\ell$ used to describe curves in 
the fiber over $p$ form a $\mathbb{P}^1$
rather than a $\mathbb{P}^2$.  The fiber is itself fibered over this $\mathbb{P}^1$, and
it only remains to find the fiber of this last fibration.  
Now fixing $\ell$, hence fixing $S$, the possible curves
$C$ are those corresponding to sections of $H^0(S,\mathcal{O}(d_EL_S+d_BE_S))$ which vanish at $p$,
which is a hyperplane in  $H^0(S,\mathcal{O}(d_EL_S+d_BE_S))$.  So the fiber is a 
$\mathbb{P}^{2d_E-6}$ rather than a $\mathbb{P}^{2d_E-5}$, as claimed.

We can now apply (\ref{eq:smoothg-1}) to compute
\[
n^{2d_E-5}_{d_E,2}=(-540)(2)(2d_E-5)+(4d_E-10)3(2d_E-4),\qquad d_E\ge6,
\]
which simplifies to
\begin{equation}
  \label{eq:db2g-1}
n^{2d_E-5}_{d_E,2}=12(2d_B-5)(d_B-92),\qquad d_E\ge6,
\end{equation}
which agrees with the results in Table~\ref{basedegree2} as well as larger values of
$d_E$.  Furthermore, all numbers to the left of these numbers are zero, which also checks.

\bigskip
The general procedure should now be clear.  We let $D=\pi(C)$ and we study the cases where 
$D$ has irreducible components of each possible degree and multiplicity.  In other words
the degree $d_B$ curve $D$ can split up into components $D_j$ of degree $d_j$ and multiplicity $m_j$.  So the
possible components of moduli can be indexed by partitions of $d_B$ into 
unordered lists of pairs $((d_j,m_j))$ with repetition allowed
satisfying 
\begin{equation}
  \label{eq:partd}
  d_B=\sum_j m_jd_j
\end{equation}
Correspondingly, $C$ splits up into a union of components $C_j$, mapping by $\pi$
to $D_j$ with multiplicity $m_j$, possibly together with fibers.  We investigate the
maximum genus of each $C_j$.  The genus of $C$ can then be maximized by configuring
the curves $C_j$ and any fibers so that they intersect in as many points (including
multiplicity) as possible.

For $d_B=1$, only the partition $((1,1))$ is possible.  For $d_B=2$ we could have
either $((2,1)),\ ((1,2))$, or $((1,1),(1,1))$, and all cases occurred in our discussion
above.

The analysis gets more intricate for each degree. Rather than 
carry out this program for $d_E\le 5$, we content ourselves with presenting some
Gopakumar-Vafa invariants that  can be computed by these methods, giving the
corresponding $((d_j,m_j))$, and matching to the tables. 

For any $d_B\ge2$ some strata combine naturally, as we already  saw for $d_B=2$.
Let $d_B=\sum_{j=1}^k$ be a partition of $d_B$ and consider the list
$((d_j,1)_{j=1}^k)$.  These strata combine to describe the Hilbert scheme of $d_E$
points on plane curves of degree $d_B$.  The stratification is just an artifact of
our method, corresponding  to splitting up the moduli space
according to the degrees of the components of the plane curve. 

\smallskip\noindent $\mathbf{d_B=3}$.

We compute some Gopakumar-Vafa invariants
for $d_B=3$ by geometry and always find complete agreement with Table~\ref{basedegree3}
and additional calculations we have done beyond the table.

In this case, the possible lists are $((1,1),(1,1),(1,1)),\ 
((3,1)),\ ((1,3)), ((2,1),(1,1)),\
((1,2)$, or $(1,1))$, 

\smallskip\noindent
$((3,1)),\  ((2,1),(1,1))$, and $((1,1),(1,1),(1,1))$ combine to give 
plane cubics union $d_E$ fibers, with moduli space the relative Hilbert scheme of
$d_E$ points on plane cubics, which are curves of genus $d_E+1$.

\smallskip\noindent
$((1,3))$ are curves in a surface of the form $S=\pi^{-1}(\ell)$, which have genus
$3d_E-11$ by Proposition~\ref{prop:ins}.

\smallskip\noindent
$((1,2),(1,1))$ is the union of a curve in a surface of the form $S=\pi^{-1}(\ell)$,
a line in $E$, and possibly some fibers.  It is not hard to see by extending the analysis
below that the maximum genus is attained when there are no additional fibers.  So
$C$ is the union of a curve $C_{d_E,2}$ in $S$ and a line $C_{0,1}\subset E$ (which has 
degree $(0,1)$).  Here and in the sequel, a subscript denotes the degree of a curve. 
We have already seen that $C_{d_E,2}$ has genus $2d_E-4$.

The genus of $C$ is therefore $(2d_E-4)+m-1$, where $m$ is the number of intersection
points $Z$ of $C_{d_E,2}$ and $\ell$.  We claim that either $m=1$ or we are actually
in the $((1,3))$ case. Since $C_{0,1}\subset E$ we see that  $Z\subset 
C_{d_E,2}\cap E$.  Since $S\cap E$ is a line,  it follows that if $m\ge2$, then $C_{0,1}$
must be equal to that line.  In that case $C\subset S$, and we are really in the $((1,3))$
case.  So $m=1$ and $C$ has genus $2d_E-4$.  

Combining the three cases, we have $g\le \mathrm{max}(d_E+1,3d_E-11,2d_E-4)$.

The maximum genus is attained in the first case  for $d_E\le 5$, in the second case for
$d_E\ge 7$, and in the third case for $5\le d_E\le 7$. 

If $d_E\le 4$, then the maximum genus $d_E+1$ can only be attained in the first case,
and $\mathcal{M}^{d_E+1}_{d_E,3}$ 
 is the relative Hilbert scheme of $d_E$ points on
plane cubics, which is smooth by \cite{KKV}.  So we compute from the relative 
Hilbert scheme
\begin{equation}
  \label{eq:cubicandfibers}
  n^1_{0,3}=-10,\ n^2_{1,3}=27,\ n^3_{2,3}=-72,\ n^4_{3,3}=154,\ n^5_{4,3}=-306.
\end{equation}
This agrees with Tables~\ref{genus1GVtable}--\ref{genus5GVtable}.   In addition, our
now-familiar argument shows that all numbers to the left of these numbers in the 
tables are zero, in agreement with the tables.

If $d_E\ge 8$, then the maximum genus $3d_E-11$ can only be attained in the second case
$((1,3))$,
and $\mathcal{M}^{3d_E-11}_{d_E,3}$ 
is a $\mathbb{P}^{3d_E-13}$-bundle over
$\mathbb{P}^2$ by Proposition~\ref{prop:ins}.  Hence, for $d_E\ge 8$ we have
\begin{equation}
\label{eq:db3}
  n^{3d_E-11}_{d_E,3}=(-1)^{d_E+1}9(d_E-4).
\end{equation}
For these $d_E$ we also  have $n^{3d_E-11}_{d,3}=0$ for $d<d_E$.  We have checked these
results for large values of $d_E$ and always find agreement.

For $d_E=6$, the maximum genus $2d_E-4=8$ can only be realized in the third case
$((1,2),(1,1))$.  By the discussion above, we have $C=C_{6,2}\cup C_{0,1}$, where $C_{6,2}$
lies in a surface $S=\pi^{-1}(\ell)$ and the line $C_{0,1}$ intersects $C_{6,2}$ in
exactly one point.  But $C_{6,2}\cdot E=0$ by (\ref{eq:inte}).  On the other
hand, $C_{6,2}$
cannot be disjoint from $E$, since it intersects $C_{0,1}\subset E$ nontrivially.
Therefore $C_{6,2}$ has a component which lies entirely in $E$, necessary a line or a conic.
If it contains a conic plane curve, 
then since it also contains the line $C_{0,1}$ then it contains a
(reducible) cubic plane curve.  So we are really in the case of a cubic with $d_E$ fibers
already considered.  So $C_{6,2}$ contains a line.  The remaining components are a curve 
$C_{6,1}$, which by the $d_B=1$ case are a line and 6 fibers.  If the two lines  are distinct,
then they form a conic and we are in the previous case.  Otherwise the line has 
multiplicity~2, and the curve $C_{6,2}$ is the union of the line
$E_S$ with multiplicity 2 (in $S$) and  6 fibers.  The line $C_{0,1}$ is arbitrary.
So $\mathcal{M}^8_{6,3}$ is the product of $\mathbb{P}^2$ and
the relative Hilbert scheme of 6 points in
lines in $\mathbb{P}^2$.  Since this relative Hilbert scheme is a $\mathbb{P}^6$-bundle
over $\mathbb{P}^2$,  this gives
\begin{equation}
n^8_{6,3}=63,
\end{equation} 
in agreement with Table~\ref{genus8GVtable}.  Furthermore, all of the numbers to the left
are zero, again in agreement.

We can also apply (\ref{eq:smoothg-1}) for curves of type $((1,3))$, which have genus
$3d_E-11$.  The hypotheses hold if $d_E\ge9$ and then we get
\[
n^{3d_E-12}_{d_E,3}=\left(-1\right)^{d_E}\left((-540)(2)(3d_E-13)+(6d_E-24)(3)(3d_E-12)
\right),
\]
which simplifies to
\begin{equation}
\label{eq:db3g-1}
n^{3d_E-12}_{d_E,3}=(-1)^{d_E}(54)(d^2-68d+276),\qquad d_E\ge9,
\end{equation}
which agrees with the results in Table~\ref{basedegree3} as well as larger values of
$d_E$.  Furthermore, all numbers to the left of these numbers are zero, which also checks.

\smallskip\noindent
$\mathbf{d_B=4}$.  

We compute some Gopakumar-Vafa invariants
for $d_B=4$ by geometry and always find complete agreement with Table~\ref{basedegree4}
and additional calculations we have done beyond the table.
By now we have provided enough examples that we can simply present
our results and the reader will be able to check details.

\begin{sloppypar}
Combining the cases $((4,1)),\ ((3,1),(1,1)),\ ((2,1),(2,1)),\ ((2,1),(1,1),(1,1))$,
and $((1,1),(1,1),(1,1),(1,1))$, we consider plane quartic curves in $E$ with $d_E$
fibers attached.  These curves have genus $d_E+3$.  If $d_E\le6$, any curve of degree
$(d_E,4)$ and genus $d_E+3$ can be seen to be of this type, and furthermore there
are no  curves of that degree and higher genus.  We conclude that 
$\mathcal{M}^{d_E+3}_{d_E,4}$ is the relative Hilbert scheme of $d_E$ points on plane
quartic curves, a $\mathbb{P}^{14-d_E}$ bundle over
$\mathrm{Hilb}^{d_E}(\mathbb{P}^2)$, and is smooth for $d_E\le5$ \cite{KKV}.  
We therefore get 
\end{sloppypar}
\begin{equation}
  \label{eq:d4andf}
  n^{d_E+3}_{d_E,4}=(-1)^{d_E}(15-d_E)\chi(\mathrm{Hilb}^{d_E}(\mathbb{P}^2)),
\end{equation}
hence by computing
Euler characteristics
\begin{equation}
  \label{eq:quarticandfibers}
 n^3_{0,4}=15,\ n^4_{1,4}=-42,\ n^5_{2,4}=117,\ n^6_{3,4}=-264,\ n^7_{4,4}=561,\
n^8_{5,4}=-1080.
\end{equation}
The same argument confirms in the usual way that all numbers to the left of these are zero,
in agreement with the tables.

For curves of type $((1,4))$, i.e.\ curves contained in a surface $S=\pi^{-1}(\ell)$,
we have $g=4d_E-21$, and $\mathcal{M}^{4d_E-21}_{d_E,4}$ is a $\mathbb{P}^{4d_E-24}$-bundle 
over
$\mathbb{P}^2$ by Proposition~\ref{prop:ins}.  
If $d_E\ge11$, then any curve of this degree and genus is of this type, and there are no
curves of this degree and higher genus.  This gives 
\begin{equation}
  \label{eq:gv4ins}
n^{4d_E-21}_{d_E,4}=3(4d_E-23)=12d_E-69,\ d_E\ge11
\end{equation}
in complete agreement with our calculations.  Also $n^{4d_E-21}_{d,4}=0$ for $d<d_E$.

For $d_E=7$ we have curves of type $((1,2),(2,1))$, which consist of a curve $C_{7,2}$
of degree $(7,2)$ union a conic $C_{0,2}\subset E$.  
The curve $C_{7,2}$
lies in a surface $S=\pi^{-1}(\ell)$ and has genus $2d_E-4=10$.  Since $C_{0,2}$
is to meet $C_{7,2}$ in two points but  $C_{7,2}\cdot E=1$, we see that $C_{7,2}$ 
must contain  a line in 
$E$.  Repeating an argument from the $d_B=3$ case, we further see that  $C_{7,2}$ must
contain a line doubled in $S$ together with $d_E$ fibers.  
Our geometric description shows that the moduli space
of $C_{7,2}$ is the relative Hilbert scheme of 7~points on lines, a $\mathbb{P}^7$-bundle
over $\mathbb{P}^2$.    There are no curves of degree $(7,2)$ and higher genus,
so we can now multiply be the $\mathbb{P}^5$ moduli of $C_{0,2}$ and compute the Euler
characteristic, giving
\begin{equation}
  \label{eq:gv74}
  n^{16}_{7,4}=144,
\end{equation}
in agreement with our other calculations.  In addition we have $n^{16}_{d,4}=0$
for $d<7$, also in agreement.

We can also apply (\ref{eq:smoothg-1}) for curves of type $((1,4))$, which have genus
$4d_E-21$.  The hypotheses hold if $d_E\ge12$ and then we get
\[
n^{4d_E-22}_{d_E,4}=-\left((-540)(2)(4d_E-24)+(8d_E-44)(3)(4d_E-23)
\right),
\]
which simplifies to
\begin{equation}
\label{eq:db4g-1}
n^{4d_E-22}_{d_E,4}=-54(d^2-68d+276),\qquad d_E\ge12,
\end{equation}
which agrees with the results in Table~\ref{basedegree4} as well as larger values of
$d_E$.  Furthermore, all numbers to the left of these numbers are zero, which also checks.

\smallskip\noindent
$\mathbf{d_B=5}$.

We compute some Gopakumar-Vafa invariants
for $d_B=5$ by geometry and always find complete agreement with Table~\ref{basedegree5}
and additional calculations we have done beyond the table.

We first consider the curves which are unions of plane quintics and $d_E$ fibers, which
have genus $d_E+6$.
We also compute that for $d_E\le 7$, all curves of degree $(d_E,5)$ and genus $d_E+6$
are of this type.
Therefore $\mathcal{M}^{d_E+6}_{d_E,5}$ is the relative
Hilbert scheme of $d_E$ points on plane quintics, which is smooth for
$d_E\le 6$.  We also compute that there are no curves of that degree and higher genus.
Since $\mathcal{M}^{d_E+6}_{d_E,5}$ is a $\mathbb{P}^{20-d_E}$-bundle over
$\mathrm{Hilb}^{d_E}(\mathbb{P}^2)$, we get
\begin{equation}
  \label{eq:d5andf}
  n^{d_E+6}_{d_E,5}=(-1)^{d_E}(21-d_E)\chi(\mathrm{Hilb}^{d_E}(\mathbb{P}^2))\qquad d_E\le6,
\end{equation}
or
\begin{equation}
\label{eq:quinticandfibers}
n^6_{0,5}=21,\ n^7_{1,5}=-60,\ n^8_{2,5}=171,\ n^9_{3,5}=-396,\ n^{10}_{4,5}=867,\
n^{11}_{5,5}=-1728,\ n^{12}_{6,5}=3315.
\end{equation}
in complete agreement with our calculations.  Also the Gopakumar-Vafa invariants to the
left of these numbers are all zero, also in agreement.

For curves of type $((1,5))$, i.e.\ those contained in a surface $S=\pi^{-1}(\ell)$ we have
$g=5d_E-34$ and moduli space a $\mathbb{P}^{5d_E-38}$-bundle over $\mathbb{P}^2$.
For $d_E\ge14$ we compute that all curves of degree $(d_E,5)$ are of this type, and that there
are no curves of higher genus.  This gives
\begin{equation}
\label{eq:db5ins}
n^{5d_E-34}_{5,d_E}=(-1)^{d_E}3(5d_E-37)\qquad d_E\ge14.
\end{equation}
in complete agreement with our calculations.  Also the Gopakumar-Vafa invariants to the
left of these numbers are all zero, also in agreement.

We can also apply (\ref{eq:smoothg-1}) for curves of type $((1,5))$, which have genus
$5d_E-34$.  The hypotheses hold if $d_E\ge15$ and then we get
\[
n^{5d_E-35}_{d_E,5}=\left(-1\right)^{d_E+1}\left((-540)(2)(5d_E-38)+(10d_E-70)(3)(5d_E-37)
\right),
\]
which simplifies to
\begin{equation}
\label{eq:db5g-1}
n^{5d_E-35}_{d_E,5}=(-1)^{d_E+1}(30)(5d_B^2-252d_B+1627),\qquad d_E\ge15,
\end{equation}
which agrees with the results in Table~\ref{basedegree4} as well as larger values of
$d_E$.  Furthermore, all numbers to the left of these numbers are zero, which also checks.


\vspace{0.5in} {\leftline {\bf Acknowledgments}}

We thanks Babak Haghighat, Guglielmo Lockhart and Cumrun Vafa for discussions about 
related structures in M- and E-string theories and Martin Westerholt-Raum and 
Don Zagier for crucial comments on Siegel modular- and weak Jacobi forms. We like to 
thanks Gaetan Borot, Hans Jockers and Marcos Mari\~no for comments on the draft. 
This work started when MH was affiliated with Kavli Institute for the Physics and Mathematics of the Universe 
(Kavli IPMU). He thanks Kavli IPMU for supports. MH also thanks the BCTP, the HCM  and the 
organizers of ``The 2nd Workshop on Developments in M-theory" in Korea for hospitality. 
MH is supported by the ``Young Thousand People" plan by the Central Organization 
Department in China, and by the Natural Science Foundation of China.
S.K. is supported by NSF DMS-12-01089. A.K. thanks  for support by 
KL 2271/1-1 and NSF DMS-11-01089.

\appendix 
\section{Gopakumar-Vafa invariants}
\label{appendixA}
The involution symmetry of the topological strings on elliptic Calabi-Yau manifolds 
restricts the number of the coefficients in the ambiguity to roughly one fourth. 
This allows to extract the higher genus invariants to sufficient genus $g\le 8$ to 
test the conjectures about the even weak Jacobi-Forms in reasonable detail\footnote{As 
we explained it is in principle possible to evaluate them to genus 189.   
The genus $g=0$ agree with~\cite{Hosono:1993}, the $g\leq 1$ invariants 
with~\cite{Candelas:1994hw}  and the invariants $g\leq 3$ invariants 
with~\cite{Alim:2012}.}

\begin{table}[!htbp]
\begin{center} {\scriptsize
 \begin{tabular} {|c|c|c|c|c|c|c|c|} \hline $d_B \backslash d_E$  & 0 & 1 & 2 & 3 & 4 & 5 & 6 \\  \hline 0 &  & 540 & 540 & 540 & 540 & 540 & 540 \\  \hline 1 & 3 & -1080 & 143370 & 204071184 & 21772947555 & 1076518252152 & 33381348217290 \\  \hline 2 & -6 & 2700 & -574560 & 74810520 & -49933059660 & 7772494870800 & 31128163315047072 \\  \hline 3 & 27 & -17280 & 5051970 & -913383000 & 224108858700 & -42712135606368 & 4047949393968960 \\  \hline 4 & -192 & 154440 & -57879900 & 13593850920 & -2953943334360 & 603778002921828 & -90433961251273800 \\  \hline 5 & 1695 & -1640520 & 751684050 & -218032516800 & 51350781706785 & -11035406089270080 & 2000248139674298880 \\  \hline \end{tabular}
 }
\caption{The GV invariants $n^g_{(d_E,d_B)}$ for genus $g=0$ for the elliptic Calabi-Yau $X(1,1,1,6,9)$}  
\label{genus0GVtable}
 \end{center}

\begin{center} {\scriptsize
 \begin{tabular} {|c|c|c|c|c|c|c|c|} \hline $d_B \backslash d_E$  
   & 0 & 1 & 2 & 3 & 4 & 5 & 6 \\  \hline 
 0 &  & 3 & 3 & 3 & 3 & 3 & 3 \\  \hline 
 1 & 0 & -6 & 2142 & -280284 & -408993990 & -44771454090 & -2285308753398 \\  \hline 
 2 & 0 & 15 & -8574 & 2126358 & 521856996 & 1122213103092 & 879831736511916 \\  \hline 
 3 & -10 & 4764 & -1079298 & 152278986 & -16704086880 & -3328467399468 & 1252978673852088 \\  \hline 
 4 & 231 & -154662 & 48907815 & -9759419622 & 1591062421074 & -186415241060547 & 8624795298947118 \\  \hline 
 5 & -4452 & 3762246 & -1510850250 & 385304916960 & -76672173887766 & 12768215950604064 & -1663415916220743876 \\  \hline \end{tabular}
 }
\caption{The GV invariants $n^g_{(d_E,d_B)}$ for genus $g=1$ for the elliptic Calabi-Yau $X(1,1,1,6,9)$}  
\label{genus1GVtable}
 \end{center}

\end{table}
\begin{table}[!htbp]

\begin{center} {\scriptsize
 \begin{tabular} {|c|c|c|c|c|c|c|c|} \hline $d_B \backslash d_E$  
   & 0 & 1 & 2 & 3 & 4 & 5 & 6 \\  \hline 
 0 &  & 0 & 0 & 0 & 0 & 0 & 0 \\  \hline 
 1 & 0 & 0 & 9 & -3192 & 412965 & 614459160 & 68590330119 \\  \hline 2 & 0 & 0 & -36 & 20826 & -5904756 & -47646003780 & -80065270602672 \\  \hline 3 & 0 & 27 & -16884 & 4768830 & -818096436 & 288137120463 & 67873415627151 \\  \hline 4 & -102 & 57456 & -15452514 & 2632083714 & -320511624876 & 18550698291252 & 780000198300540 \\  \hline 5 & 5430 & -4032288 & 1430896428 & -323858122812 & 55058565096630 & -7249216518163620 & 691264676523200805 \\  \hline \end{tabular}
}
\caption{The GV invariants $n^2_{(d_E,d_B)}$ for the elliptic Calabi-Yau $X(1,1,1,6,9)$. 
Note that $n^{(0,d_2)}_{g>1}=0$, $\forall d_2 \in \mathbb{N}$. We therefore omit the $(0,d_2)$ line below.}  
\label{genus2GVtable}
 \end{center}

\begin{center} {\scriptsize
 \begin{tabular} {|c|c|c|c|c|c|c|c|} \hline $d_B \backslash d_E$  & 0 & 1 & 2 & 3 & 4 & 5 & 6 \\  \hline 
  1 & 0 & 0 & 0 & -12 & 4230 & -541440 & -820457286 \\  \hline 2 & 0 & 0 & 0 & 66 & -45729 & 627574428 & 3776946955338 \\  \hline 3 & 0 & 0 & -72 & 48036 & -14756490 & 297044064 & -7900517344212 \\  \hline 4 & 15 & -7236 & 1638918 & -226431351 & 20419274259 & -719284158099 & 236091664016826 \\  \hline 5 & -3672 & 2417742 & -764921214 & 154856849136 & -22866882491772 & 2493418732350750 & -194361733345447458 \\  \hline \end{tabular}
 }
\caption{The GV invariants $n^3_{(d_E,d_B)}$ for the elliptic Calabi-Yau $X(1,1,1,6,9)$}  
\label{genus3GVtable}
 \end{center}

\end{table}
\begin{table}[!htbp]
 
 \begin{center} {\scriptsize
  \begin{tabular} {|c|c|c|c|c|c|c|c|} \hline $d_B \backslash d_E$  
  & 0 & 1 & 2 & 3 & 4 & 5 & 6 \\  \hline  
  1 & 0 & 0 & 0 & 0 & 15 & -5256 & 665745 \\  \hline 2 & 0 & 0 & 0 & 0 & -132 & -453960 & -95306132778 \\  \hline 3 & 0 & 0 & 0 & 154 & -110574 & 38259441 & 218140445904 \\  \hline 4 & 0 & -42 & 26946 & -7824888 & 1386011568 & -172919782116 & -4345528029372 \\  \hline 5 & 1386 & -819123 & 232934157 & -42321589218 & 5500907292240 & -520718843839590 & 38245592568676608 \\  \hline \end{tabular}
   }
\caption{The GV invariants $n^4_{(d_E,d_B)}$ for the elliptic Calabi-Yau $X(1,1,1,6,9)$}  
\label{genus4GVtable}
 \end{center}
 
 \begin{center} {\scriptsize
  \begin{tabular} {|c|c|c|c|c|c|c|c|} \hline $d_B \backslash d_E$  
    & 0 & 1 & 2 & 3 & 4 & 5 & 6 \\  \hline 
  1 & 0 & 0 & 0 & 0 & 0 & -18 & 6270 \\  \hline 
  2 & 0 & 0 & 0 & 0 & 0 & -5031 & 1028427996 \\  \hline 
  3 & 0 & 0 & 0 & 0 & -306 & 247014 & -2562122952 \\  \hline 
  4 & 0 & 0 & 117 & -81225 & 26211942 & -5223900087 & 1263109811373 \\  \hline 
  5 & -270 & 144414 & -36870264 & 5929743618 & -665294451264 & 53375661928620 & -3651471177372864 \\  \hline \end{tabular}
   }
\caption{The GV invariants $n^5_{(d_E,d_B)}$ for the elliptic Calabi-Yau $X(1,1,1,6,9)$}  
\label{genus5GVtable}
 \end{center}

 \begin{center} {\scriptsize
 \begin{tabular} {|c|c|c|c|c|c|c|c|c|} \hline $d_B \backslash d_E$  
   & 0 & 1 & 2 & 3 & 4 & 5 & 6 & 7 \\  \hline 
 1 & 0 & 0 & 0 & 0 & 0 & 0 & 21 & -7272 \\  \hline 
 2 & 0 & 0 & 0 & 0 & 0 & -18 & -771642 & -147864402162 \\  \hline 
 3 & 0 & 0 & 0 & 0 & 0 & 612 & 1401468 & 502063861662 \\  \hline 
 4 & 0 & 0 & 0 & -264 & 200430 & -70438068 & 9510828972 & -29413672557570 \\  \hline 
 5 & 21 & -9972 & 2156373 & -268703481 & 18682746903 & -182455706016 & -81900631565910 & 22514515679407491 \\  \hline \end{tabular}
  }
\caption{The GV invariants $n^6_{(d_E,d_B)}$ for the elliptic Calabi-Yau $X(1,1,1,6,9)$}  
\label{genus6GVtable}
 \end{center}
\end{table}
\begin{table}[!htbp]
 \begin{center} {\scriptsize
 \begin{tabular} {|c|c|c|c|c|c|c|c|c|} \hline \!\!\!\!$d_B\! \backslash\! d_E$ \!\!\!\!\!\!\!\!
   & 0 & 1 & 2 & 3 & 4 & 5 & 6 & 7 \\  \hline 
 1 & 0 & 0 & 0 & 0 & 0 & 0 & 0 & -24 \\  \hline 
 2 & 0 & 0 & 0 & 0 & 0 & 0 & -7224 & 1443561648 \\  \hline 
 3 & 0 & 0 & 0 & 0 & 0 & 0 &  17386& -4962183570  \\  \hline  
 4 & 0 & 0 & 0 & 0 &561&  -447903& 170978160 & 605964021294 \\  \hline 
 5 & 0 &-60&  38340& -10994520 & 1895073858 & -217773585972 & 16072935664050 & -1937578925283840 \\  \hline 
 6 & \!\!\!\!\! 27538\!\!\!\!& \!\!\!\!\! -16386600\!\!\!\! & \!\!\!\!\! 4710791727\!\!\!\! & \!\!\!\! -868872423987\!\!\!\! 
 & \!\!\!\!\! 115076024047737\!\!\!\! & \!\!\!\!\! -11554540079426667\!\!\!\! &\!\!\!\!\! 915758222342784613\!\!\!\! 
 &\!\!\!\!\!  -59159328867116232828\!\!\!\! \\  \hline 
 \end{tabular}
  }

\caption{The GV invariants $n^7_{(d_E,d_B)}$ for $X(1,1,1,6,9)$}  
\label{genus7GVtable}
 \end{center}
 
 \begin{center} {\scriptsize
 \begin{tabular} {|c|c|c|c|c|c|c|c|c|c|} \hline \!\!\!\!$d_B\! \backslash\! d_E$ \!\!\!\!\!\!\!\!
   & 0 & 1 & 2 & 3 & 4 & 5 & 6  & 7 &8\\  \hline 
 1 & 0 & 0 & 0 & 0 & 0 & 0 & 0  & 0 &27\\  \hline 
 2 & 0 & 0 & 0 & 0 & 0 & 0 &-24 & -995490& -203754011670 \\  \hline 
 3 & 0 & 0 & 0 & 0 & 0 & 0 & 63 & 3396663& 11118565777779  \\  \hline  
 4 & 0 & 0 & 0 & 0 & 0 &-1080&951204&-6600292956&-74890630203552  \\  \hline 
 5 & 0 & 0 & \!\!\!\!\! 171& \!\!\!\!\! -119415& \!\!\!\!\! 38611944\!\!\!\!& \!\!\!\!\! -7672460076\!\!\!\!& \!\!\!\!\! 
 1056853387755\!\!\!\!& \!\!\!\!\! -49874149196514\!\!\!\!&\!\!\!\!\!  29707605109699254 \!\!\!\! \\ \hline
 6 &\!\!\!\!-5310\!\!\!\!& \!\!\!\!2949516\!\!\!\!& \!\!\!\!-785916540\!\!\!\!&\!\!\!\! $132969\cdot 10^{7}$&\!\!\!\!  $-159045\cdot 10^{8}$
 &$14105 \cdot 10^{11}$ &  $-952701\cdot 10^{11}$ &  $471505\cdot 10^{13}$ & $-112148\cdot 10^{15} $ \\ \hline
 \end{tabular}
  }
\caption{The GV invariants $n^8_{(d_E,d_B)}$ for $X(1,1,1,6,9)$. For last 6 numbers we 
give only 6 significant digits.}  
\label{genus8GVtable}
 \end{center}
\end{table}

\begin{sidewaystable}[!htbp]
\begin{center} {\tiny
 \begin{tabular} {|c|c|c|c|c|c|c|c|c|c|c|c|c|c|c|c|} \hline $g\backslash d_E$  
 & 0& 1& 2& 3& 4& 5& 6& 7& 8& 9& 10& 11& 12 \\  \hline 
0&-192& 154440& -57879900& 13(7)20& -29(9)60& 60(11)28& -90(13)00& 50(16)00& -48(18)00& -82(20)80& -10(25)00& -13(27)20& 67(30)72\\ 
1&231& -154662& 48907815& -97(6)22&15(9)74& -18(11)47& 86(12)18& 20(15)20& -45(18)44& 29(20)28& -37(24)96& 23(26)82& 68(30)88\\ 
2&-102& 57456& -15452514& 26(6)14& -32(8)76& 18(10)52& 78(11)40& -25(14)44& -15(17)96& 27(20)32& -95(22)96& 44(26)98& 22(30)64\\ 
3&15& -7236& 1638918& -226431351& 20(7)59& -71(8)99& 23(11)26& -65(13)34& 46(16)88& 79(18)02& 19(21)16& 16(25)73& 16(29)30\\ 
4&0& -42& 26946& -7824888& 13(6)68& -17(8)16& -43(9)72& -32(12)50& -32(15)74& -41(18)06& -30(21)82& -33(24)50& -26(27)20\\ 
5&0& 0& 117& -81225& 26211942& -52(6)87& 12(9)73& 45(11)41& 38(14)26& 69(17)56& 80(20)68& 10(24)95& 12(27)02\\ 
6&0& 0& 0& -264& 200430& -70438068& 95(6)72& -29(10)70& -34(13)96& -89(16)70& -15(20)66& -25(23)86& -39(26)80\\ 
7&0& 0& 0& 0& 561& -447903& 170978160& 60(8)94& 20(12)96& 89(15)66& 23(19)94& 55(22)47& 10(26)89\\ 
8&0& 0& 0& 0& 0& -1080& 951204& -66(6)56& -74(10)52& -68(14)40& -29(18)10& -10(22)00& -27(25)42\\ 
9&0& 0& 0& 0& 0& 0& 2136& 2312838& 15(9)21& 39(13)66& 30(17)24& 16(21)56& 61(24)68\\ 
10&0& 0& 0& 0& 0& 0& 0& 38160& -13(7)34& -16(12)88& -25(16)96& -22(20)48&-12(24)34\\ 
11&0& 0& 0& 0& 0& 0& 0& 144& 8502057& 47(10)15& 16(15)89& 26(19)83& 21(23)35\\ 
12&0& 0& 0& 0& 0& 0& 0& 0& 83124& -83(8)68& -85(13)84& -26(18)64& -33(22)98\\ 
13&0& 0& 0& 0& 0& 0& 0& 0& 279& 68(6)48& 33(12)40& 22(17)46& 46(21)01\\ 
14&0& 0& 0& 0& 0& 0& 0& 0& 0& -3921582& -89(10)36& -16(16)08& -56(20)92\\ 
15&0& 0& 0& 0& 0& 0& 0& 0& 0& -37968& 14(9)23& 91(14)86& 59(19)08\\ 
16&0& 0& 0& 0& 0& 0& 0& 0& 0& -126& -11(7)74& -41(13)20& -54(18)60\\ 
17&0& 0& 0& 0& 0& 0& 0& 0& 0& 0& 6330231& 14(12)70& 42(17)12\\ 
18&0& 0& 0& 0& 0& 0& 0& 0& 0& 0& 61632& -34(10)28& -27(16)82\\ 
19&0& 0& 0& 0& 0& 0& 0& 0& 0& 0& 204& 52(8)13& 14(15)85\\ 
20&0& 0& 0& 0& 0& 0& 0& 0& 0& 0& 0&-37(6)20& -62(13)82\\ 
21&0& 0& 0& 0& 0& 0& 0& 0& 0& 0& 0& 1819827& 20(12)15\\ 
22&0& 0& 0& 0& 0& 0& 0& 0& 0& 0& 0& 18648& -46(10)04\\ 
23&0& 0& 0&0& 0& 0& 0& 0& 0& 0& 0& 63& 66(8)95\\ 
24&0& 0& 0& 0& 0& 0& 0& 0& 0& 0& 0& 0& -46(6)82\\ 
25&0& 0& 0& 0& 0& 0& 0& 0& 0& 0& 0&0& 2097267\\ 
26&0& 0& 0& 0& 0& 0& 0& 0& 0& 0&0& 0& 22020\\ 
27&0& 0& 0& 0& 0& 0& 0& 0& 0& 0& 0& 0& 75\\ \hline    
\end{tabular}}
\caption{Some BPS invariants for  $n_{(d_E,4)}^{g}$. To save space we only give 
the first and the last two significant digits and the number of omitted digits in brackets.}  
\label{basedegree4}
 \end{center}
\end{sidewaystable}

\begin{sidewaystable}[!htbp]
\begin{center} {\tiny
 \begin{tabular} {|c|c|c|c|c|c|c|c|c|c|c|c|c|c|c|c|c|} \hline $g\backslash d_E$  
& 0& 1& 2& 3& 4& 5& 6& 7& 8& 9& 10& 11& 12& 13& 14& 15  \\  \hline 
0&1695& -1640520& 751684050& -21(8)00& 51(10)85& -11(13)80& 20(15)80& -54(17)60& 10(20)65& -70(21)20& 25(25)20& 31(27)60& -99(29)70& -81(33)40& -23(36)50& 43(39)00\\ 
1&-4452& 3762246& -15(6)50& 38(8)60& -76(10)66& 12(13)64& -16(15)76& 22(17)36& 10(19)98& -74(21)06& 99(24)52& -18(27)56& -99(29)22& -94(33)24& -25(36)62& 10(40)68\\ 
2&5430& -4032288& 14(6)28& -32(8)12& 55(10)30& -72(12)20& 69(14)05& -39(16)24& -66(18)64& -10(21)92& 70(23)20& -18(27)72& -10(29)93& -35(33)00& -28(35)72& 10(40)80\\ 
3&-3672& 2417742& -764921214& 15(8)36& -22(10)72& 24(12)50& -19(14)58& 11(16)92& -12(18)12& 68(20)02& -70(21)40& -84(25)68& 19(29)60& -36(32)16& 44(35)14& 46(39)96\\ 
4&1386& -819123& 232934157& -42(7)18& 55(9)40& -52(11)90& 38(13)08& -28(15)92& 34(17)58& 23(19)08& -26(22)61& 13(25)28& 11(28)61& -16(31)69& 87(34)33& 93(38)50\\ 
5&-270& 144414& -36870264& 59(6)18& -66(8)64& 53(10)20& -36(12)64& 25(14)44& -17(16)10& -15(19)46& -27(21)42& -22(24)32& -31(27)40& -28(30)90& 11(33)48& 71(37)58\\ 
6&21& -9972& 2156373& -268703481& 18(7)03& -18(8)16& -81(10)10& 22(13)91& -37(15)35& 26(18)12& 12(21)14& 92(23)67& 99(26)51& 10(30)21& 10(33)53& 28(36)36\\ 
7&0& -60& 38340& -10994520& 18(6)58& -21(8)72& 16(10)50& -19(12)40& -21(12)98& -21(17)94& -27(20)58& -24(23)38& -31(26)52& -38(29)24& -46(32)28& -54(35)46\\ 
8&0& 0& 171& -119415& 38611944& -76(6)76& 10(9)55& -49(10)14& 29(13)54& 20(16)13& 42(19)53& 52(22)17& 84(25)73& 13(29)75& 19(32)23& 26(35)66\\ 
9&0& 0& 0& -396& 304434& -108352800& 23(7)36& -48(9)02& -14(12)90& -19(15)44& -52(18)12& -93(21)32& -19(25)94& -39(28)26& -71(31)74& -11(35)70\\ 
10&0& 0& 0& 0& 867& -706221& 268652163& -52(7)55& 81(10)13& 13(14)92& 53(17)38& 13(21)73& 40(24)22& 10(28)60& 24(31)74& 50(34)08\\ 
11&0& 0& 0& 0& 0& -1728& 1494120& -610535214& -12(9)20& -63(12)68& -43(16)74& -17(20)12& -74(23)92& -25(27)14& -76(30)86& -19(34)82\\ 
12&0& 0& 0&0& 0& 0& 3315& -3061881& 13(7)53& 19(11)21& 28(15)59& 18(19)44& 11(23)85& 56(26)96& 21(30)18& 70(33)41\\ 
13&0& 0& 0& 0& 0& 0& 0& -6270& -1610226& -35(9)86& -14(14)78& -16(18)92& -16(22)22& -10(26)54& -57(29)14& -23(33)82\\ 
14&0& 0& 0& 0& 0& 0& 0& 0& -66573& 29(7)54& 52(12)81& 11(17)40& 19(21)11& 19(25)17& 13(29)90& 72(32)56\\ 
15&0& 0& 0& 0& 0& 0& 0& 0& -270& -16295250& -13(11)04& -70(15)32& -20(20)00& -30(24)36& -30(28)78& -21(32)64\\ 
16&0& 0& 0& 0& 0& 0& 0& 0& 0& -165507& 21(9)57& 32(14)83& 18(19)79& 42(23)06& 60(27)28& 56(31)57\\ 
17&0& 0& 0& 0& 0& 0& 0& 0& 0& -564& -16(7)16& -11(13)58& -14(18)92& -52(22)34& -11(27)18& -13(31)80\\ 
18&0& 0& 0& 0& 0& 0& 0& 0& 0& 0& 8362953& 27(11)43& 89(16)62& 57(21)90& 18(26)12&31(30)32\\ 
19&0& 0& 0& 0& 0& 0& 0& 0& 0& 0& 85002& -42(9)80& -46(15)04& -55(20)28& -27(25)04& -67(29)34\\ 
20&0& 0& 0& 0& 0& 0& 0& 0& 0& 0& 288& 30(7)39& 19(14)10& 46(19)61& 37(24)90& 13(29)42\\ 
21&0& 0& 0& 0& 0& 0& 0& 0& 0& 0& 0&-15047856& -61(12)40& -32(18)66& -45(23)04& -23(28)60\\ 
22&0& 0& 0& 0& 0& 0& 0& 0& 0& 0& 0& -154206& 13(11)69& 19(17)96& 49(22)11& 39(27)25\\ 
23&0& 0& 0& 0& 0& 0& 0& 0& 0& 0& 0&-522& -19(9)28& -95(15)98& -48(21)02& -58(26)82\\ 
24&0& 0& 0& 0& 0& 0& 0& 0& 0& 0& 0& 0& 13(7)15& 37(14)01& 40(20)15& 80(25)15\\ 
25&0& 0& 0& 0& 0& 0& 0& 0& 0& 0& 0& 0& -5815272& -11(13)26& -30(19)88& -10(25)94\\ 
26&0& 0& 0& 0& 0& 0& 0& 0& 0& 0& 0& 0& -62616& 23(11)81& 19(18)58& 11(24)73\\ 
27&0& 0& 0& 0& 0& 0& 0& 0& 0& 0& 0& 0& -216& -32(9)98& -10(17)16& -11(23)02\\ 
28&0& 0& 0& 0& 0& 0& 0& 0& 0& 0& 0& 0& 0& 21(7)41& 45(15)55& 10(22)70\\ 
29&0& 0& 0& 0& 0& 0& 0& 0& 0& 0& 0&0& 0& -8650458& -15(14)42& -80(20)54\\ 
30&0& 0& 0& 0& 0& 0& 0& 0& 0& 0& 0& 0& 0& -96237& 43(12)14& 55(19)26\\ 
31&0& 0& 0& 0& 0& 0& 0& 0& 0& 0& 0& 0& 0& -336& -84(10)36& -32(18)78\\ 
32&0& 0& 0& 0& 0& 0& 0& 0& 0& 0& 0& 0& 0& 0& 10(9)61& 16(17)50\\ 
33&0& 0& 0& 0& 0& 0& 0& 0& 0& 0& 0& 0& 0& 0& -63(6)14& -67(15)76\\ 
34&0& 0& 0& 0& 0& 0& 0& 0& 0& 0& 0& 0& 0& 0& 2245968& 22(14)53\\ 
35&0& 0& 0& 0& 0& 0& 0& 0& 0& 0& 0& 0& 0& 0& 27342& -58(12)80\\ 
36&0& 0& 0& 0& 0& 0& 0& 0& 0& 0& 0& 0& 0& 0& 99& 10(11)24\\ 
37&0& 0& 0& 0& 0& 0& 0& 0& 0& 0& 0&0& 0& 0& 0& -12(9)22\\ 
38&0& 0& 0& 0& 0& 0& 0& 0& 0& 0& 0& 0& 0& 0& 0& 74(6)99\\ 
39&0& 0& 0& 0& 0& 0& 0& 0& 0& 0& 0& 0& 0& 0& 0& -2395884\\ 
40&0& 0& 0& 0& 0& 0& 0& 0& 0& 0& 0&0& 0& 0& 0& -30840\\ 
41&0& 0& 0& 0& 0& 0& 0&  0& 0& 0& 0& 0& 0& 0& 0& -114\\ \hline    
\end{tabular}}
\caption{Some BPS invariants for  $n_{(d_E,5)}^{g}$. To save space we only give 
the first and the last two significant digits and the number of omitted digits in brackets.}  
\label{basedegree5}
 \end{center}
\end{sidewaystable}

\newpage

\section{Derivation of the involution symmetry on the propagators}

\label{propagators}
For the compact models we have the  an-holomorphic propagators $S^{ij}, S^{i}, S$, where $i,j$ runs over 
the complex structure moduli. It is convenient to make a change of variables with the derivative of 
K\"ahler potential $K_i=\partial_i K$ by the following
\begin{eqnarray}
S^{ij}\rightarrow S^{ij}, ~~~S^i\rightarrow S^i - S^{ij}K_j, ~~~ S\rightarrow S- S^iK_i +\frac{1}{2} S^{ij}K_iK_j, 
\label{shift}  
\end{eqnarray}
In the following we refer to the propagators $S^{ij}, S^{i}, S$ as the ones after the change of 
variables in~\cite{Alim:2007}. 

The propagators are defined by relating their anti-holomorphic derivatives to three point couplings. 
One can integrate these relations and also special geometry relation, and we get 
\begin{eqnarray} \label{propa2.6}
\Gamma_{ij}^k &=& \delta^{k}_{i}K_j + \delta^k_jK_i -C_{ijl}S^{kl} +s^k_{ij}, \nonumber \\
\partial_iS^{jk} &=& C_{imn} S^{mj} S^{nk} +\delta^j_i S^k +\delta^k_i S^j -s^j_{im} S^{mk} -s^k_{im} S^{mj} +h^{jk}_i,  \nonumber \\
\partial_i S^j &=& C_{imn} S^{mj} S^n +2 \delta^j_i S - s^j_{im} S^m -h_{ik} S^{kj} +h^j_i,  \nonumber \\
\partial_i S &=& \frac{1}{2} C_{imn} S^m S^n -h_{ij} S^j +h_i, \nonumber \\
\partial_i K_j &=& K_iK_j -C_{ijn} S^{mn} K_m +s^m_{ij} K_m -C_{ijk}S^k +h_{ij} .  
\end{eqnarray}
Here the holomorphic ambiguities $s^k_{ij}, h^{jk}_i, h_{ij}, h^j_i, h_i$ are some rational functions from the 
integration constants of the anti-holomorphic derivatives.  We can compute the Kahler potential 
$K$ and Christoffel connections $\Gamma_{ij}^k$ in the holomorphic limit from Picard-Fuchs equations.  
There are some freedom to choose some of the ambiguities such that the set of equations has a consistent solution for the other ambiguities and the propagators $S^{ij}, S^{i}, S$.

We first discuss the gauge choice made in \cite{Alim:2012}, which are the followings 
\begin{eqnarray} \label{holoambi2.7} 
&& s^1_{11} = -\frac{2}{z_1},   ~~~~~~ s^1_{12}=   -\frac{1}{3z_2},  ~~~~~~   s^1_{22}=0, \nonumber \\
&& s^2_{11}=0, ~~~~~~ s^2_{12} = 0, ~~~~~~  s^2_{22} = -\frac{4}{3z_2},  \nonumber \\
&& h^{1 1}_{1} = z_1[\frac{1}{9} - 48z_1 + \frac{5}{6}z_2 - 540z_1z_2], ~~~ 
h^{12}_{1} =  z_2[ -\frac{5}{108}  - \frac{5}{4}z_2 + 20z_1 + 540 z_1z_2],  \nonumber \\
&& h^{22}_{1} = -60z^2_2(1 + 27z_2), ~~~ 
h^{1 1}_{2} = -60z^3_1, \nonumber \\
&& h^{1 2}_{2} = z_1[\frac{1}{9} + \frac{5}{12}z_2 - 48 z_1],  ~~~
h^{2 2}_{2} = z_2[-\frac{23}{54} + 40z_1 - \frac{5}{2} z_2 - 540z_1z_2], \nonumber \\
&& h^1_1 = \frac{155}{27} z_1 - \frac{25}{1296} z_2 + 50 z_1z_2, ~~~
h^2_1 = 0,  \nonumber \\
&& h^1_2 = - \frac{5}{18} z_1 + 120 z^2_1,  ~~~   h^2_2 = \frac{155}{27} z_1 + \frac{1055}{1296} z_2 + 50 z_1z_2, \nonumber \\
&& h_1 = \frac{25}{23328z_1},  ~~~~  h_2 = -\frac{50}{3} z_1 ,    \nonumber \\
&& h_{1 1} = \frac{5}{36z_1z_2}, ~~~  h_{1 2}  = \frac{5}{108z_1z_2}, ~~~  h_{2 2} = 0. 
\end{eqnarray}

We will need to know how the propagators transform under the involution. Our guiding principle is the 
followings. We replace the coordinates $z_i$'s with $x_i$'s and all quantities with their transformations 
in the equations (\ref{propa2.6}). We then require the resulting equations are equivalent to the original ones. 
Here for the holomorphic ambiguities we can simply replace the $z_i$ coordinates with $x_i$'s in the expressions (\ref{holoambi2.7}). 
Since the Picard-Fuchs operators are invariant under the involution, we argue the Kahler potential is invariant $\tilde{K} = K$, and the Christoffel symbols transform as 
\begin{eqnarray} \label{Christoffeltrans}
\tilde{\Gamma}_{ij}^k = \frac{\partial x_k}{\partial z_l} \frac{\partial z_m}{\partial x_i} \frac{\partial z_n}{\partial x_j}
\Gamma_{mn}^l +   \frac{\partial ^2 z_m}{ \partial x_i \partial x_j} \frac{\partial x_k} {\partial z_m}
\end{eqnarray} 

We find that it is possible to solve for the transformed propagators $\tilde{S}^{ij},  \tilde{S}^{i}, \tilde{S}$ 
such that the transformed equations of (\ref{propa2.6}) are equivalent to the original ones. 

It is easy to check that with the choices of the ambiguities $s^i_{jk}$ in (\ref{holoambi2.7}), the transformations are 
\begin{eqnarray} \label{sshift3.25}
\tilde{s}_{ij}^k = \frac{\partial x_k}{\partial z_l} \frac{\partial z_m}{\partial x_i} \frac{\partial z_n}{\partial x_j}
s_{mn}^l +   \frac{\partial ^2 z_m}{ \partial x_i \partial x_j} \frac{\partial x_k} {\partial z_m}. 
\end{eqnarray} 
The shift exactly cancels that of the Christoffel symbols in the first equation in (\ref{propa2.6}), therefore if $S^{ij}$ transform as tensor with a minus sign, the transformed equation is equivalent to the original one by a coordinate transformation.

To solve for the shifts in the transformed propagators $\tilde{S}^{i}$, we combine the involution 
transformation  with the coordinate transformation of the second equation in (\ref{propa2.6}). 
We find the shift in the gauge $s_{ij}^k$ exactly cancels the one from the derivative of $S^{jk}$ 
propagator and all propagators cancel out, so the shifts $f^i$ ($i=1,2$) are determined by the equations
\begin{eqnarray} \label{solf3.26}
\delta^j_i f^k+  \delta^k_i f^j + \tilde{h}^{jk}_i +  \frac{\partial x_j}{\partial z_l}  \frac{\partial x_k}{\partial z_m} \frac{\partial z_n}{\partial x_i} h^{lm}_n =0. 
\end{eqnarray}
We see that if $h^{jk}_i $ transformed as a tensor with a minus sign, 
the shifts would have vanished. The shifts are results of the non-tensorial 
transformation of the gauge choice $h^{jk}_i $.  

Similarly we combine the involution transformation with the coordinate transformation of the third 
equation in (\ref{propa2.6}). Utilizing the solution (\ref{shift2.11}), we again find all 
dependence on propagators cancel, and the shift $f^0$ is determined by 
\begin{eqnarray}
\frac{\partial z_k}{\partial x_i} (\partial_k f^j) = 2\delta^j_i f^0 - \tilde{s}^j_{im} f^m + \tilde{h}^j_i +  \frac{\partial x_j}{\partial z_l} \frac{\partial z_n}{\partial x_i} h^{l}_n, 
\end{eqnarray} 
and we find the the last equation in (\ref{shift2.11})

With the solutions for the shifts (\ref{shift2.11}), one can further check 
that the involution transformation of the fourth and fifth equations in (\ref{propa2.6}) 
are equivalent to the original ones by coordinate transformations. 

It is possible to find a gauge choice different from (\ref{holoambi2.7}), 
such that the shifts in propagators $f^0, f^i$ vanish. We can keep the propagators 
$S^{ij}$ and ambiguity $s^i_{jk}$, and shift the gauge choice for $h^{jk}_i$ in (\ref{holoambi2.7}) by 
\begin{eqnarray}
h^{jk}_i\rightarrow h^{jk}_i + \delta^i_i h^k +  \delta^k_i h^j
\end{eqnarray} 
where as an example we choose 
\begin{eqnarray}
h^1 = \frac{5z_1}{12}(432z_1-1) , ~~~~ h^2 = \frac{5z_2}{216}(1+54z_2) . 
\end{eqnarray}

One can easily check that after the shifts, the ambiguity $h^{jk}_i$ transforms as a tensor with 
a minus sign under the involution transformation. So according to (\ref{solf3.26}) this new 
gauge choice will eliminate the shifts $f^1, f^2$ in the involution transformation of the 
propagators $S^i$. The shift in $h^{jk}_i$ will change the propagators $S^i, S$ and the other 
ambiguities $h^j_i, h_i, h_{ij}$ as well, which can be straightforwardly determined.  

Similarly we can further shift the ambiguity $h^j_i$ to cancel the shift $f^0$ in the 
involution transformation of the propagator $S$. 

The gauge choice which gives no shifts for the involution transformation of the 
propagators is conceptually simple, and give a much better understanding of the 
involution symmetry  as auto-equivalence of the underlying derived categories. 
However, as it turns out such a gauge choice would introduce pole of $1-432z_1$ 
in the ambiguities  $h^j_i$, $h_i$. Without detailed calculations, this point 
can be seen by noting that the shifts  (\ref{shift2.11}) have 
poles in $1-432z_1$ which does not appear in the gauge choice in  (\ref{holoambi2.7}), 
so in order to cancel the shifts, we would likely need to introduce poles of $1-432z_1$ 
in the ambiguities. As a consequence, the holomorphic ambiguities at higher genus 
would have also poles at $1-432z_1$. The total topological string amplitudes at higher 
genus should be regular at  $1-432z_1$, so the singular part of the holomorphic 
ambiguities at the pole $1-432z_1$ can be fixed and the gauge choice does not really 
enlarge the space of holomorphic ambiguities at higher genus but complicates the calculations. 

In practical calculations it is simpler to use the gauge  (\ref{holoambi2.7}), 
for which it turns out the holomorphic ambiguities at higher genus have no pole at  
$1-432z_1$, but only at the conifold divisors $\Delta_1$ and $\Delta_2$. In the followings 
we will still use the gauge choice (\ref{holoambi2.7}).

\section{Reducing the ambiguity with the involution symmetry}
\label{effectiveboundaries} 
We can compute the number of remaining unknown constants for general genus $g$ after imposing the 
condition (\ref{involutionFg}) more precisely. First we can find a particular holomorphic 
ambiguity such that the total amplitude with the propagator dependent part satisfies 
the constrain (\ref{involutionFg}). Then there are freedom to add an additional piece 
$f^{(g)}$ to the holomorphic ambiguity which satisfies the same symmetry condition
\begin{eqnarray}   \label{constrain2.17}
\tilde{f}^{(g)} = (-1)^{g-1} f^{(g)} . 
\end{eqnarray} 
The space of such $f^{(g)}$ form a linear space.  The number of remaining unknown constants is the dimension of this linear space. 

We can define the following linear vector spaces of holomorphic ambiguities 
\begin{eqnarray} \label{Vspaces}
V_0^{(g,m,n)} &:=& \{  f  ~|~  f =  \frac{p(z_1) z_2^n}{(\Delta_1\Delta_2)^{2g-2}} ,  \textrm{ where $p(z_1)$ is a polynomial degree $m$ in $z_1$} \}.  \nonumber \\
V_1^{(g,m,n)}  &:=& \{  f  ~|~  f \in V_0^{(g,m,n)}  \textrm{ and }\tilde{f} \in V_0^{(g,m,n)} \}.   \nonumber\\
V_2^{(g,m,n)}  &:=& \{  f  ~|~  f \in V_1^{(g,m,n)}  \textrm{ and } \tilde{f} = (-1)^{g-1} f  \}.  \nonumber \\
V_3^{(g,m,n)}  &:=& \{  f  ~|~  f \in V_1^{(g,m,n)}  \textrm{ and } \tilde{f}  = (-1)^{g} f  \}. 
\end{eqnarray}
It is easy to check they are indeed linear vector spaces, i.e., any linear combinations of elements of the space is also an element in the space. We would like to construct the explicit linear basis for these spaces.

Clearly $\textrm{dim} (V_0^{(g,m,n)})=m+1$. A convenient linearly independent basis is   
\begin{eqnarray} \label{basisV0}
f(z_1,z_2)= \frac{z_1^k(\frac{1}{432}-z_1)^{m-k} z_2^n}{(\Delta_1\Delta_2)^{2g-2}}, ~~~~ \textrm{where} ~k=0,1,\cdots, m
\end{eqnarray}
We can work out the involution transformation 
\begin{eqnarray} \label{trans2.19}
\frac{z_1^k(\frac{1}{432} -z_1)^{m-k} z_2^n}{(\Delta_1\Delta_2)^{2g-2}} ~ \rightarrow ~ \frac{(-1)^{n}  z_2^n}{(\Delta_1\Delta_2)^{2g-2}} ~z_1^{m-k+3n-6g+6} (\frac{1}{432}-z_1) ^{k-3n+6g-6}. 
\end{eqnarray}
We discuss two cases 
\begin{enumerate}
\item The case $n\geq 2g-2$. The holomorphic ambiguity in (\ref{basisV0})  $f\in V_1^{(g,m,n)}$  for $ k\geq 3n-(6g-6)$. On the other hand,  
any linear combinations of the basis with $k< 3n-(6g-6)$ in  (\ref{basisV0}) is not in the space $V_1^{(g,m,n)}$ since there will be poles 
of $(\frac{1}{432}-z_1)$ in the numerator. Therefore the space  $V_1^{(g,m,n)}$ is generated by the linear combinations of  
$\frac{z_1^k(\frac{1}{432}-z_1)^{m-k} z_2^n}{(\Delta_1\Delta_2)^{2g-2}}$,  where $k=3n-(6g-6),\cdots, m$.  In this case the dimension of $V_1^{(g,m,n)}$ is 
\begin{eqnarray}
\textrm{dim}(V_1^{(g,m,n)}) &=&  
 \left\{
\begin{array}{cl}
 m+6g-6 -3n +1 ,    & ~~  \textrm{if} ~~~ m+6g-6 -3n \geq 0 ;   \\
 0,             & ~~ \textrm{if} ~~~ m< 3n-(6g-6)   .   
\end{array}    
\right.
\end{eqnarray} 

\item The case $n< 2g-2$. A similar discussion shows that the space  $V_1^{(g,m,n)}$ is generated by the linear 
combinations of  $\frac{z_1^k(\frac{1}{432}-z_1)^{m-k} z_2^n}{(\Delta_1\Delta_2)^{2g-2}}$,  
where $k=0,1,\cdots, m +3n-6g+6 $. In this case the dimension of $V_1^{(g,m,n)}$ is 
\begin{eqnarray}
\textrm{dim}(V_1^{(g,m,n)}) &=&  
 \left\{
\begin{array}{cl}
 m+3n -6g+6  +1 ,    & ~~  \textrm{if} ~~~ m+3n -6g+6 \geq 0 ;   \\
 0,             & ~~ \textrm{if} ~~~ m< 3n-(6g-6)   .   
\end{array}    
\right.
\end{eqnarray} 
\end{enumerate}
In both cases we can write the formula 
\begin{eqnarray} \label{V1dim2.22}
\textrm{dim}(V_1^{(g,m,n)}) &=&  
 \left\{
\begin{array}{cl}
 m-|3n -6g+6|  +1 ,    & ~~  \textrm{if} ~~~  m\geq |3n -6g+6|  ;   \\
 0,             & ~~ \textrm{if} ~~~ m< |3n-6g+6|   .   
\end{array}    
\right.
\end{eqnarray}

For any element  $f\in V_1^{(g,m,n)}$ we can write 
\begin{eqnarray}
 f =\frac{1}{2}(f + (-1)^{g-1} \tilde{f} ) + \frac{1}{2}(f + (-1)^{g} \tilde{f} ). 
 \end{eqnarray}
Clearly the first term $\frac{1}{2}(f + (-1)^{g-1} \tilde{f} )  \in V_2^{(g,m,n)}$ 
and the second term $ \frac{1}{2}(f + (-1)^{g} \tilde{f} )  \in V_3^{(g,m,n)}$. 
We also note that $V_2^{(g,m,n)} \cap V_3^{(g,m,n)} =\{0\}$. So we have the decomposition 
\begin{eqnarray}
V_1^{(g,m,n)} = V_2^{(g,m,n)}  \oplus V_3^{(g,m,n)}
\end{eqnarray}

To construct the linear basis for $V_2^{(g,m,n)}$ and $V_3^{(g,m,n)}$, we further define two spaces $V_+^{(g,m,n)}$ and $V_-^{(g,m,n)}$ generated by linearly independent basis  
\begin{eqnarray} \label{basisplus2.25}
 V_+^{(g,m,n)}&:=&  \{ \textrm{linear space generated by } \frac{z_1^{k_1}(\frac{1}{432}-z_1)^{k_2} z_2^n}{(\Delta_1\Delta_2)^{2g-2}} \in V_0^{(g,m,n)} ,  \nonumber \\  && \textrm{with} ~k_1-k_2 = 3n-6g+6. ~\}, \\
 \label{basisplus2.26}
  V_-^{(g,m,n)}&:=&  \{ \textrm{linear space generated by } \frac{(\frac{1}{432}- 2 z_1) z_1^{k_1}(\frac{1}{432}-z_1)^{k_2} z_2^n}{(\Delta_1\Delta_2)^{2g-2}} \in V_0^{(g,m,n)} ,  \nonumber \\  && \textrm{with} ~k_1-k_2 = 3n-6g+6. ~\}
\end{eqnarray}
{}From the transformation (\ref{trans2.19}), we can check that for $f\in V_+^{(g,m,n)}$ we have $\tilde{f} = (-1)^{n} f$, while  for $f\in V_-^{(g,m,n)}$ we have $\tilde{f} = (-1)^{n+1} f $.  
Therefore if $n+g$ is an odd integer, we have  $V_+^{(g,m,n)}\subseteq V_2^{(g,m,n)}$ and $V_-^{(g,m,n)}\subseteq V_3^{(g,m,n)}$, 
while if $n+g$ is an even integer, we have  $V_+^{(g,m,n)}\subseteq V_3^{(g,m,n)}$ and $V_-^{(g,m,n)}\subseteq V_2^{(g,m,n)}$. So we find 
\begin{eqnarray} \label{ineq2.27}
\textrm{dim} (V_+^{(g,m,n)}) + \textrm{dim} (V_-^{(g,m,n)})  \leq \textrm{dim} (V_2^{(g,m,n)}) + \textrm{dim} (V_3^{(g,m,n)}) = \textrm{dim} (V_1^{(g,m,n)}).
\end{eqnarray}  

We can compute the dimensions of $V_+^{(g,m,n)}$ and $V_-^{(g,m,n)}$ easily from the explicit linearly independent 
basis in (\ref{basisplus2.25}) and (\ref{basisplus2.26}). Similar to the case of $V_1^{(g,m,n)}$, one also need 
to consider two cases $n\geq  2g-2$ and $n< 2g-2$, and find a universal formulae 
\begin{eqnarray} \label{V2dim2.28}
\textrm{dim} (V_+^{(g,m,n)})  
 &=&  
 \left\{
\begin{array}{cl}
 [\frac{m - |3n -6g+6|}{2}]+1   ,    & ~~  \textrm{if} ~~~  m\geq |3n -6g+6|    ;   \\
 0,             & ~~ \textrm{if} ~~~ m< |3n-6g+6|   .   
\end{array}    
\right.  \nonumber \\ 
\textrm{dim} (V_-^{(g,m,n)})  
 &=&  
 \left\{
\begin{array}{cl}
 [\frac{m -1- |3n -6g+6|}{2}]+1   ,    & ~~  \textrm{if} ~~~  m-1 \geq |3n -6g+6|    ;   \\
 0,             & ~~ \textrm{if} ~~~ m-1 < |3n-6g+6|   .   
\end{array}    
\right.
\end{eqnarray}
Comparing with (\ref{V1dim2.22}) we find actually the inequality in (\ref{ineq2.27}) is saturated. So the spaces $V_+^{(g,m,n)}$ and $V_-^{(g,m,n)}$ provide the complete linear basis, we have $V_+^{(g,m,n)}=  V_2^{(g,m,n)}, V_-^{(g,m,n)}= V_3^{(g,m,n)}$ for the case of $n+g$ an odd integer, and $V_+^{(g,m,n)}=  V_3^{(g,m,n)}, V_-^{(g,m,n)}= V_2^{(g,m,n)}$ for the case of $n+g$ an even integer.

We can estimate the number of unknown constants for the holomorphic ambiguity at large genus $g$  after imposing the involution symmetry, assuming the degree of $z_1$ is $7(g-1)$ in the numerator  
\begin{eqnarray} \label{ng2.29}
\sum_{n=0}^{\infty} \textrm{dim} ( V_2^{(g,7(g-1),n)})  = \sum_{n=0}^{\frac{13(g-1)}{3}} \frac{7(g-1)- |3n- 6(g-1)  |}{2} \sim \frac{97}{12} g^2, ~~~ \textrm{for large} ~g. 
\end{eqnarray} 
This is about  one quarter of the $(7g-6)(5g-4)$ unknown constants in the naive ansatz at large genus.

\section{Fibre modularity versus involution symmetry}
\label{sec:FI}
In appendix we will prove that the constraints imposed on the amplitudes 
and especially on the holomorphic  ambiguity from the involution symmetry 
is equivalent to the  constraints from the modularity in the fibre 
direction discussed in  section \ref{fibremodularity}. In addition the 
appendix provides many identities that are useful for further discussion
of the model~\cite{HKKZ}. 

{}From the solutions\footnote{We lower the index on the A-periods $X^i$ 
in order to avoid cluttering with exponentials.}  of 
the (\ref{PF2.1}) around the large volume point $(z_1,z_2)\sim (0,0)$,
\begin{eqnarray}
X_0 &=& 1+60z_1+13860 z_1^2 +\cdots, \nonumber \\
X_1 &=&  X_0 \log (z_1) + 312 z_1 + 2z_2 + \cdots, \nonumber \\
X_2  &=&  X_0 \log(z_2) +180 z_1 -6 z_2 +\cdots ,
\end{eqnarray}
we find  
\begin{eqnarray} 
&& q_E :=  q_1 =\exp( \frac{X_1}{X_0})  = z_1+312 z_1^2 +2z_1z_2 +\cdots, \nonumber \\
&& q_B :=  q_2 =\exp( \frac{X_2}{X_0})  = z_2 +180z_1 z_2  -6z_2^2 +\cdots . 
\end{eqnarray}
The above relations can be readily inverted and we can write $z_{1,2}$ as power 
series of $q_{1,2}$. 

If we keep the gauge $X_0^{2g-2}$ explicit and note that $q_B$ is defined without the shift 
(\ref{redefinitionti}) then (\ref{freeenenergybase},\ref{defPgb}) read 
\begin{eqnarray}  \label{base3.48}
X_0^{2g-2} \mathcal{F}^{(g)} = \sum_{k=0}^{\infty} P^{(g)}_k(q_E) ~(\frac{q_E}{\eta(q_E)^{24}})^{\frac{3k}{2}} q_B^k, 
~~~ g\geq 1, 
\end{eqnarray} 
while for the case of $g=0$ we do not need the factor of  $X_0^{-2}$ in the conventional definition of 
the prepotential. 

{}From section \ref{fibremodularity} follows for the $X_{18}(1,1,1,6,9)$ model that the $P^{(g)}_k(q_E)$ 
are quasi-modular forms of ${\rm SL}(2,\mathbb{Z})$, of modular weight 
$18k+2g-2$ and satisfy (\ref{anomalyP2}) as a specialization of (\ref{anomaly}).
We would like to understand whether and to what extent this constraint from the modularity of the fiber 
fixes the holomorphic ambiguity at higher genus. Suppose at genus $g\geq 2$ we have found a particular 
holomorphic ambiguity such that the topological string amplitude $\mathcal{F}^{(g)} $ satisfies the fiber modularity constrain. 
Then we may add an additional holomorphic ambiguity $f^{(g)}$, whose expansion is 
\begin{eqnarray} \label{const2.25}
X_0^{2g-2} f^{(g)} = \sum_{k=0}^{\infty} p^{(g)}_k(q_E) ~(\frac{q_E}{\eta(q_E)^{24}})^{\frac{3k}{2}} q_B^k, 
\end{eqnarray} 
we require  $ p^{(g)}_k(q_E)$ to be also a quasi-modular form of weight $18k+2g-2$. Furthermore, since the sum $P^{(g)}_k(q_E) + p^{(g)}_k(q_E)$ 
also need to satisfy the equation (\ref{anomalyP2}), we find 
\begin{eqnarray}  \label{const2.26}
\frac{\partial p^{(g)}_k}{\partial  {E_2}} = -\frac{1}{12} \sum_{s=1}^{k-1} s(k-s) p^{(g)}_s P^{(0)}_{k-s} .
\end{eqnarray}
In particular for the case $k=0, 1$, this equation is understood as $\frac{\partial p^{(g)}_k}{\partial  {E_2}}=0$, 
i.e. $p^{(g)}_0$ and $p^{(g)}_1 $ are always modular forms.

It turns out that not all rational functions $ f^{(g)}$ satisfy the fiber modularity constraint 
(\ref{const2.25}, \ref{const2.26}), so the fiber modularity does impose some conditions on 
holomorphic ambiguities. However we find that there exist some non-vanishing rational 
functions  $ f^{(g)}$ satisfying these conditions, so the fiber modularity does not completely 
fix the holomorphic ambiguity. We see that the BCOV holomorphic anomaly equation alone 
does not implies the modularity of the fiber at higher genus. 

However it turns out that the modularity of the fiber is  equivalent to the involution 
symmetry. It gives more precise bounds for the degrees in the polynomials in the 
numerators of the holomorphic ambiguity and further reduces the number of unknown constants 
in the holomorphic ambiguity. To understand this equivalence,  we consider a 
holomorphic ambiguity 
$f\in V_0^{(g,+\infty,n)}$. Since $f\sim z_2^n \sim q_B^n$ in the small $z$ limit, it is easy to see that the first non-vanish 
coefficient in the expansion in base degree (\ref{const2.25}) is that of $q_B^n$ 
\begin{eqnarray}  \label{expan2.36}
X_0^{2g-2} f  = \sum_{k=n}^{\infty} p_k(q_E) ~(\frac{q_E}{\eta(q_E)^{24}})^{\frac{3k}{2}} q_B^k, 
\end{eqnarray} 
According to (\ref{const2.26}) the first coefficient $p_n(q_E)$ has no $E_2$ dependence and is purely a  
modular form, i.e. polynomials of $E_4$ and $E_6$. 

We define the linear space 
\begin{eqnarray} \label{W0space}
W_0^{(g,n)} &:=&  \{  f  ~|~  f\in V_0^{(g,+\infty,n)}, \textrm{ and the first non-vanishing coefficient $p_n(q_E)$ in the expansion}  
\nonumber \\ && \textrm{(\ref{expan2.36}) is a $SL(2,\mathbb{Z})$ modular form of weight $18n+2g-2  \}$.} 
\end{eqnarray}

We will prove the following proposition 
\begin{prop}\label{proposition2.37}
\[
W_0^{(g,n)}= V_2^{(g,[\frac{19}{3}(g-1)] ,n)}
\]
\end{prop}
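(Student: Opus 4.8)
The plan is to convert the modularity requirement that defines $W_0^{(g,n)}$ into explicit vanishing and degree constraints on the numerator polynomial $p(z_1)$, and then to recognize those constraints as precisely the involution symmetry together with the degree bound $\deg_{z_1}p\le[\tfrac{19}{3}(g-1)]$ that cuts out $V_2^{(g,[\frac{19}{3}(g-1)],n)}$. Throughout I would use the fibre modular dictionary of Section~\ref{weakjacobiHA} and~\cite{HKKZ}, which expresses the large-fibre mirror map and the period $X_0$ in terms of $E_4,E_6,\eta$.

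First I would compute the leading base-degree coefficient of an arbitrary $f=p(z_1)z_2^n/(\Delta_1\Delta_2)^{2g-2}\in V_0^{(g,+\infty,n)}$. In the large-fibre limit $q_B\to0$ one has $z_1\to z_1^{(0)}(q_E)$, $\Delta_2\to1$ and $\Delta_1\to(1-432z_1)^3$, while $z_2=\tilde h(q_E)q_B+O(q_B^2)$ for a definite series $\tilde h$; extracting the $q_B^n$ coefficient in (\ref{expan2.36}) then gives
\[
p_n(q_E)=\Phi(q_E)\,\frac{p\left(z_1^{(0)}\right)}{\left(1-432\,z_1^{(0)}\right)^{6g-6}},
\]
with an explicit prefactor $\Phi=X_0^{2g-2}\tilde h^{\,n}(\eta^{24}/q_E)^{3n/2}$ built from the dictionary. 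By (\ref{const2.26}) the leading coefficient obeys $\partial_{E_2}p_n=0$, so $p_n$ is automatically free of $E_2$; hence $f\in W_0^{(g,n)}$ is equivalent to the single analytic statement that $p_n$ has no poles on $\mathrm{SL}(2,\mathbb{Z})\backslash\mathbb{H}^\ast$, i.e.\ that the $E_2$-free weight-$(18n+2g-2)$ object $p_n$ is a \emph{holomorphic} modular form rather than a meromorphic one.

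The core is then a pole count for $p_n$. Under the dictionary the involution $z_1\mapsto\frac1{432}-z_1$ of (\ref{invo2.4}) acts on the fibre as $E_6E_4^{-3/2}\mapsto-E_6E_4^{-3/2}$ (an $R$-symmetry, not an element of $\mathrm{SL}(2,\mathbb{Z})$), and this exchanges the large-volume cusp $z_1^{(0)}=0$ with the conifold $z_1^{(0)}=1/432$, where $1-432z_1^{(0)}=\tfrac12(1+E_6E_4^{-3/2})$ vanishes. The factor $(1-432z_1^{(0)})^{-(6g-6)}$ thus produces a pole of order $6g-6$ at the conifold, while $p(z_1^{(0)})$ may produce a pole only at the orbifold point $\tau=\rho$ where $E_4=0$ and $z_1^{(0)}\to\infty$. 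Requiring holomorphicity at the conifold forces $p$ to vanish there to an order governed by $6g-6$ (corrected by the $n$-dependent weight of $z_2$ under the involution), and requiring it at the orbifold caps $\deg_{z_1}p$; the conifold contribution $6(g-1)$ and the orbifold bound $\tfrac13(g-1)$ add to the degree $[\tfrac{19}{3}(g-1)]$. To identify the resulting space with $V_2$, I would substitute (\ref{involutionexchange}) and the transformations of $z_1,z_2$ to compute $\tilde f$ explicitly, reducing the parity requirement $\tilde f=(-1)^{g-1}f$ of $V_2$ to a polynomial functional equation for $p$ under $z_1\mapsto\frac1{432}-z_1$; in the variable $\xi=E_6E_4^{-3/2}$, on which the involution acts as $\xi\mapsto-\xi$ and which interchanges cusp and conifold, this functional equation becomes a definite reflection parity—exactly the content of the bases (\ref{basisplus2.25}),(\ref{basisplus2.26}) of $V_\pm$. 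Finally a dimension count against (\ref{V2dim2.28}) at $m=[\tfrac{19}{3}(g-1)]$ confirms the two spaces agree.

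The step I expect to be the main obstacle is the sharp orbifold estimate yielding the constant $\tfrac13$, together with its compatibility with the conifold and cusp counts. Because $X_0\propto E_4^{1/4}$ and $z_1^{(0)}\sim E_4^{-3/2}$ near $\tau=\rho$, the relevant orders of vanishing are fractional and the non-modular $q_E$-power compensators hidden in $\Phi$ must be tracked with care; getting the bound $[\tfrac{19}{3}(g-1)]$ and the $n$-dependent vanishing orders \emph{exactly}, rather than up to an additive $g$- or $n$-dependent shift, and reconciling them term by term with the reflection-parity count that underlies $V_2=V_\pm$, is the delicate part. By contrast the reduction of the full involution to a functional equation and the conifold pole order $6g-6$ are routine.
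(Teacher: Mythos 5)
Your proposal is correct and follows essentially the same route as the paper: both rest on the exact large-fibre relations $E_4=X_0^4$, $z_1(1-432z_1)=\eta^{24}/E_4^3$, $1-864z_1=E_6E_4^{-3/2}$ and the leading mirror map for $z_2$, which reduce the leading coefficient to $p_n=E_4^{\frac{g-1}{2}+\frac{9n}{2}}(1-432z_1)^{3n-6g+6}p(z_1)$, so that membership in $W_0^{(g,n)}$ becomes polynomiality in $E_4,E_6$; the reflection parity in $\xi=E_6E_4^{-3/2}$ you identify is exactly the paper's condition $k_1-k_2=3n-6g+6$ defining the $V_\pm$ bases, and the orbifold bound on the $E_4$-exponent gives the degree $[\frac{19}{3}(g-1)]$. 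The only cosmetic difference is organizational (the paper proves $V_2\subseteq W_0$ on the explicit basis and closes with the injection of $W_0$ into $\mathcal{M}_{18n+2g-2}$, resp.\ $\mathcal{M}_{74(g-1)-18n}$ for $n>2g-2$, which is your final dimension count), and one caveat of language: the ``pole at the conifold'' you invoke is really the failure of $(1+\xi)^{3n-6g+6}$ to be a polynomial in $\xi$ (the point $z_1=1/432$ sits at the cusp on the second branch of $J=1/(z_1(1-432z_1))$, not at an interior point of the modular curve), but this leads to the same vanishing condition on $p$.
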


\begin{proof}
The proof will proceed in two steps. We shall first show $V_2^{(g,[\frac{19}{3}(g-1)] ,n)} \subseteq W_0^{(g,n)}$, and then  
$\textrm{dim}(V_2^{(g,[\frac{19}{3}(g-1)],n)}) \geq \textrm{dim} ( W_0^{(g, n)})$. We will see how the number $[\frac{19}{3}(g-1)]$ appears in the process. 

In order to prove the Proposition~\ref{proposition2.37}, we only need the leading order term in the small $z_2$ expansion. 
The first Picard-Fuchs equation in (\ref{PF2.1}) in the $z_2\rightarrow 0$ limit has been studied before 
in \cite{HST, Klemm:2012}. We find the power series solution and A-model parameter $q_E$ are determined by the exact relations 
\begin{eqnarray}  \label{exact2.37} 
E_4(q_E) = X_0^4, ~~~ ~  z_1(1-432z_1) = \frac{\eta(q_E)^{24}}{E_4(q_E)^3}, 
\end{eqnarray}
where we have taken the $z_2\rightarrow 0$ limit for $X_0$ and $q_E$. We will drop the 
argument $q_E=q_1$ in the modular forms and $\eta$ function for simplicity. 

For the other logarithmic solution we need to keep the log term in $z_2\rightarrow 0$ limit
\begin{eqnarray}
X_2 = X_0 \log(z_2) + \xi (z_1) +\mathcal{O}(z_2),
\end{eqnarray}
where $\xi(z_1) = 180z_1 +\mathcal{O}( z_1^2) $ is a power series. Using the first Picard-Fuchs equation in (\ref{PF2.1}) we find 
\begin{eqnarray}
[\theta_1^2 -12 z_1 (6\theta_1+1) (6\theta_1+5)] \xi(z_1) = 3\theta_1 X_0 
\end{eqnarray} 
There is an exact solution for $\xi(z_1)$ using (\ref{exact2.37}) and the well-known Ramanujan derivative relations for quasi-modular forms
\begin{eqnarray} \label{xi3.63}
\xi(z_1) = -\frac{3}{2} X_0 \log [\frac{q_1(1-432z_1)}{z_1} ]. 
\end{eqnarray} 
Of course one can add any linear combination of two solutions of PF equation in the $z_2\rightarrow 0$ limit, which 
are $E_4(q_1)^{\frac{1}{4}}$ and $\log(q_1)E_4(q_1)^{\frac{1}{4}}$. The above solution is the only one with the correct 
asymptotic behavior $\xi(z_1)\sim z_1$ for small $z_1$. So  the A-model parameter $q_B$ is 
\begin{eqnarray}
q_B = z_2 \exp( \frac{\xi(z_1)}{X_0}) = z_2 (\frac{q_1(1-432z_1)}{z_1} )^{- \frac{3}{2}} +\mathcal{O}(z_2^2)
\end{eqnarray} 

We first discuss the case that $n+g$ is an odd integer.  Suppose  
$f\in V_2^{(g,[\frac{19}{3}(g-1)],n)} = V_+^{(g,[\frac{19}{3}(g-1)],n)}$ is a base vector in (\ref{basisplus2.25}) 
\begin{eqnarray} \label{f3.43}
 && f= \frac{ z_1^{k_1}(\frac{1}{432}-z_1)^{k_2} z_2^n}{(\Delta_1\Delta_2)^{2g-2}} ,   
 \nonumber \\ &&  \textrm{with} ~~k_1\geq 0, ~k_2\geq 0, ~ k_1+k_2\leq \frac{19}{3}(g-1), ~ k_1-k_2 = 3n-6g+6,
 \end{eqnarray} 
then the leading term in the expansion (\ref{expan2.36}) is 
\begin{eqnarray} \label{form2.44}
X_0^{2g-2} f &\sim&   E_4^{\frac{g-1}{2}} z_1^{k_1-\frac{3n}{2} } (1-432 z_1) ^{k_2 +\frac{3n}{2}-6g+6} \eta^{36n}
 ~(\frac{q_E}{\eta^{24}})^{\frac{3n}{2}} q_B^n \nonumber \\
 &\sim & E_4^{\frac{g-1}{2} +\frac{9n}{2} -3k_1} \eta^{24k_1}  ~(\frac{q_E}{\eta^{24}})^{\frac{3n}{2}} q_B^n ,
\end{eqnarray} 
where we have used the relations (\ref{exact2.37}) and ignore constant factors. Since  $n+g$ is an odd integer, the exponent of $E_4$ is an integer. Furthermore 
\begin{eqnarray}
\frac{g-1}{2} +\frac{9n}{2} -3k_1= \frac{19}{2}(g-1)-\frac{3}{2} (k_1+k_2) \geq 0,  
\end{eqnarray} 
so the coefficient $E_4^{\frac{g-1}{2} +\frac{9n}{2} -3k_1} \eta^{24k_1} $ is a modular form of weight $18n+2g-2$. 

The case  of $n+g$  an even integer is similar. We need to use the formula $1-864z_1 = E_6/ E_4^{\frac{3}{2}}$. 
In both cases we find $V_2^{(g,[\frac{19}{3}(g-1)] ,n)} \subseteq W_0^{(g,n)}$.

Now we compute $\textrm{dim} ( W_0^{(g, n)})$. We discuss two cases.
\begin{enumerate}
\item The case of $n\leq 2g-2$. We consider the dimension of the space $\mathcal{M}_{18n+2g-2}$ of modular forms of weight $18n+2g-2$. 
It is clear $\textrm{dim} (\mathcal{M}_{18n+2g-2}) = [\frac{9n+g-1}{6}]+1$ if $n+g$ is odd, and $\textrm{dim} (\mathcal{M}_{18n+2g-2}) = [\frac{9n+g-4}{6}]+1$ if $n+g$ is even. 
This is exactly the dimension $\textrm{dim}(V_2^{(g,[\frac{19}{3}(g-1)],n)})$ 
according to  (\ref{V2dim2.28}). The expansion (\ref{expan2.36}) is a linear map from 
$W_0^{(g, n)}$ to $\mathcal{M}_{18n+2g-2}$ and the kernel is the zero element. So in this case we find 
\begin{eqnarray}
\textrm{dim} ( W_0^{(g, n)}) \leq \textrm{dim} (\mathcal{M}_{18n+2g-2}) =  \textrm{dim}(V_2^{(g,[\frac{19}{3}(g-1)],n)})
\end{eqnarray} 

\item The case of $n> 2g-2$. Suppose $f= \frac{h(z_1)z_2^n}{(\Delta_1\Delta_2)^{2g-2}}$ where $h(z_1)$ is a polynomial of $z_1$. We find the leading term in the expansion  (\ref{expan2.36}) 
\begin{eqnarray} 
X_0^{2g-2} f &\sim&   E_4^{\frac{g-1}{2}} \frac{h(z_1)}{ z_1^{\frac{3n}{2} } (1-432 z_1) ^{6g-6 - \frac{3n}{2}} } \eta^{36n}
 ~(\frac{q_E}{\eta^{24}})^{\frac{3n}{2}} q_B^n  ,
\end{eqnarray} 
We consider the zero at $1-432 z_1\sim 0$, which correspond to $E_6+E_4^{\frac{3}{2}}\sim 0$. From see (\ref{exact2.37}) we see 
$\eta^{24}\sim (1-432 z_1)$. So the leading coefficient in (\ref{expan2.36})  $p_n(q_E) \sim (1-432 z_1)^{3n-6g+6}$. 
If $p_n(q_E)$ is a modular form of weight $18n+2g-2$, then $\frac{p_n(q_E)}{\eta^{72(n-2g+2)}}$ should be also a modular form of weight $74(g-1)-18n$. So we find 
\begin{eqnarray}
\textrm{dim} ( W_0^{(g, n)}) \leq \textrm{dim} (\mathcal{M}_{74(g-1)-18n}) =  \textrm{dim}(V_2^{(g,[\frac{19}{3}(g-1)],n)})
\end{eqnarray} 

\end{enumerate} 
\end{proof}

So we have proven $W_0^{(g, n)} = V_2^{(g,[\frac{19}{3}(g-1)],n)}$, and in the proof we see that they are also isomorphic to the space of 
modular forms  $\mathcal{M}_{18n+2g-2}$ for the  $n\leq 2g-2$ case, and $\mathcal{M}_{74(g-1)-18n}$ for the $n> 2g-2$ case. 
We further conjecture this is sufficient for the full fiber modularity constrain of the higher terms. 

\begin{conj}\label{conjecture3.49}
For $f\in   W_0^{(g, n)}$ with the expansion  (\ref{expan2.36}), the higher order terms 
$p_k(q_E)$ with $k>n$ are quasi-modular forms of weight $18k+2g-2$ and satisfy
the equation (\ref{const2.26}). 
\end{conj}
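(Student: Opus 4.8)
The plan is to prove Conjecture~\ref{conjecture3.49} by induction on the base degree $k$, bootstrapping from the modularity of the leading coefficient $p_n(q_E)$ that is built into the definition of $W_0^{(g,n)}$. First I would invoke Proposition~\ref{proposition2.37} together with linearity: since $W_0^{(g,n)}=V_2^{(g,[\frac{19}{3}(g-1)],n)}$ is a finite-dimensional space spanned by the explicit basis vectors of (\ref{basisplus2.25}) and (\ref{basisplus2.26}), it suffices to establish the claim for a single such basis element $f$. For these elements the leading coefficient $p_n(q_E)$ was already shown, in the proof of Proposition~\ref{proposition2.37}, to be a modular form of weight $18n+2g-2$ via the exact relations (\ref{exact2.37}) and (\ref{xi3.63}); this is the base case of the induction.

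For the inductive step I would assume that $p_s(q_E)$ is a quasi-modular form of weight $18s+2g-2$ for all $n\le s<k$ and analyze $p_k$. The modular anomaly equation (\ref{const2.26}), which any admissible genus-$g$ ambiguity must satisfy because $P^{(g)}_k+p^{(g)}_k$ obeys (\ref{anomalyP2}) while $P^{(g)}_k$ does, expresses $\partial_{E_2}p_k$ as the form $-\frac{1}{12}\sum_s s(k-s)\,p_s\,P^{(0)}_{k-s}$, which by the inductive hypothesis is a quasi-modular form in $\mathbb{C}[E_2,E_4,E_6]$ of weight $18k+2g-4$. Formally integrating in that ring raises the weight back to $18k+2g-2$ and determines $p_k$ up to an $E_2$-independent piece, i.e.\ up to a genuine modular form of weight $18k+2g-2$. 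The Ramanujan relations (\ref{ramanujan}) guarantee that this antiderivative stays inside the quasi-modular ring, so the only thing left to pin down is the modular integration constant.

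The hard part, and the reason this remains a conjecture, is twofold: one must first know that $p_k$ is quasi-modular \emph{at all}, rather than merely that its $E_2$-derivative is, and then fix the residual modular integration constant. I would attack both by extending the $z_2\to 0$ analysis of Proposition~\ref{proposition2.37} to higher orders, writing the full mirror map and $X_0$ as $q_B$-series whose coefficients are controlled by the Picard--Fuchs system (\ref{PF2.1}); each order in $z_2$ should produce coefficients lying in $\mathbb{C}[E_2,E_4,E_6]$, forcing the $k$-th Fourier--Jacobi coefficient of $X_0^{2g-2}f$ to be quasi-modular. The bounded pole order of the basis element $f$ at the fibre cusp and at the conifold loci $\Delta_1,\Delta_2$ should then confine $p_k$ to a finite-dimensional space of quasi-modular forms of weight $18k+2g-2$ --- exactly the mechanism that trapped $p_n$ in $\mathcal{M}_{18n+2g-2}$ (or $\mathcal{M}_{74(g-1)-18n}$) in the two cases of Proposition~\ref{proposition2.37} --- and the integration constant is read off from the leading term of this finite expansion.

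A complementary, more structural route I would keep in reserve is to work directly with the heat-kernel form (\ref{hamaster}) of the anomaly equation: the overall weight-$(2g-2)$ fibre-modular transformation of $X_0^{2g-2}f$ under the group generated by $T_1$ and $A^3$ forces each $q_B^k$-coefficient to transform with weight $18k+2g-2$, and an object holomorphic in $q_E$ that transforms with a definite weight up to the controlled $\hat E_2$-shift of (\ref{quasimodularshift}) is necessarily an almost-holomorphic modular form whose holomorphic limit is quasi-modular. I expect establishing the uniform all-orders modular transformation of the mirror map to be the genuinely difficult input; once it is in hand, the weight bookkeeping and the integration argument above close the proof.
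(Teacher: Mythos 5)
Your inductive engine does not actually turn over: in the inductive step you invoke (\ref{const2.26}) to compute $\partial_{E_2}p_k$, on the grounds that any admissible ambiguity must satisfy it. But for an element of $W_0^{(g,n)}$ nothing is assumed beyond the modularity of the single leading coefficient $p_n$; whether the higher $p_k$ are quasi-modular at all, and whether they satisfy (\ref{const2.26}), is precisely the content of the conjecture. The recursion (\ref{const2.26}) is a condition \emph{imposed} on admissible ambiguities (so that $P^{(g)}_k+p^{(g)}_k$ still obeys (\ref{anomalyP2})), not a property automatically enjoyed by an arbitrary rational function of $z_1,z_2$ --- the paper stresses that the BCOV equations alone do not imply fibre modularity at higher genus. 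So the step ``by the inductive hypothesis $\partial_{E_2}p_k$ is quasi-modular of weight $18k+2g-4$'' begs the question, and, as you note yourself, even granting it an $E_2$-antiderivative only pins $p_k$ down up to an arbitrary $q_E$-series rather than up to a modular form. All of the real work is thus concentrated in the paragraph you defer to (``each order in $z_2$ should produce coefficients lying in $\mathbb{C}[E_2,E_4,E_6]$''), and that paragraph supplies no mechanism; the heat-kernel route you hold in reserve has the same problem, since an arbitrary rational ambiguity has no a priori modular transformation law order by order in $q_B$.

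Your instinct that the $z_2\to 0$ Picard--Fuchs analysis must be pushed to all orders is the right one, but the paper closes the gap with specific inputs that are absent from your proposal, and not by induction on $k$. It reduces Conjecture \ref{conjecture3.49} to two structural statements: (i) the building blocks $z_1(1-432z_1)$, $1-864z_1$, $\Delta_1\Delta_2/(1-432z_1)^3$ and $X_0$ admit all-order expansions in $Z=z_2/((1-432z_1)^3E_4^{9/2})$ with genuinely modular coefficients (Conjecture \ref{conjecture3.75}), and (ii) $Z$ itself lies in the module $W_1$ of (\ref{W1space}), i.e.\ its $Q$-expansion coefficients are quasi-modular and obey the anomaly recursion (Conjecture \ref{conjecture3.86}). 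Statement (i) is proved by introducing the auxiliary modular variable $\tilde q$ of (\ref{tildeq3.92}), solving the Picard--Fuchs system to all orders in $z_2$ with quasi-modular coefficients in $\tilde q$, and then showing that the $E_2$-dependence cancels upon re-expanding in the physical $q_1$, by combining the identity $\partial_{E_2(\tilde q)}(t_1-\tilde t)^{-1}=\tfrac{1}{12}$ of (\ref{nice3.104}) with the transfer formula (\ref{noE23.112}). Statement (ii) --- which is where the recursion (\ref{const2.26}) actually comes from --- is proved by rewriting it as $\partial_{E_2}\log Q=\tfrac{1}{12}\,Q\,\partial_Q P^{(0)}$ as in (\ref{simpler3.90}) and verifying that the double-logarithmic combination (\ref{w33.151}) is a genuine period of the Picard--Fuchs system. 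Supplying these identities is the proof; without them the argument does not close.
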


If our conjecture is correct, then the space of holomorphic ambiguities at genus $g$ satisfying the modularity constraint 
of the elliptic fiber is linearly generated by $\sum_{n=0}^{\infty} \oplus W_0^{(g, n)}= \sum_{n=0}^{\infty} \oplus V_2^{(g,[\frac{19}{3}(g-1)],n)}$.  
For example, the explicit basis at genus $g=2$ are the the following 7 elements 
\begin{eqnarray} 
&&\frac{(1-432z_1)^3 z_2}{(\Delta_1\Delta_2)^{2}}, ~~~  \frac{(1-432z_1)^4 z_1z_2}{(\Delta_1\Delta_2)^{2}}, ~~~ 
\frac{(1-864z_1) z_2^2 }{(\Delta_1\Delta_2)^{2}}, ~~~ \frac{(1-864z_1)(1-432z_1)  z_1  z_2^2 }{(\Delta_1\Delta_2)^{2}}, \nonumber \\ &&
~~\frac{(1-864z_1)(1-432z_1)^2  z_1^2   z_2^2 }{(\Delta_1\Delta_2)^{2}}, ~~~ \frac{z_1^3 z_2^3 }{(\Delta_1\Delta_2)^{2}},  ~~~ \frac{ (1-432z_1) z_1^4 z_2^3 }{(\Delta_1\Delta_2)^{2}}. 
\end{eqnarray}
The numbers of unknown constants at genus 2, 3, 4, 5, 6 after using the fiber modularity constrain are 7, 31, 70, 109, 176 respectively.

We can compute the number of unknown constants in the holomorphic ambiguity at genus $g$ with the formula (\ref{V2dim2.28}), and we only 
need to change the bound for degree of $z_1$ from $7(g-1)$ to $\frac{19}{3}(g-1)$ in the calculations (\ref{ng2.29})
\begin{eqnarray} 
\sum_{n=0}^{\infty} \textrm{dim} ( V_2^{(g,\frac{19}{3}(g-1),n)})  = \sum_{n=0}^{\frac{37}{9}(g-1)} \frac{\frac{19}{3}(g-1)- |3n- 6(g-1)  |}{2} \sim \frac{721}{108} g^2, ~~ \textrm{for large} ~g,
\end{eqnarray} 
which is a better estimate than (\ref{ng2.29}).

The degree $[\frac{19}{3}(g-1)]$ of $z_1$  from fiber modularity (\ref{proposition2.37}) 
can be also derived from the regularity condition at $z_1\sim \infty$. We can use the local 
coordinates $(\frac{1}{z_1},z_2)$ around the point $(z_1,z_2)=(\infty,0)$. It is easy to check 
that the power series solutions to the PF equations have the asymptotic behavior 
$w\sim z_1^{-\frac{1}{6}}$ or $w\sim z_1^{-\frac{5}{6}}$. For the scaling factor $X_0$ in the 
topological string amplitude we take the one with the lowest exponent $X_0\sim z_1^{-\frac{1}{6}}$. 
So the scaling exponent including the conifold divisors is 
\begin{eqnarray}
(\frac{X_0}{\Delta_1\Delta_2})^{2g-2}  \sim z_1 ^{-\frac{19}{3} (g-1)}, 
\end{eqnarray} 
and the regularity of $X_0^{2g-2} \mathcal{F}^{(g)}$ at $z_1\sim \infty$ requires the degree of $z_1$ at the numerator in the holomorphic ambiguity be no bigger than $\frac{19}{3}(g-1)$. 
The non-zero scaling exponent in $X_0$ around $z_1\sim \infty$ is characteristic of compact Calabi-Yau models, and we will see that it also is responsible for preventing the 
solution of the model at large genus as in one-parameter models like the quintic.

\subsection{Some (quasi-)modularity formulae} \label{secsomemore}
In order to understand the conjecture (\ref{conjecture3.49}), we provide some more formulae 
relating the complex structure coordinates $z_{1,2}$ to (quasi)modular forms of the fiber 
parameter. We note that for $f\in W_0^{(g,n)}$ in the case of $n+g$ odd, we can factor (\ref{f3.43}) with $k_2=k_1-3n+6g-6$ as the following 
\begin{eqnarray}
f =   [z_1(1-432z_1)]^{k_1} (\frac{z_2}{(1-432z_1)^3})^n (\frac{\Delta_1\Delta_2}{(1-432z_1)^3})^{2-2g}.
\end{eqnarray}
So we need to understand the expansion of $z_1(1-432z_1)$, $\frac{z_2}{(1-432z_1)^3}$, 
and $\frac{\Delta_1\Delta_2}{(1-432z_1)^3}$ in terms of exponential of flat coordinates $q_E, q_B$. 
There is an extra factor of $1-864z_1$ in the case of $n+g$ an even integer. 
Furthermore, we also need to consider the A-model scaling factor $X_0$ which is 
the power series solution to the Picard-Fuchs equation. 

We define the parameter
\begin{eqnarray} \label{defineS3.75}
Z := \frac{z_2}{(1-432z_1)^3E_4(q_E)^{\frac{9}{2}}},
\end{eqnarray}
and as a first step we expand some components as asymptotic series of $Z$, and we find that the coefficients are modular forms of fiber parameter $q_E$ as the followings 
\begin{eqnarray}
z_1(1-432z_1) &=& \frac{\eta^{24}}{E_4^3} \big[1 - \frac{77 E_4^3 E_6 + 211 E_6^3}{144} Z+ \frac{1}{165888} (-11858 E_4^9 + 330395 E_4^6 E_6^2   \nonumber \\ &&  + 1434560 E_4^3 E_6^4 + 
 1066999 E_6^6)Z^2 +\mathcal{O}(Z^3) \big], \label{expan3.54} \\
 \frac{\Delta_1\Delta_2}{(1-432z_1)^3} &=&  1+ \frac{27}{4} (3 E_4^3 E_6 + E_6^3) Z - \frac{9}{128} 
  (E_4^3 - E_6^2) (85 E_4^6 - 612 E_4^3 E_6^2 - 49 E_6^4) 
  Z^2   \nonumber \\ && +\mathcal{O}(Z^3),   \label{expan3.55}  
 \\
 1-864 z_1 &=& \frac{1}{E_4^{\frac{3}{2}}}  \big[ E_6 +  \frac{1}{288} 
 (E_4^3 - E_6^2) (77 E_4^3 + 211 E_6^2) Z - \frac{1}{331776} 
 E_6 (E_4^3 - E_6^2)  \nonumber \\ && \cdot  (383525 E_4^6 + 1458614 E_4^3 E_6^2 + 977957 E_6^4)
 Z^2 +\mathcal{O}(Z^3) \big],  \label{expan3.56}  \\
X_0 &=&  E_4^{\frac{1}{4}} \big[1-   \frac{5}{72} E_6 (E_4^3 - E_6^2) Z  
 + \frac{5}{7962624} 
 (E_4^3 - E_6^2) \nonumber \\ && \cdot (15935 E_4^6 + 488258 E_4^3 E_6^2 + 435839 E_6^4)
  Z^2 +\mathcal{O}(Z^3) \big],   \label{expan3.57} 
\end{eqnarray}
Here we fix the modular forms in low orders by perturbative calculations. In general we conjecture the following

\begin{conj}
\label{conjecture3.75}
With the appropriate normalization factors as in (\ref{expan3.54},  \ref{expan3.55}, \ref{expan3.56},  \ref{expan3.57}),  
the coefficients of $Z^n$ in the expansion are modular forms, and the modular weights are 
$18n$ for the cases of 
$z_1(1-432z_1)$, $ \frac{\Delta_1\Delta_2}{(1-432z_1)^3}$, $X_0$, and $18n+6$ for the case 
of $1-864z_1$.
\end{conj}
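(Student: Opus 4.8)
The plan is to prove modularity of the expansion coefficients by a covariance argument under the fibre group $\Gamma_F\subset\mathrm{SL}(2,\mathbb{Z})$, taking as structural input only the explicit monodromies of Section~\ref{monodromiesversusinvolution} and the exact $z_2\to0$ relations~(\ref{exact2.37}), and using the Picard--Fuchs system~(\ref{PF2.1}) solely to secure holomorphy. The guiding observation is that $Z$ in~(\ref{defineS3.75}) is engineered to carry a definite modular weight, so that modularity of the coefficients becomes \emph{forced} by weight bookkeeping rather than extracted from a recursion.

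First I would fix the transformation weights of the building blocks under $\Gamma_F$, which is generated by $T_1$ and $A^3$. The algebraic coordinates $z_1,z_2$ are single-valued on the complex-structure moduli space and hence invariant (weight $0$) under every monodromy, while $E_4(q_E),E_6(q_E)$ are sections of the Hodge line bundle of weights $4$ and $6$. The decisive point is that the fundamental period transforms by a \emph{pure} automorphy factor, with no additive shift, and this holds to all orders in $z_2$: since $\tau_E:=X_1/X_0$ defines the flat coordinate and the integral matrix $A^3$ sends $X_0\mapsto X_0+X_1$ and $X_1\mapsto-X_0$, one gets exactly $\tau_E\mapsto-1/(\tau_E+1)$ and $X_0\mapsto(\tau_E+1)X_0$, while $T_1$ fixes $X_0$; thus $X_0$ is a weight-$1$ object. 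Because $Z$ is built only from $z_1,z_2$ and $E_4(q_E)^{9/2}$ (weight $18$), it follows that $Z\mapsto(c\tau_E+d)^{-18}Z$, i.e.\ $Z$ is a weight-$(-18)$ section; the simultaneous action of $A^3$ on the base periods is irrelevant since it does not enter $Z$.

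Modularity of the coefficients then follows by counting weights. Writing each normalized quantity as a series in $Z$ --- $z_1(1-432z_1)=(\eta^{24}/E_4^3)\sum_n b_nZ^n$, $E_4^{3/2}(1-864z_1)=\sum_n c_nZ^n$, $\Delta_1\Delta_2/(1-432z_1)^3=\sum_n d_nZ^n$, and $E_4^{-1/4}X_0=\sum_n e_nZ^n$ --- the left-hand sides have weights $0,6,0,0$ respectively, so $\Gamma_F$-covariance forces $b_n,d_n,e_n$ to transform with weight $18n$ and $c_n$ with weight $18n+6$. Crucially, since $Z$ and all prefactors transform by pure automorphy factors, so do the coefficients; a quasi-modular form transforming with no lower-depth terms has depth $0$ and therefore contains no $E_2$. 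This is exactly the mechanism that converts the naively quasi-modular output of the $\theta_1=z_1\partial_{z_1}$/Ramanujan recursion into honest modular forms, and accounts for the absence of $E_2$ visible in~(\ref{expan3.54})--(\ref{expan3.57}).

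The remaining, and principal, obstacle is holomorphy to all orders. The covariance argument pins down the weight and excludes $E_2$, but to conclude that $b_n,c_n,d_n,e_n\in\mathbb{C}[E_4,E_6]$ one must show they have no poles on $\mathbb{H}$ and are holomorphic at the cusp, so that they lie in the finite-dimensional spaces $M_{18n}$ (resp.\ $M_{18n+6}$) fixed by their weight. The dangerous loci are the conifold $1-432z_1=0$ (where $E_4^{3/2}+E_6=0$), the orbifold point $E_4=0$, and the cusp $q_E\to0$, at each of which the prefactors $E_4^{\pm\alpha}$ and the factor $(1-432z_1)^3$ degenerate. Here I would feed in the Picard--Fuchs equation $\mathcal{L}_1$ in its $z_2$-expansion to generate the recursion for the $Z^n$ coefficients and to bound their order of vanishing or poles at each locus, showing inductively that the apparent singularities cancel. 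Establishing these cancellations uniformly in $n$, rather than order by order as in the perturbative checks behind~(\ref{expan3.54})--(\ref{expan3.57}), is the delicate step; it is also precisely what is required to promote Conjecture~\ref{conjecture3.49} to a theorem, since the higher coefficients $p_k(q_E)$ there are assembled from these same modular data.
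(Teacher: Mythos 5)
Your monodromy-covariance strategy is genuinely different from the paper's argument, and the weight bookkeeping in your second and third paragraphs is correct as far as it goes: granting that the expansion in the coordinates $(q_1,Z)$ is well defined with coefficients depending on $q_1$ alone, the exact linear action of $A^3$ on $(X^0,X^1)$ does force the functional equations of weights $18n$ (resp.\ $18n+6$). But the proposal has a genuine gap, and it is precisely the step you defer at the end. A function satisfying the weight-$k$ functional equation is a modular form only once one knows it is holomorphic on $\mathbb{H}$ and at the cusp, and for these coefficients that is not a routine cleanup: it is where essentially all of the content of the paper's proof in Section~\ref{subsecproof3.5} lives. Concretely, the coefficients are produced by the recursion $\prod_{k}(\theta_{z_1}-k)$ acting on $E_4(\tilde q)^{1/4}$, which generates denominators $(E_4(\tilde q)^{3/2}+E_6(\tilde q))^{n}$ (poles at the conifold) and, after re-expansion in $q_1$, further negative powers of $E_4$ from each $\partial_{\tilde t}$; the paper cancels these by the structural facts that $a_n$ in (\ref{dn3.100}) has weight $6n-2$ and therefore always carries a factor of $E_4$, so that $(t_1-\tilde t)^n$ in (\ref{expan3.111}) supplies exactly the $E_4^{n}$ needed. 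Nothing in your covariance argument sees this cancellation, and without it the coefficients could a priori be weight-$18n$ meromorphic objects with poles at the orbifold point, in which case the conjecture would be false as stated.

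A second, smaller issue is that your "weight bookkeeping" silently crosses the bridge between two different expansions: the Picard--Fuchs recursion naturally produces coefficients that are quasi-modular in the auxiliary parameter $\tilde q$ of (\ref{exact2.37}), whereas the conjecture concerns coefficients in the physical flat coordinate $q_1$, and $t_1-\tilde t$ is itself a nontrivial series (\ref{expan3.104}) in $z_2$. The paper's route through this bridge is the pair of identities $\partial_{E_2(\tilde q)}(t_1-\tilde t)^{-1}=\tfrac{1}{12}$ (\ref{nice3.104}) and $\partial_{E_2}\partial_t P_k=\partial_t\partial_{E_2}P_k+\tfrac{k}{12}P_k$ (\ref{formu3.105}), combined into (\ref{noE23.112}); applied to the weight-$0$ combinations (and, for $X_0$, compared against the explicit derivative (\ref{result3.117})) these show directly that the $E_2(q_1)$-dependence vanishes. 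Your mechanism for killing $E_2$ --- ``pure automorphy factor implies depth zero'' --- is a legitimate alternative in principle, but it presupposes that the coefficients already lie in the quasi-modular ring and that the functional equation has been verified for generators of the full modular group after the $\tilde q\to q_1$ re-expansion; establishing either of those without the paper's explicit identities would require an argument you have not supplied. So the proposal is an attractive reorganization of the problem, but as written it proves only the weight assignment, not the modularity.
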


In the next subsection \ref{subsecproof3.5}, we shall prove this modularity conjecture.  
We note that we only need to prove the modularity of the coefficients of the expansion of 
$z_1(1-432z_1) $ and $1-864 z_1$, which immediately imply that of $\frac{\Delta_1\Delta_2}{(1-432z_1)^3}$, 
since we can write 
\begin{eqnarray}
\frac{\Delta_1\Delta_2}{(1-432z_1)^3} &=& 1- \frac{27z_2}{(1-432z_1)^3} (1-864z_1) [(1-864z_1)^2+1296z_1(1-432z_1)] \nonumber \\ && -\frac{(27z_2)^2}{(1-432z_1)^6} [432z_1(1-432z_1)]^3.
\end{eqnarray} 

We also notice that the higher order modular form coefficients in the expansion formula (\ref{expan3.56}) 
contain a factor of $\eta^{24} =\frac{E_4^3-E_6^2}{1728}$. This condition is sufficient to ensure 
the modularity of expansion (\ref{expan3.54}), and therefore that of (\ref{expan3.55}) as well. To 
see this, suppose 
\begin{eqnarray} \label{eta3.82}
1-864z_1 = \frac{1}{E_4^{\frac{3}{2}}} \big[ E_6 + (E_4^3-E_6^2)\sum_{n=1}^{\infty} f_{18n-6} Z^n \big],
\end{eqnarray} 
where $f_{18n-6}$ are some modular forms of weight $18n-6$. We can compute (\ref{expan3.54}) easily 
\begin{eqnarray} \label{eta3.83}
z_1(1-432z_1) = \frac{\eta^{24}}{E_4^3} \big[ 1- 2E_6 (\sum_{n=1}^{\infty} f_{18n-6} Z^n) + (E_4^3 -E_6^2) (\sum_{n=1}^{\infty} f_{18n-6} Z^n)^2 \big] 
\end{eqnarray}

Now it is easy to see that if our conjecture (\ref{conjecture3.75}) is correct, then for $f\in W_0^{(g,n)}$, it has the following expansion 
\begin{eqnarray} \label{expan3.58}
f= \sum_{n=0}^{+\infty} f_n Z^n,
\end{eqnarray}  
where $f_n$ are modular forms of $q_E$ of modular weight $18n+2g-2$. 

The next step is to expand $Z$ parameter in terms of the base coordinate $q_B$. For convenience we denote the normalized Kahler coordinate
\begin{eqnarray} \label{defineT}
Q:=(\frac{q_E}{\eta^{24}})^{\frac{3}{2}}q_B, 
\end{eqnarray}
and we find the formulae for low orders 
\begin{eqnarray} \label{pert3.60}
Z &=&  Q  + \frac{1}{576} [-E_2E_4(31  E_4^3 + 113  E_6^2)+ 1759 E_4^3 E_6  + 1841 E_6^3] ~Q^2  \nonumber \\ && + \frac{1}{15925248} [72 E_2^2 E_4^2(31 E_4^3 + 113  E_6^2)^2   - 
 108 E_2E_4 E_6(25249 E_4^6  + 143422 E_4^3 E_6^2 + 163105  E_6^4)\nonumber \\ &&  +102919 E_4^9 + 33568143 E_4^6 E_6^2  + 79676877 E_4^3 E_6^4 + 
 64318109 E_6^6  ]~Q^3 +\mathcal{O}(Q^4)
\end{eqnarray}
We can check that the coefficients $c_n$ of $Q^n$ are quasi-modular forms, and satisfy the following equation as in (\ref{const2.26})
\begin{eqnarray}  \label{modularanomaly3.61}
\frac{\partial c_n}{\partial  {E_2}} = -\frac{1}{12} \sum_{s=1}^{n-1} s(n-s) c_{n-s} P^{(0)}_{s} ,
\end{eqnarray}  
where the $P^{(0)}_n$ are the quasi-modular forms of the prepotential expanded in terms of base degree, for example 
\begin{eqnarray}
P^{(0)}_1 &=& \frac{E_4}{48} (31 E_4^3 + 113 E_6^2), \nonumber \\
P^{(0)}_2 &=&  -\frac{1}{221184} [ 4 E_2 E_4^2 (31 E_4^3 + 113 E_6^2)^2 +208991 E_4^7 E_6 
\nonumber \\ && + 755906 E_4^4 E_6^3 + 196319 E_4E_6^5 ]
\end{eqnarray} 

We define the following space 
\begin{eqnarray} \label{W1space}
W_1 &:=& \{  f  ~|~  f =\sum_{n=1}^{\infty} c_n Q^n,  \textrm{ where $c_n$ are quasi-modular forms } 
\nonumber \\ && \textrm{and satisfy equation (\ref{modularanomaly3.61}). $\}$}
\end{eqnarray}
Here we do not specify the modular weights of $c_n$, so unlike the previous cases, the space $W_1$ is not a 
linear space. Rather, we will see it is a graded module over modular forms. Since we know the modular 
weight of $P^{(0)}_n$ is $18n-2$, the modular weights of all $c_n$'s are actually determined by the 
first non-vanishing coefficient according to equation (\ref{modularanomaly3.61}). It is clear 
if $f\in W_1$ and $g$ is a modular form, then $fg\in W_1$. 

The equation  (\ref{modularanomaly3.61}) can be equivalently written as
\begin{eqnarray} \label{generatingP}
\frac{\partial f} {\partial E_2 } =-\frac{1}{12}Q^2 (\partial_Q f ) \partial_Q P^{(0)},
\end{eqnarray}
where we define the generating function $P^{(0)} := \sum_{s=1}^{\infty} P^{(0)}_s Q^s $.  Suppose $f\in W_1$ and $g\in W_1$, then we have $fg\in W_1$ since 
\begin{eqnarray}
\frac{\partial (f g) } {\partial E_2 } =-\frac{1}{12}g Q^2 (\partial_Q f ) \partial_Q P^{(0)} -\frac{1}{12}f Q^2 (\partial_Q f ) \partial_Q P^{(0)}
=-\frac{1}{12} Q^2 (\partial_Q fg ) \partial_Q P^{(0)}.
\end{eqnarray}
Furthermore, If $f\in W_1$, then any algebraic function of $f$ would also satisfies the equation 
(\ref{generatingP}). However, $f$ may not have the structure of $Q$-expansion defined by $W_1$ 
space in (\ref{W1space}), and the coefficients $c_n$ may be only algebraic functions of quasi-modular forms. 

The equation (\ref{generatingP}) can be further written in a simpler way which will be useful later. 
In the left hand side of (\ref{generatingP}), we regard $f$ as function of $f(Q,E_2,E_4,E_6)$ when 
taking partial derivative with respect to $E_2$. Alternatively, we can also use inverse function and 
regard $Q$ as function of $Q(f,E_2,E_4,E_6)$.  In this perspective we have 
$\frac{\partial f} {\partial E_2 }=0$ and   $\frac{\partial Q} {\partial E_2 }\neq 0$. 
We can take partial derivative of $E_2$ on $f =\sum_{n=1}^{\infty} c_n Q^n$, and find 
\begin{eqnarray}
\sum_{n=1}^{\infty} (\partial_{E_2}c_n) Q^n + (\partial_Q f)( \partial_{E_2}Q) =0. 
\end{eqnarray}
So we can write equation (\ref{generatingP}) as
\begin{eqnarray} \label{simpler3.90}
\frac{\partial \log(Q)}{\partial E_2} =\frac{1}{12}Q \partial_Q P^{(0)}. 
\end{eqnarray}

Based on the perturbative calculations (\ref{pert3.60}), we conjecture 

\begin{conj}
\label{conjecture3.86}
$Z\in W_1$. 
\end{conj}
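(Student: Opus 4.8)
The plan is to separate the two requirements contained in the definition (\ref{W1space}) of $W_1$ and establish each. Writing $Z=\sum_{n\ge1}c_n Q^n$ with $Q$ the normalized base coordinate (\ref{defineT}), the statement $Z\in W_1$ means (a) that the $c_n$ are quasi-modular forms (of weight $18(n-1)$), and (b) that they satisfy the modular anomaly recursion (\ref{modularanomaly3.61}), equivalently its generating-function form (\ref{generatingP}). For the specific choice $f=Z$, the inverse-function computation already carried out above (the one producing (\ref{simpler3.90})) shows that (b) is equivalent to the single scalar identity $\partial_{E_2}\log Q=\frac{1}{12}\,Q\,\partial_Q P^{(0)}$, with the $E_2$-derivative taken at fixed $Z$. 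I would therefore reduce the conjecture to proving (a) together with this identity.

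First I would settle (a). Inverting the relation $Z=Q+\mathcal{O}(Q^2)$ of (\ref{pert3.60}) and invoking the already-established Conjecture \ref{conjecture3.75} (proven in \ref{subsecproof3.5}), which gives that $z_1(1-432z_1)$, $1-864z_1$, $\Delta_1\Delta_2/(1-432z_1)^3$ and $X_0$ have $Z$-expansions with modular coefficients, one expresses $Z$ as a power series in $Q$ whose coefficients are built from these quantities together with the $X_0^{2g-2}$-type normalizations of (\ref{base3.48}). Because $Q$ and $Z$ are linked through the fibre data in closed modular form via (\ref{exact2.37}) and (\ref{xi3.63}), the composition and series inversion stay inside the ring $\mathbb{C}[E_2,E_4,E_6]$, so each $c_n$ is forced to be quasi-modular. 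This part is essentially formal once the all-orders versions of the expansions of \ref{subsecproof3.5} are in hand.

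The heart of the argument is the master identity (\ref{simpler3.90}). I would prove it by identifying the formal $E_2$-dependence of $\log Q$ with genus-zero special geometry of the base. Since $q_B=z_2\exp(\xi(z_1)/X_0)$ with $\xi$ given in closed modular form by (\ref{xi3.63}), and $q_E,X_0$ fixed by (\ref{exact2.37}), the only source of formal $E_2$-variation in $\log Q=\frac{3}{2}\log(q_E/\eta^{24})+\log q_B$ (read at fixed $Z$) comes through the fibre mirror map; its variation is governed by the Ramanujan relations (\ref{ramanujan}). The resulting expression should then be recognized as $\frac{1}{12}Q\,\partial_Q P^{(0)}=\frac{1}{12}\sum_s s\,P^{(0)}_s Q^s$ by relating the derivative $Q\partial_Q$ of the base prepotential to the periods via special geometry, in the spirit of Section \ref{fibremodularity} and the $g=0$ specialization of (\ref{anomalyindexP2}),(\ref{anomalyP2}). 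The leading order of this matching is exactly the perturbative check of (\ref{pert3.60}) against $P^{(0)}_1=\frac{E_4}{48}(31E_4^3+113E_6^2)$; the task is to promote it to all orders.

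The main obstacle is precisely this promotion to all orders. The closed forms (\ref{exact2.37}),(\ref{xi3.63}) that drive \ref{subsecproof3.5} hold only at leading order in the base parameter, so controlling $\partial_{E_2}\log Q$ beyond leading order requires the full base mirror map rather than its $z_2\to0$ truncation. Equivalently, one must track the formal $E_2$-bookkeeping carefully, distinguishing the genuine holomorphic dependence of the complex-structure coordinates $z_1,z_2$ from the $E_2$ introduced by rewriting the fibre data as quasi-modular forms, and show that the net variation organizes into $\frac{1}{12}Q\,\partial_Q P^{(0)}$. I expect this to hinge on a generating-function argument that feeds the genus-zero structure of $P^{(0)}$ back into the variation of $Q$, closely analogous to the closure of $W_1$ under multiplication established via (\ref{generatingP}); it is here, rather than in step (a), that the real work lies.
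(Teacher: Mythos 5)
You have correctly reduced the conjecture to the single identity (\ref{simpler3.90}), $\partial_{E_2}\log Q=\tfrac{1}{12}Q\,\partial_Q P^{(0)}$ with the derivative taken at fixed $Z$, and your treatment of part (a) (quasi-modularity of the coefficients) is in the same spirit as the paper's. But for part (b) your proposal stops exactly where the proof has to begin: you state that the identity ``should then be recognized'' as the derivative of the base prepotential and that the promotion to all orders ``is where the real work lies,'' without supplying the mechanism that accomplishes it. That mechanism is the missing idea, and it is not a routine generating-function manipulation of the leading-order closed forms (\ref{exact2.37}), (\ref{xi3.63}).

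What the paper actually does is the following. It introduces the auxiliary modular parameter $\tilde q$ with $J(\tilde q)=1/(z_1(1-432z_1))$ and solves the Picard--Fuchs system to all orders in $z_2$ with coefficients of the controlled form $E_4^{1/4}(a_nE_2+b_n)/(E_4^{3/2}+E_6)^n$, obtaining in particular the deformation $t_1-\tilde t$ in (\ref{expan3.104}) and the transfer formula (\ref{noE23.112}) that converts $\partial_{E_2(\tilde q)}$ into $\partial_{E_2(q_1)}$ plus a weight-dependent correction proportional to $t_1-\tilde t$. Computing $\partial_{E_2}\log Q$ with these tools yields the explicit expression (\ref{logQ3.129}) involving the $\alpha_n$ and the modular forms $A_n$ from the $\log z_2$ period $X_2$. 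The decisive step is then to assemble the combination $w_3$ of (\ref{w33.151}) --- namely $X_0(\tfrac32 t_1^2+t_1t_2)$ plus exactly the terms appearing in $12\,\partial_{E_2}\log Q$ --- and to verify by an inductive check of the recursion (\ref{recursionbeta3.134}) that $w_3$ solves the Picard--Fuchs equations. Since the double-logarithmic solution with that leading behavior is unique and equals $X_0\,\partial_{t_2}\mathcal{F}^{(0)}$ with classical part read off from (\ref{Fclassical3.130}), this forces $12\,\partial_{E_2}\log Q=\partial_{t_2}\mathcal{F}^{(0)}_{inst}=Q\,\partial_QP^{(0)}$. Without this identification of the $E_2$-variation as a period of the compact threefold (or an equivalent all-orders argument), your outline does not close; the passage from the first-order check against $P^{(0)}_1$ to the full identity is precisely the content of the theorem, not a corollary of special geometry ``in the spirit of'' Section \ref{fibremodularity}.
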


If our conjecture is correct, then obviously it is also true that $Z^n\in W_1$ 
for any positive integer $n$. Using the expansion (\ref{expan3.58}) and keeping track of  
the modular weights, we can see that the conjectures  (\ref{conjecture3.75}) (\ref{conjecture3.86}) 
and calculations in this subsection would imply the validity of the main conjecture  (\ref{conjecture3.49}).

\subsection{Proof of Vonjecture \ref{conjecture3.75} and Conjecture \ref{conjecture3.86}.}  \label{subsecproof3.5}

In this subsection we shall prove the conjectures (\ref{conjecture3.75}) and (\ref{conjecture3.86}) for our 
compact model. The proof uses some ingredients in the proof of genus zero modular anomaly equation for  the 
non-compact  half K3 model in \cite{HST, Klemm:2012}, but it is much more complicated for compact model. 

To solve the Picard-Fuchs equations (\ref{PF2.1}), it is convenient to define the variable $\tilde{q}$, which 
is related to $z_1$ by 
\begin{eqnarray} \label{tildeq3.92}
J(\tilde{q}) = \frac{1}{z_1(1-432z_1)}, 
\end{eqnarray}
with the asymptotic behavior $\tilde{q} \sim z_1\sim 0$ in small $z_1$ limit. 
This is the Kahler parameter in local large base limit $z_2\rightarrow 0$, but 
in the compact model, it is different from the physical K\"ahler parameter $q_1$ which 
depends on both $z_1$ and $z_2$.  To be clear in this subsection we will keep the 
arguments $\tilde{q}$ and $q_1$ in the modular functions, and we also define the 
logarithmic flat coordinates $t_1 =\log(q_1)$ and $\tilde{t} = \log(\tilde{q})$. 

We can write the ansatz for the solution to the equations (\ref{PF2.1}) in small $z_2$ power series  expansion 
\begin{eqnarray}
w= \sum_{n=0}^{\infty} c_n(z_1) z_2^n
\end{eqnarray}
It is well known in the local $z_2\rightarrow 0$ limit, there are two linearly 
independent solutions, which can be written  as $E_4(\tilde{q})^{\frac{1}{4}}$ and 
$\log(\tilde{q}) E_4(\tilde{q})^{\frac{1}{4}}$, in terms of the modular parameter 
$\tilde{q}$ defined in (\ref{tildeq3.92}). So the initial function 
$c_0(z_1)$ in the expansion is $E_4(\tilde{q})^{\frac{1}{4}}$ or $\log(\tilde{q}) E_4(\tilde{q})^{\frac{1}{4}}$. 

The first and second PF equations in (\ref{PF2.1}) become 
\begin{eqnarray}
&& 12z_1 (6\theta_1+1) (6\theta_1+5) c_n(z_1) =  \theta_1(\theta_1-3n) c_n(z_1) ,
\nonumber \\
&& c_{n+1}(z_1) = \frac{1}{(n+1)^3} \prod_{i=0}^2 (\theta_1-3n - i) c_n (z_1)
\end{eqnarray}
where $\theta_{z_1}=z_1\partial_{z_1}$. We see that the second equation provides 
a recursion relation to compute the higher order coefficients $c_n(z_1)$. Furthermore, 
one can check that the two equations are consistent, i.e. the $c_{n+1}$ from 
recursion in the second equation also satisfies the first equation for $c_n$ 
with by replacing $n$ with $n+1$. The recursion implies 
\begin{eqnarray}
c_n (z_1) = \frac{1}{(n!)^3} \prod_{k=0}^{3n-1} (\theta_{z_1} -k) c_0(z_1) 
\end{eqnarray}

A particular modularity structure for such recursion was derived in \cite{HST, Klemm:2012}. 
Here we repeat the analysis. From the relation  (\ref{tildeq3.92}) it is easy to compute 
\begin{eqnarray}
1-864z_1 = \frac{E_6(\tilde{q}) }{E_4(\tilde{q})^{\frac{3}{2}} }, ~~~~
\theta_{z_1} = \frac{2E_4(\tilde{q}) }{E_4(\tilde{q})^{\frac{3}{2}}+ E_6(\tilde{q}) } \partial_{\tilde{t}}.
\end{eqnarray}
We can write some derivative formulae 
\begin{eqnarray}
\partial_{\tilde{t}}  \frac{E_4(\tilde{q})^{\frac{3}{2}}}{E_4(\tilde{q})^{\frac{3}{2}}+E_6 (\tilde{q})} &=& \frac{E_4(\tilde{q})^{\frac{1}{2}}}{2}
\frac{E_4(\tilde{q})^{\frac{3}{2}}-E_6(\tilde{q})}{E_4(\tilde{q})^{\frac{3}{2}}+E_6 (\tilde{q}) }, \nonumber \\ 
\partial_{\tilde{t}} \frac{E_2(\tilde{q})}{E_4(\tilde{q})^{\frac{1}{2}}} &=& - \frac{ E_2(\tilde{q})^2E_4(\tilde{q})-2E_2(\tilde{q})
E_6(\tilde{q}) +E_4(\tilde{q})^2 }{12E_4(\tilde{q})^{\frac{3}{2}}},  \nonumber \\
\partial_{\tilde{t}} \frac{E_6(\tilde{q})}{E_4(\tilde{q})^{\frac{3}{2}}} &=& - \frac{ E_4(\tilde{q})^3 -E_6(\tilde{q}) ^2 }{2E_4(\tilde{q})^{\frac{5}{2}}},  \nonumber \\
\partial_{\tilde{t}} E_4(\tilde{q})^{\frac{1}{4}}&=&  \frac{ E_2(\tilde{q})E_4(\tilde{q}) -E_6(\tilde{q})  } {12E_4(\tilde{q})^{\frac{3}{4}}}. 
\label{derivative3.97}
\end{eqnarray}

First we consider the case $c_0(z_1)=E_4(\tilde{q})^{\frac{1}{4}}$. From the derivative  formulae (\ref{derivative3.97}) 
one can deduce inductively that the recursion function 
\begin{eqnarray}
\prod_{k=0}^{n-1} (\theta_{z_1} -k) E_4(\tilde{q})^{\frac{1}{4}}  \sim  \frac{ E_4(\tilde{q})^{\frac{1}{4}} }{(E_4(\tilde{q})^{\frac{3}{2}}+E_6 (\tilde{q}))^{n}} P_{6n} (E_2, E_4, E_6), 
\end{eqnarray}
where $P_{6n} (E_2, E_4, E_6)$ is a quasi-modular form of weight $6n$ and is linear in $E_2(\tilde{q})$. We can write 
\begin{eqnarray} \label{dn3.100}
\prod_{k=0}^{n-1} (\theta_{z_1} -k) E_4(\tilde{q})^{\frac{1}{4}}  =  \frac{  E_4(\tilde{q})^{\frac{1}{4}} }{(E_4(\tilde{q})^{\frac{3}{2}}+E_6 (\tilde{q}))^{n}} (a_n E_2(\tilde{q}) + b_n) , 
\end{eqnarray}
where $a_n$ and $b_n$ are modular forms of weight $6n-2$ and $6n$, without $E_2$ dependence. Some low order formulae are the followings 
\begin{eqnarray}
&& a_0=0, ~~~ b_0=1, ~~~a_1 =\frac{E_4(\tilde{q})}{6}, ~~~ b_1 = -\frac{E_6(\tilde{q})}{6}, \nonumber \\
&& a_2 =- \frac{E_4(\tilde{q})  E_6 (\tilde{q})}{3}, ~~~~ b_2= \frac{5E_4(\tilde{q})^3+  7E_6 (\tilde{q})^2}{36}, \nonumber \\
&& a_3 = \frac{E_4(\tilde{q})  [77 E_4(\tilde{q}) ^3+211 E_6 (\tilde{q})^2] }{216}, ~~~
b_3 = -\frac{E_6(\tilde{q})  [197 E_4(\tilde{q}) ^3+91 E_6 (\tilde{q})^2 ]}{216}. \nonumber
\end{eqnarray}
The recursion relation for the modular forms $a_n$ and $b_n$ are 
\begin{eqnarray} \label{recursion3.100}
a_{n+1} &=& -(2n-\frac{1}{6}) E_6(\tilde{q}) a_n +\frac{1}{6} E_4 (\tilde{q}) b_n + 2E_4(\tilde{q})^{\frac{3n+1}{2}}\partial_{\tilde{t}} \big[ \frac{a_n}{E_4(\tilde{q})^{\frac{3n-1}{2}}} \big], \nonumber \\
b_{n+1} &=& -(2n+\frac{1}{6}) E_6(\tilde{q}) b_n -\frac{1}{6} E_4 (\tilde{q})^2 a_n + 2E_4(\tilde{q})^{\frac{3n+2}{2}}\partial_{\tilde{t}} \big[ \frac{b_n}{E_4(\tilde{q})^{\frac{3n}{2}}} \big]. 
\end{eqnarray}

We then consider the case $c_0(z_1)=\tilde{t} E_4(\tilde{q})^{\frac{1}{4}}$. From the recursion relation (\ref{recursion3.100}) we can check inductively that 
\begin{eqnarray}
\prod_{k=0}^{n-1} (\theta_{z_1} -k) [\tilde{t} E_4(\tilde{q})^{\frac{1}{4}}]  =  \frac{  E_4(\tilde{q})^{\frac{1}{4}} }{(E_4(\tilde{q})^{\frac{3}{2}}+E_6 (\tilde{q}))^{n}} \big[  \tilde{t}  (a_n E_2(\tilde{q}) + b_n) 
+  12 a_n \big]
\end{eqnarray}

So we find the power series solution and one logarithmic $\log(z_1)$ solution to the Picard-Fuchs equations (\ref{PF2.1})
\begin{eqnarray} \label{solPF3.103}
X_0 &=& E_4(\tilde{q}) ^{\frac{1}{4}}\big[ 1+ \sum_{n=1}^{\infty} \frac{a_{3n} E_2(\tilde{q}) +b_{3n}}{2^{3n} (n!)^3 E_4(\tilde{q})^{\frac{9n}{2}} } \frac{z_2^n}{(1-432z_1)^{3n}}  \big],  \nonumber \\
X_1 &=& E_4(\tilde{q}) ^{\frac{1}{4}}\big[ \tilde{t}+ \sum_{n=1}^{\infty} \frac{ \tilde{t}( a_{3n} E_2(\tilde{q}) +b_{3n}) +12a_{3n} }{2^{3n} (n!)^3 E_4(\tilde{q})^{\frac{9n}{2}} } \frac{z_2^n}{(1-432z_1)^{3n}}  \big], 
\end{eqnarray} 
We compute the flat coordinate of the compact model 
\begin{eqnarray} \label{deformation3.103}
t_1&=& \frac{X_1}{X_0}   \\ \nonumber 
&=& \tilde{t} + \big[  \sum_{n=1}^{\infty} \frac{ 12a_{3n} }{2^{3n} (n!)^3 E_4(\tilde{q})^{\frac{9n}{2}} } 
\frac{z_2^n}{(1-432z_1)^{3n}}  \big]/ \big[ 1+ \sum_{n=1}^{\infty} \frac{a_{3n} E_2(\tilde{q}) +b_{3n}}{2^{3n} (n!)^3 E_4(\tilde{q})^{\frac{9n}{2}} } \frac{z_2^n}{(1-432z_1)^{3n}}  \big]. 
\end{eqnarray} 
The expression quantifies the deformation of $t_1$ away from $\tilde{t}$ as asymptotic expansion in small $z_2$, where the coefficients are rational functions of quasi-modular forms of $\tilde{t}$
\begin{eqnarray} \label{expan3.104}
t_1-\tilde{t} = \sum_{n=1}^{\infty} \frac{P_{18n-2}(\tilde{q}) }{E_4(\tilde{q})^{\frac{9n}{2}}}  \frac{z_2^n}{(1-432z_1)^{3n}},
\end{eqnarray}
where $P_{18n-2}(\tilde{q})$ are some quasi-modular forms of weight $18n-2$. We will need the following nice formula 
\begin{eqnarray} \label{nice3.104}
\partial_{E_2(\tilde{q})} ( t_1-\tilde{t})^{-1}  = \frac{1}{12} .
\end{eqnarray}
Here we regard $z_2$ as free variable which is independent of $\tilde{q}$, so that $\partial_{E_2(\tilde{q})} z_2=0$.  
 
Before we proceed further, we shall prove a useful general formula on $E_2$ derivative.  
Suppose $P_k$ is a rational function of quasi-modular forms, with modular weight $k$, then we have the formula 
\begin{eqnarray}  \label{formu3.105}
\partial_{E_2} \partial_t P_k = \partial_t \partial_{E_2}  P_k +\frac{k}{12} P_k,
\end{eqnarray} 
where we use the notation $t=\log(q) =2\pi i \tau$. A similar formula was used in 
\cite{Huang:2012} to derive of holomorphic anomaly equation in the Nekrasov-Shatashvili limit, 
and in \cite{Huang:2013eja}  to studying the relation between holomorphic and modular anomaly 
equations in the Seiberg-Witten theory. The derivation of (\ref{formu3.105}) uses the covariant 
derivative.  It is well known that $E_2$ is not a modular form, but with an an-holomorphic  
shift $\hat{E}_2=  E_2 -\frac{6i}{\pi (\tau-\bar{\tau})}$ it becomes a  modular form. To preserve 
the almost holomorphic structure, we shall use the covariant or Maass derivative 
\begin{eqnarray}
D_\tau = \partial_\tau+ \frac{k}{(\tau-\bar{\tau})},
\end{eqnarray}
where $k$ is the modular weight. The derivative with respect to $E_2$ can be related to the 
anti-holomorphic derivative 
\begin{eqnarray}
\bar{\partial}_{\bar{\tau}} = (\bar{\partial}_{\bar{\tau}} \hat{E}_2) \partial_{\hat{E}_2} 
=\frac{6}{\pi i (\tau-\bar{\tau})^2} \partial_{\hat{E}_2}
\end{eqnarray}
We can compute the derivative 
\begin{eqnarray}
\bar{\partial}_{\bar{\tau}} D_\tau P_k =( \partial_\tau + \frac{k}{ (\tau-\bar{\tau})} ) \bar{\partial}_{\bar{\tau}} 
 P_k + (\bar{\partial}_{\bar{\tau}} \frac{k}{(\tau-\bar{\tau})} ) P_k
\end{eqnarray} 
Taking the holomorphic limit and cancel out the infinitesimal factor $(\tau-\bar{\tau})^{-2}$ we arrive at the formula (\ref{formu3.105}). 

The above formula (\ref{formu3.105}) is valid for any rational function $P_k$ of quasi-modular forms, 
with possible fractional powers, which satisfies the homogeneous relation with modular weight  
\begin{eqnarray} \label{homogeneous3.114}
P_k(\lambda^2 E_2, \lambda^4 E_4, \lambda^6 E_6) = \lambda^k  P_k(E_2,  E_4,  E_6). 
\end{eqnarray}
However it is a little tricky for the logarithmic function, which does not satisfy the above relation 
(\ref{homogeneous3.114}). Naively the logarithmic function has zero modular weight and the 
$E_2$ derivative should commute with $\tau$ derivative. But this is not the case, as we can 
easily compute with the formula  (\ref{formu3.105}),
\begin{eqnarray} \label{logarithmic3.115}
\partial_{E_2} \partial_t \log(P_k) = \partial_t \partial_{E_2}  \log(P_k) +\frac{k}{12}. 
\end{eqnarray} 
We will not need this logarithmic formula in this subsection, but it will be useful later.

We can take derivative $n$ times on a rational function of quasi-modular forms, and use formula (\ref{formu3.105}) to compute 
\begin{eqnarray} \label{formulae3.109}
\partial_{E_2} \partial_t^n P_k = \partial_t^n \partial_{E_2}  P_k +\frac{n(k+n-1)}{12}  \partial_t^{n-1}P_k
\end{eqnarray}

Suppose $P_k(q_1)$ is a rational function with possible fractional powers of quasi-modular forms of $q_1$, 
with modular weight $k$.  We can expand it as a power series of $\frac{z_2}{(1-432)^3}$ with coefficients as 
rational functions of (quasi)-modular forms of $\tilde{q}$, using the expansion (\ref{deformation3.103}) 
\begin{eqnarray} \label{expan3.111}
P_k(q_1) = \sum_{n=0}^{\infty}   \frac{(t_1-\tilde{t})^n}{n!} \partial_{\tilde{t}}^n P_k(\tilde{q}) ,
\end{eqnarray} 

We can compute the derivative of $E_2(\tilde{q})$ and use the two formulae (\ref{nice3.104}, \ref{formulae3.109})
\begin{eqnarray} \label{noE23.112}
\partial_{E_2(\tilde{q})} P_k(q_1)   &=& 
 \sum_{n=0}^{\infty}  \{ \frac{(t_1-\tilde{t})^{n-1}}{(n-1)!}  [\partial_{E_2(\tilde{q})}(t_1-\tilde{t}) ]   
 \partial_{\tilde{t}}^n P_k(\tilde{q})+ \frac{(t_1-\tilde{t})^n}{n!} \partial_{E_2(\tilde{q})} \partial_{\tilde{t}}^n P_k(\tilde{q}) \} \nonumber \\
 &=&  \sum_{n=0}^{\infty}  \{ -\frac{1}{12} \frac{(t_1-\tilde{t})^{n+1}}{(n-1)!}   \partial_{\tilde{t}}^n P_k(\tilde{q})  + 
 \frac{(t_1-\tilde{t})^n}{n!} \big[  \partial_{\tilde{t}}^n \partial_{E_2(\tilde{q})} P_k(\tilde{q})\nonumber \\ && + \frac{n(k+n-1)}{12}  \partial_{\tilde{t}}^{n-1} P_k(\tilde{q}) \big]\}, \nonumber \\
 &=& \partial_{E_2(q_1)} P_k(q_1) + \frac{k}{12} (t_1-\tilde{t}) P_k(q_1) . 
\end{eqnarray}
We see the derivative vanishes when $P_k (q_1)$ has zero modular weight $k=0$ and is modular, i.e. no $E_2$ dependence.

Now we are have the necessary ingredients for proving the conjecture (\ref{conjecture3.75}). 
First we consider the case of $1-864z_1=\frac{E_6(\tilde{q}) }{E_4(\tilde{q})^{\frac{3}{2}}}$. 
We shall expand in terms of the modular parameter $q_1$ of the compact model 
\begin{eqnarray} \label{expan3.118}
\frac{E_6(\tilde{q}) }{E_4(\tilde{q})^{\frac{3}{2}}} = \frac{1}{E_4(q_1)^{\frac{3}{2}}} \big[ E_6(q_1) + \sum_{n=1}^{\infty} \frac{f_n(q_1)}{E_4(q_1)^{\frac{9n}{2}} } \frac{z_2^n}{(1-432z_1)^{3n}}  \big] , 
\end{eqnarray}
We can determine the coefficients $f_n$ recursively by expanding the modular forms of $q_1$ in terms of $\tilde{q}$ 
and compare with the left hand side. It is clear that $f_n(q_1)$ are rational functions of 
quasi-modular forms of $q_1$ of modular weight $18n+6$, with possible negative powers of $E_4$, 
but no fractional powers. To show it has no $E_2$ dependence, we take derivative of $E_2(\tilde{q})$ 
on both sides and use the formula (\ref{noE23.112})
\begin{eqnarray}
0 = \sum_{n=1}^{\infty} \frac{\partial_{E_2(q_1)} f_n(q_1)}{E_4(q_1)^{\frac{9n+3}{2}} } \frac{z_2^n}{(1-432z_1)^{3n}},  
\end{eqnarray}
where we have use the fact that the coefficients have zero modular weight. Since $z_2$ is a free variable, we find $\partial_{E_2(q_1)} f_n(q_1)=0$, i.e.  it is modular. 

We also need to check that there is no negative powers of $E_4(q_1)$ in  $f_n(q_1)$. 
{}From the expansion (\ref{expan3.104}) we see that the coefficients of $\frac{z_2^n}{(1-432z_1)^{3n}}$ 
always have the factor of $E_4(q_1)^{-\frac{9n}{2}}$. There could be also negative 
powers of $E_4$ from the derivative $\partial_{\tilde{t}}$ in  (\ref{expan3.111}) 
since each derivative increases the negative power of $E_4$ by one. However, we note that 
since $a_n$ has modular weight is $6n-2$, it can not be purely a power of 
$E_6(\tilde{q})$ and always has a factor $E_4(\tilde{q})$. So the 
$(t_1-\tilde{t})^n$ factor in (\ref{expan3.111}) contributes a factor $E_4(\tilde{q})^n$ 
which cancel the one from taking derivative. Therefore $f_n(q_1)$ is a modular form. 

Furthermore, since the derivative $\partial_{\tilde{t}}^n [\frac{E_6(\tilde{q}) }{E_4(\tilde{q})^{\frac{3}{2}}}]$ 
always contains a factor of $E_4(\tilde{q})^{3}-E_6(\tilde{q})^{2}$, we see that the the coefficient  
$f_n(q_1)$ always have a factor of $E_4({q_1})^{3}-E_6({q_1})^{2}$, as confirmed by the explicit 
expansion in (\ref{expan3.56}). According to the discussion in equations (\ref{eta3.82}, \ref{eta3.83}), 
we immediately infer the modularity of the expansion coefficients of  $z_1(1-432z_1)$.

Alternatively, we can also expand  $z_1(1-432z_1)=\frac{E_4(\tilde{q})^{3}-E_6(\tilde{q})^{2} }{1728E_4(\tilde{q})^{3}}$ 
in the same way as $1-864z_1$ in (\ref{expan3.118}). From the same reasoning as above we can also see the modularity 
of the expansion (\ref{expan3.54}). We  note that the derivative $\partial_{\tilde{t}}^n [E_4(\tilde{q})^{3}-E_6(\tilde{q})^{2}]$ 
also always contains a factor of $E_4(\tilde{q})^{3}-E_6(\tilde{q})^{2}$, and it is factored out in this case in the 
expansion (\ref{expan3.54}), so that the higher order coefficients do not necessarily contain the factor.  

Finally we consider the case of $X_0$, which is a little trickier than the previous two cases since the 
modular weight is not zero. Again we make the ansatz in terms of the modular parameter $q_1$ of the compact model
\begin{eqnarray} \label{ansatz3.115}
X_0=   E_4(q_1)^{\frac{1}{4}} \big[ 1 + \sum_{n=1}^{\infty} \frac{f_n(q_1)}{E_4(q_1)^{\frac{9n}{2}} } \frac{z_2^n}{(1-432z_1)^{3n}}  \big], 
\end{eqnarray}
and determine $f_n(q_1)$ recursively comparing with the solution (\ref{solPF3.103}) and using the 
deformation (\ref{expan3.104}).  From the same  reasonings above we see $f_n(q_1)$ contains no 
negative power of $E_4(q_1)$ and no fractional powers, so it is a quasi-modular form. We only need to show 
it is $E_2$ independent. We compute the $E_2(\tilde{q})$ derivative in the $X_0$ solution in (\ref{solPF3.103}) 
\begin{eqnarray} \label{result3.117}
\partial_{E_2(\tilde{q})} X_0 &=& E_4(\tilde{q}) ^{\frac{1}{4}}\big[ \sum_{n=1}^{\infty} \frac{a_{3n} }{2^{3n} (n!)^3 E_4(\tilde{q})^{\frac{9n}{2}} } \frac{z_2^n}{(1-432z_1)^{3n}}  \big] \nonumber \\
&=& \frac{t_1-\tilde{t}}{12} X_0 , 
\end{eqnarray}
where in the last step we use (\ref{deformation3.103}). On the other hand, we can also compute the 
$E_2(\tilde{q})$ derivative using the ansatz (\ref{ansatz3.115}) and the formula (\ref{noE23.112}) 
with the modular weight $k=1$, and we get 
\begin{eqnarray} \label{result3.118}
\partial_{E_2(\tilde{q})} X_0 = \frac{t_1-\tilde{t}}{12} X_0 + \sum_{n=1}^{\infty} \frac{\partial_{E_2(q_1)} f_n(q_1)}{E_4(q_1)^{\frac{18n-1}{4}} } \frac{z_2^n}{(1-432z_1)^{3n}} . 
\end{eqnarray}
Comparing the two results (\ref{result3.117}, \ref{result3.118}) we find that $f_n(q_1)$ has no 
$E_2$ dependence. Thus we have proven conjecture (\ref{conjecture3.75}). We see the case of $X_0$ is quite 
interesting. It is quasi-modular when we use the $\tilde{q}$ parameter, but the $E_2$ dependence vanishes and it 
becomes purely modular when we use the physical K\"ahler parameter $q_1$ of the compact model. 

We can also show that the modular forms $f_n(q_1)$ in (\ref{ansatz3.115}) contain a factor 
of $\eta^{24}\sim E_4^3-E_6^2$, as can be seen from the explicit formulae in the expansion 
(\ref{expan3.57}). Since  $f_n(q_1)$ is a modular form, it is sufficient to show that $f_n(q_1)=0$ 
at $q_1=0$ in order to be divisible by $\eta^{24}(q_1)$. We note that the derivative of any rational 
function of modular forms and their fractional powers vanish at $q_1$, as long as  it is not 
singular at $q_1=0$, due to its $q$-series expansion. So the contributions to $f_n(0)$ from 
lower order terms in (\ref{ansatz3.115}) vanish. We also need to evaluate the solution 
(\ref{solPF3.103}) for $X_0$ at $\tilde{q}=0$.  We denote the value of $a_n(\tilde{q}), b_n(\tilde{q})$ 
in the solution at $\tilde{q}=0$ simply as $a_n(0), b_n(0)$. The recursion relation (\ref{recursion3.100}) at $\tilde{q}=0$ is 
\begin{eqnarray} 
a_{n+1}(0) &=& -(2n-\frac{1}{6}) a_n(0) +\frac{1}{6}  b_n(0) ,   \nonumber \\
b_{n+1}(0) &=& -(2n+\frac{1}{6})  b_n(0) -\frac{1}{6}  a_n(0)  .  
\end{eqnarray}
Adding together we find 
\begin{eqnarray}
a_{n+1}(0) + b_{n+1}(0) = -2n[a_n(0)+b_n(0)] ,
\end{eqnarray}
so the sum vanish $a_n(0)+b_n(0)=0$ for $n\geq 1$. The solution is simply $X_0=1$ for $\tilde{q}=0$,  
and does not contribute to higher order terms either. Thus $f_n(q_1)$ vanish at $q_1=0$, i.e.  is divisible by $\eta^{24}(q_1)$.

Next we would like to find the solution to Picard-Fuchs equation with leading $\log(z_2)$ behavior.  We make the ansatz 
\begin{eqnarray} \label{ansatz3.119}
X_2 = X_0 [ \log(z_2) - \frac{3}{2} t_1 - \frac{3}{2} \log(\frac{1-432z_1}{z_1})] + \sum_{n=0}^{\infty} \alpha_n(z_1) z_2^n,
\end{eqnarray}
where $X_0$ and $X_0t_1=X_1$ are the two solutions found previously in (\ref{solPF3.103}). The terms 
$\frac{3}{2} t_1 + \frac{3}{2} \log(\frac{1-432z_1}{z_1})$ comes from (\ref{xi3.63}) which is the 
solution to the $z_2$ zero order  equation with the correct leading asymptotic behavior,  so that we 
can choose the initial function $\alpha_0(z_1)=0$. For convenience, we denote 
\begin{eqnarray} 
d_n = \frac{  E_4(\tilde{q})^{\frac{1}{4}} }{(E_4(\tilde{q})^{\frac{3}{2}}+E_6 (\tilde{q}))^{n}} (a_n E_2(\tilde{q}) + b_n), 
\end{eqnarray}
where $a_n$, $b_n$ are modular forms defined in (\ref{dn3.100}), and satisfy the recursion relation (\ref{recursion3.100}), so that the power series solution 
\begin{eqnarray}
X_0 =  \sum_{n=0}^{\infty} \frac{d_{3n}}{(n!)^3} z_2^n. 
\end{eqnarray}
Again after some calculations, we find recursion relation for $\alpha_n(z_1)$ in (\ref{ansatz3.119}) from the second Picard-Fuchs equation
\begin{eqnarray} \label{recursion3.124}
\alpha_{n+1}(z_1) &=& \frac{1}{(n+1)^3} \prod_{i=0}^2 (\theta_1 -3n-i) \alpha_n(z_1)-\frac{3d_{3n+3}  }{(n+1) (n+1)!^3} -\frac{9E_6(\tilde{q}) d_{3n+2}}{(n+1)!^3(E_4(\tilde{q})^{\frac{3}{2}}+E_6(\tilde{q})) } 
\nonumber \\
&& +\frac{9(E_4(\tilde{q})^3+E_6(\tilde{q})^2) d_{3n+1}}{(n+1)!^3 (E_4(\tilde{q})^{\frac{3}{2}}+E_6(\tilde{q}) )^2}  
-\frac{6E_6(\tilde{q}) (3E_4(\tilde{q})^3+E_6(\tilde{q})^2) d_{3n}}{(n+1)!^3 (E_4(\tilde{q})^{\frac{3}{2}}+E_6(\tilde{q}) )^3}  .
\end{eqnarray} 
{}From the recursion relation and initial condition $\alpha_0=0$, it is clear that the coefficients $\alpha_n(z_1)$ have the following structure 
\begin{eqnarray} \label{definecn3.125}
\alpha_n =\frac{  E_4(\tilde{q})^{\frac{1}{4}} }{(E_4(\tilde{q})^{\frac{3}{2}}+E_6 (\tilde{q}))^{3n}} (A_n(\tilde{q}) E_2(\tilde{q}) + B_n(\tilde{q})),
\end{eqnarray} 
where $A_n(\tilde{q}), B_n(\tilde{q})$ are modular forms of $\tilde{q}$ of weight $18n-2, 18n$. Some low order formulae are 
\begin{eqnarray}
&& A_1 = \frac{E_4 (31E_4^3+113E_6^2) }{72}, ~~~ B_1 = -\frac{E_6 (1297 E_4^3+575 E_6^2) }{72},
\nonumber \\ &&
A_2= -\frac{E_4 E_6}{6912} (233753 E_4^6+812750 E_4^3 E_6^2+252953 E_6^4),
\\ \nonumber  &&
B_2 = \frac{1}{82944} (3717955 E_4^9+59441331 E_4^6 E_6^2+66006225 E_4^3 E_6^4+5867321 E_6^6). 
\end{eqnarray}

The second flat coordinate is then
\begin{eqnarray}
t_2 = \frac{X_2}{X_0} =\log(z_2) - \frac{3}{2} t_1 - \frac{3}{2} \log(\frac{1-432z_1}{z_1}) +\frac{1}{X_0} \sum_{n=1}^{\infty} \alpha_n(z_1) z_2^n,
\end{eqnarray}
where $\alpha_n$ are expressed in (\ref{definecn3.125}) in terms of the modular forms $A_n$ and $B_n$. It is clear we can write a series expansion 
\begin{eqnarray}
\log(Q) =\log(Z) +\sum_{n=1}^{\infty} f_n(q_1) Z^n ,
\end{eqnarray} 
where $Q=\frac{\exp(t_2+\frac{3t_2}{2})}{\eta(q_1)^{36}}$, $Z=\frac{z_2}{(1-432z_1)^3E_4(q_1)^{\frac{9}{2}}}$ are the variables 
defined in (\ref{defineT}, \ref{defineS3.75}). Using the relation between $t_1$ and $\tilde{t}$ as in equation 
(\ref{expan3.104}) we see that $f_n(q_1)$ here must be quasi-modular forms of weight $18n$. Because the coefficients 
of $z_2^n$ have zero modular weight, the derivative of $E_2(\tilde{q})$ and $E_2(q_1)$ are the the same for 
$\log(Q)$ according to the formula (\ref{noE23.112}). In order to prove conjecture (\ref{conjecture3.86}), 
we should confirm  the equation (\ref{simpler3.90}). We compute the $E_2$ derivative 
\begin{eqnarray} \label{logQ3.129}
&& ~~\partial_{E_2(q_1)}   \log(Q) =  \partial_{E_2(\tilde{q})}   \log(Q) \nonumber \\
&& =  -\frac{t_1-\tilde{t}}{12X_0}\sum _{n=1}^{\infty} \alpha_n(z_1) z_2^n +\frac{1}{X_0} \sum _{n=1}^{\infty} 
\frac{  E_4(\tilde{q})^{\frac{1}{4}} A_n(\tilde{q})  }{(E_4(\tilde{q})^{\frac{3}{2}}+E_6 (\tilde{q}))^{3n}} z_2^n, 
\end{eqnarray} 
where we have used (\ref{result3.117}). We shall prove the above equation is equal to the 
$\frac{1}{12}\partial_{t_2} \mathcal{F}^{(0)}_{inst}(t_1,t_2)$, where $\mathcal{F}^{(0)}_{inst}$ 
is the instanton part of the the prepotential. 

The classical contribution in the prepotential for our model is 
\begin{eqnarray} \label{Fclassical3.130}
\mathcal{F}^{(0)}_{classical} = \frac{3t_1^3}{2} + \frac{3}{2}t_1^2t_2 +\frac{1}{2} t_1t_2^2
\end{eqnarray}
The prepotential is determined by the fact that $X_0\partial_{t_i}\mathcal{F}^{(0)}$ are the double 
logarithmic solutions of the Picard-Fuchs (PF) equations (\ref{PF2.1}). In order to confirm  
the equation (\ref{simpler3.90}), we shall show the following is a solution to the PF equations 
\begin{eqnarray} \label{w33.151}
w_3&=& X_0 (\frac{3}{2}t_1^2+ t_1t_2)  - (t_1-\tilde{t})\sum _{n=1}^{\infty} \alpha_n(z_1) z_2^n +12\sum _{n=1}^{\infty} 
\frac{  E_4(\tilde{q})^{\frac{1}{4}} A_n(\tilde{q})  }{(E_4(\tilde{q})^{\frac{3}{2}}+E_6 (\tilde{q}))^{3n}} z_2^n
\\ \nonumber 
&=& X_1[\log(z_2)-\frac{3}{2}\log(\frac{1-432z_1}{z_1})]+ \tilde{t}\sum _{n=1}^{\infty} \alpha_n(z_1) z_2^n 
+12\sum _{n=1}^{\infty} \frac{  E_4(\tilde{q})^{\frac{1}{4}} A_n(\tilde{q})  }{(E_4(\tilde{q})^{\frac{3}{2}}+E_6 (\tilde{q}))^{3n}} z_2^n,
\end{eqnarray}
where we have use the solution for $X_2$ in (\ref{ansatz3.119}). 

After some tedious calculations, we check that (\ref{w33.151}) is indeed a solution. We can provide some details in the followings. 
At zero order $z_2$, we need to check the equation from the first PF operator  
\begin{eqnarray}
[\theta_1^2 -12z_1 (6\theta_1+1)(6\theta_1+5)] [-\frac{3}{2} \tilde{t} E_4(\tilde{q})^{\frac{1}{4}} \log(\frac{1-432z_1}{z_1})]
=3\theta_1[ \tilde{t} E_4(\tilde{q})^{\frac{1}{4}} ]. 
\end{eqnarray}
Assuming the two Picard-Fuchs equations are consistent, we only need to further check the second PF equation which provide recursion 
relation for the coefficients of $z_2^n$. We denote the coefficient of $z_2^n$ in the solution (\ref{w33.151})  as
\begin{eqnarray} \label{beta3.133}
\beta_n = \tilde{t} \alpha_n + \frac{ 12 E_4(\tilde{q})^{\frac{1}{4}} A_n(\tilde{q})  }{(E_4(\tilde{q})^{\frac{3}{2}}+E_6 (\tilde{q}))^{3n}} . 
\end{eqnarray}
We see the similar structure for the ansatz of $X_2$ in (\ref{ansatz3.119}). From the same calculations we derive the recursion relation for 
$\beta_n(z_1)$ from the second Picard-Fuchs equation, similar to (\ref{recursion3.124}), 
\begin{eqnarray} \label{recursionbeta3.134}
\beta_{n+1}(z_1) &=& \frac{1}{(n+1)^3} \prod_{i=0}^2 (\theta_1 -3n-i) \beta_n(z_1)-\frac{3e_{3n+3}  }{(n+1) (n+1)!^3} -
\frac{9E_6(\tilde{q}) e_{3n+2}}{(n+1)!^3(E_4(\tilde{q})^{\frac{3}{2}}+E_6(\tilde{q})) } \nonumber \\
&& +\frac{9(E_4(\tilde{q})^3+E_6(\tilde{q})^2) e_{3n+1}}{(n+1)!^3 (E_4(\tilde{q})^{\frac{3}{2}}+E_6(\tilde{q}) )^2}  
-\frac{6E_6(\tilde{q}) (3E_4(\tilde{q})^3+E_6(\tilde{q})^2) e_{3n}}{(n+1)!^3 (E_4(\tilde{q})^{\frac{3}{2}}+E_6(\tilde{q}) )^3}  ,
\end{eqnarray}
where we denote
\begin{eqnarray}
e_n = \frac{  E_4(\tilde{q})^{\frac{1}{4}} }{(E_4(\tilde{q})^{\frac{3}{2}}+E_6 (\tilde{q}))^{n}} [\tilde{t} (a_n E_2(\tilde{q}) + b_n) +12a_n], 
\end{eqnarray}
so that the solution $X_1 = \sum_{n=0}^{\infty} \frac{e_{3n}}{n!^3} z_2^n$.  We further denote 
\begin{eqnarray}
\frac{E_4(\tilde{q})^{\frac{1}{4}}  [A_n^{(k)} E_2(\tilde{q}) +B_n^{(k)}] } {(E_4(\tilde{q})^{\frac{3}{2}}+E_6 (\tilde{q}))^{3n+k+1}} = \prod_{i=0}^k (\theta_1 -3n-i) \alpha_n , 
\end{eqnarray} 
where $A_n^{(k)}$ and $B_n^{(k)}$ are modular forms of $\tilde{q}$ of weight $6(3n+k+1)-2$ and $6(3n+k+1)$. Then similar to the case of solution $X_1$, one can show inductively 
\begin{eqnarray}
\frac{E_4(\tilde{q})^{\frac{1}{4}}  [\tilde{t}(A_n^{(k)} E_2(\tilde{q}) +B_n^{(k)} )+ 12 A_n^{(k)} ]} {(E_4(\tilde{q})^{\frac{3}{2}}+E_6 (\tilde{q}))^{3n+k+1}} = 
\prod_{i=0}^k (\theta_1 -3n-i) (\tilde{t} \alpha_n + \frac{ 12 E_4(\tilde{q})^{\frac{1}{4}} A_n(\tilde{q})  }{(E_4(\tilde{q})^{\frac{3}{2}}+E_6 (\tilde{q}))^{3n}} ).  \nonumber 
\end{eqnarray} 
>From the recursion relation for $\alpha_n$ in  (\ref{recursion3.124}), we can write the recursion separately for $A_n$ and $B_n$ as 
\begin{eqnarray}
A_{n+1} &=& \frac{A_n^{(2)}}{(n+1)^3} -\frac{3a_{3n+3}  }{(n+1) (n+1)!^3} -\frac{9E_6(\tilde{q}) a_{3n+2}}{(n+1)!^3 }  +\frac{9(E_4(\tilde{q})^3+E_6(\tilde{q})^2) a_{3n+1}}{(n+1)!^3 }  \nonumber \\
&&
-\frac{6E_6(\tilde{q}) (3E_4(\tilde{q})^3+E_6(\tilde{q})^2) a_{3n}}{(n+1)!^3 }  , \nonumber \\
B_{n+1} &=& \frac{B_n^{(2)}}{(n+1)^3} -\frac{3b_{3n+3}  }{(n+1) (n+1)!^3} -\frac{9E_6(\tilde{q}) b_{3n+2}}{(n+1)!^3 }  +\frac{9(E_4(\tilde{q})^3+E_6(\tilde{q})^2) b_{3n+1}}{(n+1)!^3 }  \nonumber \\
&&
-\frac{6E_6(\tilde{q}) (3E_4(\tilde{q})^3+E_6(\tilde{q})^2) b_{3n}}{(n+1)!^3 }  .
\end{eqnarray}
Now it is straightforward to check that $\beta_n$ in (\ref{beta3.133}) satisfy the recursion relation 
(\ref{recursionbeta3.134}). This completes the proof of conjecture (\ref{conjecture3.86}).

\section{Expansion around a point on the conifold divisor}
\label{conifoldgap} 
We now study the expansion of the higher genus topological string amplitudes around a point on the 
conifold divisor in terms of the local flat coordinates, and use the gap condition to fix the holomorphic 
ambiguity. Since the involution symmetry exchanges the two conifold divisors, it is sufficient to 
consider the simpler one $\Delta_2 =1+27 z_2$. The simplest point to consider is the intersection 
point of the divisor $z_1=0$ and $\Delta_2 =1+27 z_2$. This is a normal intersection, 
and we can choose the local coordinates as 
\begin{eqnarray} \label{coordinate3.86}
(z_{c1}, z_{c2} ) = (z_1, z_2+\frac{1}{27})  
\end{eqnarray} 

We solve the PF equations (\ref{PF2.1}) perturbatively around this point. There should be 6 linearly independent 
solutions. We only need 3 of them with at most one logarithmic leading term in order to compute the flat 
coordinates. The three solutions are two power series solutions $X_0, X_1$ and one logarithmic solution $X_2$ as the followings 
\begin{eqnarray}
X_0 &=&  [1 + 60 z_{c1} + 13860 z_{c1}^2 + \frac{9529520 z_{c1}^3}{3} + 
 148728580 z_{c1}^4 - 569333004240 z_{c1}^5 \nonumber \\ && 
 - \frac{5035405372197200 z_{c1}^6}{9}  + \mathcal{O}(z_{c1}^7) ] + [24504480 z_{c1}^3 + 32125373280 z_{c1}^4  \nonumber \\ && +   27949074753600 z_{c1}^5   + 19093358660376000 z_{c1}^6+ \mathcal{O}(z_{c1}^7) ] z_{c2} \nonumber \\ &&  + 
 [15162373053828000 z_{c1}^6 + \mathcal{O}(z_{c1}^7) ] z_{c2}^2 + \mathcal{O}(z_{c2}^3)  \\
 X_1 &=& z_{c2} + (\frac{33}{2} + 270 z_{c1}) z_{c2}^2 + (327 + 7020 z_{c1} + 374220 z_{c1}^2) z_{c2}^3   
  + \mathcal{O}(z_{c2}^4),  \\ 
  X_2 &=& X_0\log(z_{c1}) + [372 z_{c1} + 98442 z_{c1}^2 + \frac{224976472 z_{c1}^3}{9} +  \frac{7098994823 z_{c1}^4}{3}  + \mathcal{O}(z_{c1}^5)] \nonumber \\ && 
  + [-540 z_{c1} - 124740 z_{c1}^2 + 132742656 z_{c1}^3 + 
    220660710516 z_{c1}^4+ \mathcal{O}(z_{c1}^5)]  z_{c2}  \nonumber \\ &&  
    + [27 - 4860 z_{c1} - 1683990 z_{c1}^2 - 606485880 z_{c1}^3 - 
    307198881990 z_{c1}^4+ \mathcal{O}(z_{c1}^5)] z_{c2}^2 \nonumber \\ &&  
  + \mathcal{O}(z_{c2}^3)
\end{eqnarray} 

The local flat coordinates are computed as the ratio of the solutions $t_{c1}= \frac{X_1}{X_0}$ and $t_{c2} = \frac{X_2}{X_0}$. 
For the power series solution, the flat coordinate goes like $t_{c1} \sim z_{c2} =z_2+\frac{1}{27}$, which is the expected asymptotic 
behavior of near a conifold point. For the logarithmic solution, the flat coordinate goes like $ t_{c2} \sim \log(z_{c1})$, 
which is similar to the behavior near large volume point, and we should use the exponential of the flat coordinate as the expansion 
parameter. So we define the expansion parameters in terms of flat coordinates 
\begin{eqnarray}
t_c:= t_{c1} = \frac{X_1}{X_0} , ~~~ q_c:= \exp(t_{c2}).
\end{eqnarray}

We shall compute the asymptotic expansion of the propagators $S^{i j}, S^{i}$, $S$, 
where we have written the propagators in $z_i$ coordinate and for simplicity we denote 
$S^{i j} = S^{z_iz_j}$ and  $S^{i}=S^{z_i}$. Here the propagators should simply 
transform as tensors. In the case of involution symmetry, there is a minus sign and 
shifts for $S^i$ and $S$ because the involution map the point to a different branch 
of moduli space. But this should not happen locally. The three point function also 
transforms as tensor without minus sign. The equations (\ref{propa2.6}) are 
coordinate invariant if we take the holomorphic ambiguities 
$h^{jk}_i, h^j_i, h_i, h_{ij}$ to transform as tensors, while the 
Christoffel symbol $\Gamma^k_{ij}$ and the holomorphic ambiguity 
$s^k_{ij}$ transforms with a shift according to the well known 
transformation rule of Christoffel symbol as in (\ref{Christoffeltrans}). 

We can compute the asymptotic expansion of the Christoffel symbol 
$\Gamma^{z_{ck}}_{z_{ci}z_{cj}}$ in the local coordinates (\ref{coordinate3.86}) 
by the well known formula from special geometry $\Gamma^{z_{ck}}_{z_{ci}z_{cj}} = 
\frac{\partial z_{ck}} {\partial t_{cl}} \frac{\partial^2 t_{cl}}{\partial z_{ci} \partial z_{cj}} $ 
in the holomorphic limit, and compute the propagators  $S^{z_{ci}z_{c j}}, S^{z_{ci}}$, 
$S$ according to (\ref{Christoffeltrans}). Here the transformation between the coordinates 
$z_{ci}$ and $z_i$ are particularly simple because the Jacobian transformation matrix 
$\frac{\partial z_{ci}}{\partial z_j}$ is an identity matrix. So the propagators 
$S^{i j}, S^{i}$, $S$ and holomorphic ambiguities $h^{jk}_i, h^j_i, h_i, h_{ij}, s^k_{ij}$ 
are actually invariant under the coordinate transformation (\ref{coordinate3.86}). 

We find that after including the well known scaling factor $X_0^{2g-2} $, the asymptotic 
expansion of the higher genus topological string amplitudes with the correct holomorphic 
ambiguity satisfy the gap condition near the point $(z_1,z_2)=(0,-\frac{1}{27})$. Specifically, we find 
\begin{eqnarray} \label{gap3.91} 
X_0^{2g-2} \mathcal{F}^{(g)} =  \frac{B_{2g}}{4g(g-1)3^{5(g-1)} } \frac{1}{t_c^{2(g-1)}} + \sum_{n=0}^\infty f_n(q_c) t_c^n, 
\end{eqnarray}
where $f_n(q_c)$ are some power series of $q_c$. 

Now we can study how the gap condition (\ref{gap3.91}) fixes holomorphic ambiguity 
at higher genus. First as in previous situations, suppose we have found a particular 
holomorphic ambiguity at genus $g$ such that the gap condition (\ref{gap3.91}) is satisfied. 
Then we may add an additional piece of rational function of $z_1, z_2$ to  $\mathcal{F}^{(g)}$ 
such that the gap condition (\ref{gap3.91}) is not affected. Since near the point $(z_1,z_2)=(0,-\frac{1}{27})$, 
we have the asymptotic behavior $X_0\sim 1$ and $t_c\sim z_{c2} \sim \Delta_2$, this additional 
rational function should have no pole at the conifold divisor  $\Delta_2$ in order to be 
regular at $t_c\sim 0$. Together with the involution symmetry as defined by the spaces in 
(\ref{Vspaces}), we consider the following linear space of holomorphic ambiguity
\begin{eqnarray} 
X^{(g,m)}  &:=& \{  f  ~|~  f \in \sum _{n=0}^{\infty} \oplus V_2^{(g,m,n)}  \textrm{ and has no pole at  $\Delta_2$  \}. }
\end{eqnarray}
In particular, in the case of $m=[\frac{19}{3}(g-1)]$, according to the proposition (\ref{proposition2.37}), 
the dimension of $X^{(g,[\frac{19}{3}(g-1)])}$ is exactly the remaining number of unknown constants in 
the genus $g$ holomorphic ambiguity after we impose both the fiber modularity and the gap conditions. 

Clearly if $f\in X^{(g,m)}$, then its involution transformation $\tilde{f} = (-1)^{g-1} f \in X^{(g,m)}$ as well. 
Since the involution transformation symmetry exchanges the two conifold divisors, we infer that $f$ must have no 
pole at $\Delta_1$ either, so it is actually a polynomial of $z_1, z_2$. Canceling the pole $\Delta_1^{2g-2}$ 
reduces the degree of $z_1$ in the numerator by $6(g-1)$. So the space can be determined as
\begin{eqnarray} \label{holoX3.94}
X^{(g,m)} = \left\{
\begin{array}{cl}
\sum_{n=0}^{\infty} \oplus (V_-^{(1,m-6(g-1), 2n)} \oplus V_+^{(1,m-6(g-1), 2n+1)} )   ,    & ~~  \textrm{if  $g$ is even}    ;   \\
\sum_{n=0}^{\infty} \oplus (V_+^{(1,m-6(g-1), 2n)} \oplus V_-^{(1,m-6(g-1), 2n+1)} )   ,    & ~~  \textrm{if  $g$ is odd}   .   
\end{array}    
\right.
\end{eqnarray} 
It is now easy to compute the dimension of $X^{(g,m)}$ with the formulae (\ref{V2dim2.28}). We can consider in particular the case of $m=[\frac{19}{3}(g-1)]$, and find  the number of remaining unknown constants at low genus  $\textrm{Dim}(X^{(g,[\frac{19}{3}(g-1)])}) =0, 1, 1, 1, 1, 2, 1, 2, 3, 2, 3$ for $g=2,3, \cdots, 12$ respectively. 

We can also estimate the asymptotic behavior of $\textrm{Dim}(X^{(g,[\frac{19}{3}(g-1)])})$ at large genus. Ignoring sub-leading contributions, we find 
\begin{eqnarray}
\textrm{Dim}(X^{(g,[\frac{19}{3}(g-1)])})  =\sum_{n=0}^{\frac{g-1}{9}} (\frac{g-1}{3}-3n) \sim \frac{g^2}{54} .
\end{eqnarray}

Since the elements of the space $X^{(g,m)}$ are regular at both conifold divisors $\Delta_1, \Delta_2$, 
they would not affect the gap conditions at any other points on the conifold divisors. For example, 
studying the expansion of topological string amplitudes around the intersection of 
$\Delta_1$ and $\Delta_2$ would not give more boundary conditions to fix the holomorphic 
ambiguity.  It is thus sufficient to consider  the gap condition around  only one point $(z_1,z_2)=(0,-\frac{1}{27})$ here. 

At low genus we can fix the remaining holomorphic ambiguity (\ref{holoX3.94}) by the boundary conditions 
at large volume point $(z_1,z_2)=(0,0)$, which include the known higher genus constant map contributions 
and some vanishing property of GV invariants. For example, the GV invariants vanish $n^g_{d_E,d_B}=0$ for 
base degree $d_B=0$ and $g\geq 2$ \cite{Alim:2012}. We see that only the holomorphic ambiguities with zero 
$z_2$ degree  in the space (\ref{holoX3.94}) contribute to the zero base degree GV invariants. 
So this property alone can eliminate the subspace $V_{\pm}^{(1, [\frac{g-1}{3}], 0)}$  in the 
remaining holomorphic ambiguities, which are now actually completely fixed up to genus $g\leq 9$.  
 
We consider also the local limit. The topological string amplitudes on local 
$\mathbb{P}^2$ model which has only the base class are solvable to arbitrary high genus.  
So we known the zero fiber degree GV invariants  $n^g_{d_E ,d_B}$ for $d_E=0$ from local 
model calculations, which may fix some more  holomorphic ambiguity for the compact model. 
However, at a closer look we find that the local limit actually does not help. It is easy 
to check that except one element which is the constant in  $V_{+}^{(1, m-6(g-1), 0)}$ 
or $1-864z_1$ in $V_{-}^{(1, m-6(g-1), 0)}$, all other elements in  $X^{(g,m)}$ are polynomially 
divisible by $z_1$, so they do not affect the zero fiber degree GV invariants. So the local limit 
only fix the constant or $1-864z_1$ in  $X^{(g,m)}$, which is already fixed by the 
known constant map contribution in Gromov-Witten theory. 
 
We can discuss a little more details on fixing the holomorphic ambiguity with GV invariants. 
{}From the definition of the spaces $V_{\pm}$ in (\ref{basisplus2.25}, \ref{basisplus2.26}), 
we find that the elements in   $V_{\pm}^{(1, m-6(g-1), n)}$ are polynomially 
divisible by $z_1^{3n}$, so they only affect the GV invariants $n^g_{d_E ,d_B}$ 
with fiber degree $d_E\geq 3n$. So in order to fix the elements in $V_{\pm}^{(1, m-6(g-1), n)}$ 
we need to use the GV invariants  $n^g_{d_E ,d_B}$ with fiber degree $d_E\geq 3n$ and $d_B\geq n$. 

These B-model considerations are combined with the use of weak Jacobi forms to fix the topological string amplitudes. Suppose we have the exact formulas up to a base degree $d_B$ which are valid for all genus and fiber class, then in the followings we shall show that we have sufficient boundary conditions to fix the B-model formula up to any genus no bigger than $9(d_B+1)$. To see this, we further define a vector space 
\begin{eqnarray} \label{spaceY.E.10}
Y^{(g,d_B)} = \sum_{n=d_B+1}^{\infty} \oplus V_{(-1)^{g+n+1}}^{(1,[\frac{g-1}{3}],n)},
\end{eqnarray}
where we denote the $\pm$  subscript universally as $(-1)^{g+n+1}$ depending on the sign. Here we have also set $m=[\frac{19(g-1)}{3}]$ in (\ref{holoX3.94}). The space $Y^{(g,d_B)}$ is exactly the space of remaining holomorphic ambiguities at genus $g$, after we impose the boundary conditions at the conifold and orbifold loci, and use the exact formulas for base degrees up to $d_B$.  Now we consider $g= 9(d_B+1)$, then according to (\ref{basisplus2.25}, \ref{basisplus2.26}), the dimension of the first linear space in the direct sum (\ref{spaceY.E.10}) is 
\begin{eqnarray}
\textrm{Dim}(V_{-}^{(1, 3d_B+2, d_B+1)}) =0.  
\end{eqnarray}
The dimensions of the other spaces in (\ref{spaceY.E.10}) are smaller and also vanish, so we have $\textrm{Dim}(Y^{(9(d_B+1),d_B)})=0$, i.e. all holomorphic ambiguities can be fixed at genus  $9(d_B+1)$. On the other hand, consider one genus higher at $g= 9(d_B+1)+1$, then the dimension of the first linear space is computed similarly 
\begin{eqnarray}
\textrm{Dim}(V_{+}^{(1, 3d_B+3, d_B+1)}) =1 >0.  
\end{eqnarray}
so we have $\textrm{Dim}(Y^{(9(d_B+1)+1,d_B)})\geq \textrm{Dim}(V_{+}^{(1, 3d_B+3, d_B+1)}) >0$, i.e. there are some unfixed holomorphic ambiguities.

\addcontentsline{toc}{section}{References}

\end{document}